\numberwithin{equation}{section}
\newcommand{\PP}{\mathbb{P}}
\newcommand{\abs}[1]{\lvert#1\rvert}
\renewcommand{\le}{\leq}
\renewcommand{\ge}{\geq}
\newcommand{\tp}{\mathsf{T}}
\newcommand{\cN}{\mathcal{N}}
\newcommand{\lcst}{\theta} %replace \theta_{-}
\newcommand{\ucst}{\omega}  %replace \theta_{+}
\newcommand{\Ndim}{n} %replace N by n
\newcommand{\I}{\Omega}
\newcommand{\E}{\mathbb E}
\newcommand{\R}{{\mathbb R}}
\newcommand{\N}{{\mathbb N}}
\renewcommand{\eqref}[1]{(\ref{#1})}
\newtheorem{prop}{Proposition}[section]
\newtheorem{lem}[prop]{Lemma}
\newtheorem{coro}[prop]{Corollary}
\newtheorem{theo}[prop]{Theorem}
\date{}
\begin{document}
\bibliographystyle{plain}

\title{Robustness Properties of Dimensionality Reduction with  Gaussian Random Matrices }

\thanks{Research of Bin Han was supported by
the Natural Sciences and Engineering Research Council of Canada (NSERC Canada Grant No. 05865).
Research of Zhiqiang Xu was supported  by NSFC grant (11171336, 11422113,11021101, 11331012) and by National Basic Research Program of China (973 Program 2015CB856000).
}

\author{Bin Han}
\address{Department of Mathematical and Statistical Sciences,
University of Alberta, Edmonton, Alberta T6G 2G1, Canada,\qquad bhan@ualberta.ca}

\author{Zhiqiang Xu}
\address{LSEC, Inst.~Comp.~Math., Academy of
Mathematics and System Science,  Chinese Academy of Sciences, Beijing 100091, China, \qquad xuzq@lsec.cc.ac.cn}

\keywords{Gaussian random matrices, sparse approximation, arbitrary erasure, robustness, almost norm preservation,
restricted isometry property with corruption, robust Johnson-Lindenstrauss lemma, strong restricted isometry property}

\subjclass[2010]{41A99, 60B20, 94A12, 94A20} \maketitle

\maketitle

\maketitle

\begin{abstract}
In this paper we study the robustness properties of dimensionality reduction with Gaussian random matrices having arbitrarily erased rows. We first study the robustness property against erasure for the almost norm preservation property of Gaussian random matrices
by obtaining the optimal estimate of the erasure ratio for a small given norm distortion rate. As a consequence, we establish the robustness property of Johnson-Lindenstrauss lemma and the robustness property of restricted isometry property with corruption for Gaussian random matrices.
Secondly, we obtain a sharp estimate for the optimal lower and upper bounds of norm distortion rates of Gaussian random matrices under a given erasure ratio. This allows us to establish the strong restricted isometry property with the almost optimal RIP constants,
which plays a central role in the study of phaseless compressed sensing.
\end{abstract}

\section{Introduction and Motivations}

In this paper we are interested in investigating various robustness properties
of dimensionality reduction with
Gaussian random matrices having arbitrarily erased rows. Then we shall use the results to study the robustness properties of the Johnson--Lindenstrauss lemma and restricted isometry property.

Throughout the paper, $A=(a_{j,k})_{1\le j\le m, 1\le k\le \Ndim}\in \R^{m\times \Ndim}$ will be a Gaussian random matrix such that each entry $a_{j,k}$ is an independent identically distributed (i.i.d.) random variable under the standard Gaussian/normal distribution $\cN(0,1)$ with zero mean and unit standard deviation. For $T\subseteq \{1,\ldots,m\}$, we shall adopt the notation $A_T \in \R^{\abs{T}\times \Ndim}$ to denote
the $\abs{T}\times \Ndim$ sub-matrix of $A$ by keeping the {\em rows} of $A$ with the row indices from $T$, where $|T|$ is the cardinality of the set $T$.
Let $x_0\in \R^{\Ndim}$ be a fixed vector with $\|x_0\|=1$, where $\|x_0\|$ is the Euclidean norm of the vector $x_0$. For $\epsilon>0$ and $0\le \beta<1$, we define
\begin{equation}
\I_{\epsilon,\beta}:=\I_{\epsilon,\beta}(A,x_0):=\left\{ \left|\tfrac{1}{\abs{T}}\|A_T  x_0\|^2-1\right| \le \epsilon\; \text{for all} \; T\subseteq \{1,\ldots,m\}\; \mbox{satisfying}\; |T^c|\le \beta m \right\},\label{I:epsbeta}
\end{equation}
where $T^c:=\{1,\ldots,m\}\backslash T$. For every fixed $\epsilon>0$, it follows from the definition in \eqref{I:epsbeta} that $\PP(\I_{\epsilon,\beta})$ is a decreasing function in terms of $\beta$, where the probability is taken over the Gaussian random matrix $A$.

It is well known in the literature by standard tail-bounds for the chi-squared distribution
(e.g., see \cite[Lemma~4.1]{A:jcss:2003})
that
\begin{equation}\label{eq:epsilon:0}
\begin{split}
&\PP\{\tfrac{1}{m}\|Ax_0\|^2>1+\epsilon\}<e^{-(\epsilon^2/4-\epsilon^3/6)m},\qquad \\ &\PP\{\tfrac{1}{m}\|Ax_0\|^2<1-\epsilon\}<e^{-(\epsilon^2/4-\epsilon^3/6)m},
\end{split}
\qquad \forall\, m\in \N,  0<\epsilon<1.
\end{equation}
Consequently, with high probability, a normalized Gaussian random matrix $\frac{1}{\sqrt{m}} A$ has the following almost norm preservation property:
\begin{equation}\label{grm:normpreservation}
\PP(\I_{\epsilon,0})=\PP\left \{\left|\tfrac{1}{m}\|Ax_0\|^2-1\right|\le \epsilon\right\}
\ge 1-2e^{-(\epsilon^2/4-\epsilon^3/6)m},\qquad \forall\, m\in \N, 0<\epsilon<1.
\end{equation}
The inequality in \eqref{grm:normpreservation} also implies the Johnson-Lindenstrauss lemma (see \cite{JL,A:jcss:2003}).
For $N$ points $p_1,\ldots, p_N\in \R^{\Ndim}$ and for $0<\epsilon<1$,
the Johnson-Lindenstrauss lemma says that for $m=O(\tfrac{\ln N}{\epsilon^2 })$, there exists a projection matrix $A\in \R^{m\times \Ndim}$ such that the following almost norm preservation property holds:
\begin{equation}\label{JLlemma}
(1-\epsilon)\|p_j-p_k\|^2 \le \| Ap_j-A p_k\|^2\le (1+\epsilon)\|p_j-p_k\|^2, \qquad \forall\; 1\le j,k\le N.
\end{equation}
To establish the above almost norm preservation property in \eqref{JLlemma}, the projection matrix $A$ is often taken to be a random matrix so that the almost norm preservation property in \eqref{grm:normpreservation} holds with high probability. The Johnson-Lindenstrauss lemma is a fundamental technique to reduce the dimensionality of the data and has many applications in information theory, machine learning and algorithms (c.f. \cite{JLSur} and references therein).

In the compressed sensing literature, the restricted isometry property (RIP) matrix is also related to \eqref{grm:normpreservation}.
For $x\in \R^\Ndim$ and $s\in \N$, we say that $x$ is \emph{$s$-sparse} if $x$ has no more than $s$ nonzero entries. Under a measurement matrix $A\in \R^{m\times \Ndim}$,
we have  $y:=(y_1,\ldots, y_m)^\tp:=Ax$ with $m$ measurements $y_1,\ldots, y_m$. To successfully recover the unknown sparse signal $x$ from the measurement vector $y$, it is important for the matrix $A$ to satisfy the \emph{restricted isometry property} (RIP) with a small positive RIP constant $0<\epsilon_s<1$:
\begin{equation}\label{rip:const}
(1-\epsilon_s)\|v\|^2 \le \|A v\|^2\le (1+\epsilon_s)\|v\|^2, \qquad \mbox{for all $s$-sparse vectors}\; v\in \R^{\Ndim}.
\end{equation}
The above restricted isometry property with a small positive RIP constant $\epsilon_s$
is often established by considering $A$ to be a random matrix such as a normalized Gaussian random matrix so that the almost norm preservation property in \eqref{grm:normpreservation} holds with high probability for $\epsilon=\epsilon_s$ (see \cite{BDDW:ca:2008,KW:SMA:2011}).

When $\beta>0$, we suppose that at most $\beta m$ rows of the Gaussian random matrix $A$ are arbitrarily erased.
It is of interest in both theory and application to study how large is the erasure ratio $\beta$ so that a normalized Gaussian random matrix $\frac{1}{\sqrt{m}} A$ with any arbitrarily erased $\beta m$ rows still has the almost norm preservation property with high probability.
Particularly, we are interested in  the following two problems:

\begin{enumerate}[{\bf Problem } 1 :]
\item Give $0<\epsilon<1$ and $0<\alpha<1$, what is the maximum $\beta$ so that
\[
\PP(\I_{\epsilon,\beta}) \ge 1-3 e^{-\alpha (\epsilon^2/4-\epsilon^3/6) m}, \quad \text{for all}\quad m\in \N.
\]
\item Given $0<\beta < 1$ and $\alpha>0$, what is the minimum $\epsilon$ so that
\[
\PP(\I_{\epsilon,\beta}) \geq 1-2\exp(-\alpha m), \quad \text{for all}\quad m\in \N.
\]
%Here $c$ is a positive constant.
\end{enumerate}

Let us briefly explain our motivation for considering $\PP(\I_{\epsilon,\beta})$ with $\beta>0$ in the setting of Johnson-Lindenstrauss lemma and of compressed sensing.
In Johnson-Lindenstrauss lemma, note that each projected vector $Ap_j$ has $m$ entries.
The projected vectors are often transmitted through network by $m$ parallel channels, that is, each entry of $Ap_j$ is transmitted through an independent channel in a parallel manner. If some channels are out of work, we can only receive the corrupted projected vectors $A_T p_j$ instead of $A p_j$ for $j=1,\ldots, N$, where $T\subseteq \{1,\ldots, m\}$ is an unknown subset  with $|T^c|\le \beta m$ for some given corruption/erasure ratio $0<\beta<1$. Consequently,
it is important to first establish the
almost norm preservation property with high probability in \eqref{grm:normpreservation} for $\I_{\epsilon,\beta}$ with $\beta>0$.
The compressed sensing with corruption considers the problem that a certain portion of the obtained measurements $y_1,\ldots, y_m$ are  missing or corrupted by sparse noises (e.g., see \cite{Li,SHB,WM} and references therein). In other words, one can only obtain the measurements $A_T x$ for some unknown subset $T\subseteq \{1,\ldots, m\}$ such that $|T^c|\le \beta m$ for some given corruption/erasure ratio $0<\beta<1$.
Therefore, it is important that  the matrices $A_T$ have the restricted isometry property with a small positive RIP constant $\epsilon_s$ for all subsets $T\subseteq \{1,\ldots, m\}$ with $|T^c|\le \beta m$. Particularly, in compressive phase retrieval, to recover
sparse signals from the magnitude of the linear measurement, one introduces the concept  of {\em strong RIP} which requires  the matrices  $A_T$ satisfy the RIP property for all
subsets  $T\subseteq \{1,\ldots, m\}$ with $|T^c|\le \beta m$ (c.f. \cite{VX}). For example, in \cite{VX}, the authors considered  the case $\beta=1/2$. To achieve this robustness property, it is natural to first establish the
almost norm preservation property with high probability by replacing $\I_{\epsilon,0}$ and $\epsilon$ in \eqref{grm:normpreservation} with $\I_{\epsilon_s,\beta}$ and $\epsilon_s$, respectively for $\beta>0$.

We first consider Problem 1.
To study how large is the erasure ratio $\beta$ so that a normalized Gaussian random matrix $A$ with arbitrarily erased $\beta m$ rows still has the almost norm preservation property with high probability, we introduce a quantity $\beta^{\max}_{\epsilon,\alpha}$ to characterize the largest possible such erasure ratio $\beta$ with a given fixed high probability rate $\alpha>0$. For $\epsilon>0$ and $\alpha>0$, we define
\begin{equation}\label{betamax}
\beta^{\max}_{\epsilon,\alpha}:=\sup\{0\le \beta<1\; :\; \PP(\I_{\epsilon,\beta}) \ge 1-3 e^{-\alpha (\epsilon^2/4-\epsilon^3/6) m}\; \text{for all}\; m\in \N\}.
\end{equation}
If the above set in the right-hand side of \eqref{betamax} is empty, then we simply define $\beta^{\max}_{\epsilon,\alpha}:=0$. Due to \eqref{grm:normpreservation} and $\I_{\epsilon,\beta}\subseteq \I_{\epsilon,0}$ for all $0\le \beta<1$, it makes sense for us to only consider $0<\alpha\le 1$.
The multiplicative constant $3$ before $e^{-\alpha (\epsilon^2/4-\epsilon^3/6) m}$ in \eqref{betamax} is not essential and can be replaced by any absolute constant greater than $2$. For simplicity of presentation, we stick to the constant $3$ in \eqref{betamax}.

For $\epsilon>0$ and $0\le \beta<1$, a closely related notion to $\I_{\epsilon,\beta}$ in \eqref{I:epsbeta} is
\begin{equation}\label{mI:epsbeta}
\mathring{\I}_{\epsilon,\beta}:=\mathring{\I}_{\epsilon,\beta}(A,x_0):=\left\{ \left|\tfrac{1}{m}\|A_T  x_0\|^2-1\right| \le \epsilon\; \text{for all} \; T\subseteq \{1,\ldots,m\}\; \mbox{satisfying}\; |T^c|\le \beta m \right\}.
\end{equation}
That is, we used the uniform normalization factor $\frac{1}{m}$ for $\mathring{\I}_{\epsilon,\beta}$ in \eqref{mI:epsbeta} instead of the factor $\frac{1}{|T|}$ for $\I_{\epsilon,\beta}$ in \eqref{I:epsbeta}.
Similar to \eqref{betamax}, for $\epsilon>0$ and $\alpha>0$, we define
\begin{equation}\label{mbetamax}
\mathring{\beta}^{\max}_{\epsilon,\alpha}:=\sup\{0\le \beta<1\; :\; \PP(\mathring{\I}_{\epsilon,\beta}) \ge 1-3 e^{-\alpha (\epsilon^2/4-\epsilon^3/6)m}\; \text{for all}\; m\in \N\}.
\end{equation}

For the case $\epsilon\to 0^+$ (that is, $\epsilon$ is small for the almost norm preservation property in \eqref{grm:normpreservation}), we have the following result.

\begin{theo}\label{thm:optimal}
Let $A$ be an $m\times \Ndim$ random matrix with independent identically distributed entries obeying $\cN(0,1)$.
For every $0<\alpha<1$,
\begin{equation} \label{eq:optimal}
\left(\frac{1-\sqrt{\alpha}}{32}\right) \frac{\epsilon}{\ln \tfrac{1}{\epsilon}} < \beta^{\max}_{\epsilon,\alpha}< \left(\frac{2+2\epsilon_g}{c_g^2\epsilon_g}\right) \frac{\epsilon}{\ln \tfrac{1}{\epsilon}}, \qquad \, 0<\epsilon<\min(\tfrac{1-\sqrt{\alpha}}{4},\epsilon_g,4\epsilon_g^2)
\end{equation}
and
\begin{equation} \label{eq:optimal:0}
\left(\frac{1-\sqrt{\alpha}}{32}\right) \frac{\epsilon}{\ln \tfrac{1}{\epsilon}} < \mathring{\beta}^{\max}_{\epsilon,\alpha} < \left(\frac{1}{4c_g^2}\right) \frac{\epsilon}{\ln \tfrac{1}{\epsilon}}, \qquad \, 0<\epsilon<\min(\tfrac{1-\sqrt{\alpha}}{4},c_g^2\ln2, \tfrac{1}{2c_g^2}),
\end{equation}
where $c_g$ and $\epsilon_g$ are absolute positive constants given in \eqref{cg} and \eqref{epsg}, respectively.
\end{theo}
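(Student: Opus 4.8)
The plan is to eliminate the quantifier over all subsets $T$ by reducing to order statistics, and then to control a single sum of extreme squared Gaussians. Writing $g_j:=(Ax_0)_j$, the normalization $\|x_0\|=1$ makes $g_1,\dots,g_m$ i.i.d.\ $\cN(0,1)$, so that $\|A_Tx_0\|^2=\sum_{j\in T}g_j^2$ is a sum of i.i.d.\ $\chi^2_1$ variables. The first observation is that the adversarial choice of $T$ in \eqref{I:epsbeta} and \eqref{mI:epsbeta} is explicit: to make $\tfrac1{|T|}\sum_{j\in T}g_j^2$ as small as possible one keeps the $\lceil(1-\beta)m\rceil$ smallest values (equivalently, one erases the $\lfloor\beta m\rfloor$ largest), while to make it as large as possible one keeps the largest ones. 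Hence both $\I_{\epsilon,\beta}$ and $\mathring{\I}_{\epsilon,\beta}$ are governed by the single random variable $M_\beta:=\max_{|S|\le\beta m}\sum_{j\in S}g_j^2$, the sum of the top $\lfloor\beta m\rfloor$ order statistics of the $g_j^2$. The upper-deviation side is handled by the crude estimate $\sum_{j\in T}g_j^2\le\sum_{j=1}^m g_j^2$ together with \eqref{eq:epsilon:0}, and turns out not to be binding, since $\beta\asymp\epsilon/\ln\tfrac1\epsilon\ll\epsilon$; the lower-deviation side, driven by $M_\beta$, is the crux.

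For the lower bounds on $\beta^{\max}_{\epsilon,\alpha}$ and $\mathring{\beta}^{\max}_{\epsilon,\alpha}$ I would split the distortion budget as $\epsilon=\sqrt\alpha\,\epsilon+(1-\sqrt\alpha)\epsilon$. Applying the lower-tail bound in \eqref{eq:epsilon:0} with distortion $\sqrt\alpha\,\epsilon$ gives $\tfrac1m\sum_jg_j^2\ge1-\sqrt\alpha\,\epsilon$ outside an event of probability at most $e^{-((\sqrt\alpha\epsilon)^2/4-\cdots)m}\le e^{-\alpha(\epsilon^2/4-\epsilon^3/6)m}$, which is precisely where the rate $\alpha$ and the factor $\sqrt\alpha$ originate. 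It then suffices to prove $M_\beta\le(1-\sqrt\alpha)\epsilon\,m$ off an event of comparable probability, because on the intersection every admissible $T$ obeys $\tfrac1{|T|}\sum_{j\in T}g_j^2\ge\tfrac1m(\sum_jg_j^2-M_\beta)\ge1-\epsilon$ (the same bound serving the $\tfrac1m$-normalized $\mathring{\I}_{\epsilon,\beta}$, which explains the identical lower constant). To bound $M_\beta$ I would use, for a threshold $\tau>0$, the deterministic inequality $M_\beta\le\lfloor\beta m\rfloor\tau+\sum_{j=1}^m(g_j^2-\tau)_+$, choose $\tau\approx2\ln\tfrac1\beta$ so that $\lfloor\beta m\rfloor\tau\approx2\beta\ln\tfrac1\beta\,m$ dominates, and apply a Chernoff/MGF bound to the i.i.d.\ sum $\sum_j(g_j^2-\tau)_+$ to render it negligible at the required rate. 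Solving $2\beta\ln\tfrac1\beta\lesssim(1-\sqrt\alpha)\epsilon$ with $\beta=C\epsilon/\ln\tfrac1\epsilon$ (so $\ln\tfrac1\beta\sim\ln\tfrac1\epsilon$ and $\beta\ln\tfrac1\beta\sim C\epsilon$) gives admissibility for $C$ up to an absolute fraction of $1-\sqrt\alpha$; accounting for the losses in the choice of $\tau$ and in the concentration step yields the stated constant $\tfrac{1-\sqrt\alpha}{32}$.

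For the upper bounds I would instead exhibit, for a single well-chosen $m$, an explicit failure of $\I_{\epsilon,\beta}$ (resp.\ $\mathring{\I}_{\epsilon,\beta}$) with probability exceeding $3e^{-\alpha(\epsilon^2/4-\epsilon^3/6)m}$. Fix $\tau$ and let $N_\tau=\#\{j:g_j^2>\tau\}$; using the Gaussian tail \emph{lower} bound with absolute constant $c_g$ (from \eqref{cg}, valid in the range controlled by $\epsilon_g$) one gets $\PP(g^2>\tau)\ge p$ for an explicit $p$, and a binomial lower-tail estimate shows $N_\tau\ge\lfloor\beta m\rfloor$ with probability bounded below by a constant. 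On that event the top $\lfloor\beta m\rfloor$ values all exceed $\tau$, so erasing them and invoking the upper tail \eqref{eq:epsilon:0} on $\sum_jg_j^2$ forces $\tfrac1{|T|}\sum_{j\in T}g_j^2\le\tfrac1{(1-\beta)m}(\sum_jg_j^2-\beta m\tau)$ below $1-\epsilon$ as soon as $\beta\tau$ exceeds a fixed multiple of $\epsilon$. Choosing $\tau$ against $p$ and optimizing produces the threshold $\beta>\bigl(\tfrac{2+2\epsilon_g}{c_g^2\epsilon_g}\bigr)\epsilon/\ln\tfrac1\epsilon$ for $\I$ and the cleaner $\beta>\tfrac1{4c_g^2}\cdot\epsilon/\ln\tfrac1\epsilon$ for $\mathring{\I}$, the difference stemming only from the $\tfrac1{|T|}$ versus $\tfrac1m$ denominator.

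The main obstacle I anticipate is the two-sided sharp control of $M_\beta$ with the correct \emph{constants} and, simultaneously, the prescribed probability rate $e^{-\alpha(\epsilon^2/4-\epsilon^3/6)m}$. A naive union bound over the $\sum_{k\le\beta m}\binom mk$ admissible subsets, using the $\chi^2_{|T|}$ tail for each fixed $T$, is too lossy: the dominant term behaves like $e^{m(H(\beta)-c(1-\beta))}$ and forces $\beta\ln\tfrac1\beta\lesssim\epsilon^2$, hence only $\beta\asymp\epsilon^2/\ln\tfrac1\epsilon$, which is smaller than the truth by a factor $\epsilon$. The order-statistics reduction is exactly what avoids this loss, but it relocates the difficulty onto (i) an upper concentration inequality for the extreme-value sum $M_\beta$ that is tight up to the constant $32$, and (ii) a matching anti-concentration bound driven by the Gaussian tail lower constant $c_g$; reconciling both with the exponent $\alpha(\epsilon^2/4-\epsilon^3/6)$ through the split $\epsilon=\sqrt\alpha\,\epsilon+(1-\sqrt\alpha)\epsilon$ is the heart of the proof.
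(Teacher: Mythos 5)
Your overall architecture coincides with the paper's: reduce the quantifier over $T$ to order statistics of $y=Ax_0$ (this is Lemma~\ref{lem:dec}), split the distortion budget as $\epsilon=\sqrt{\alpha}\,\epsilon+(1-\sqrt{\alpha})\epsilon$ so that \eqref{eq:epsilon:0} handles the full sum at the rate $e^{-\alpha(\epsilon^2/4-\epsilon^3/6)m}$ (the paper's event $E_0$ in Lemma~\ref{lem:leth1}), and then control the sum of the top $\lfloor\beta m\rfloor$ squared order statistics from above (for the lower bound on $\beta^{\max}_{\epsilon,\alpha}$) and from below (for the upper bound). Your observation that a naive union bound over subsets only yields $\beta\asymp\epsilon^2/\ln\tfrac1\epsilon$ is exactly why the paper works with order statistics throughout. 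Where you diverge is in the concentration machinery. For the upper deviation of $M_\beta$ the paper does not truncate: it bounds $\E\sqrt{\tfrac1k\sum_{j\le k}y_{(j)}^2}\le\sqrt{2\ln\tfrac{e}{\gamma}}$ by integrating the order-statistic tails (Lemma~\ref{lem:gmin}(ii)) and then applies Gaussian Lipschitz concentration to $x\mapsto\sqrt{\sum_{j\in S}x_{(j)}^2}$ (Theorem~\ref{thm:Lip} and Lemma~\ref{lem:Lip}, the Lipschitz property being checked via the rearrangement inequality); your threshold decomposition $M_\beta\le\lfloor\beta m\rfloor\tau+\sum_j(g_j^2-\tau)_+$ with $\tau\approx2\ln\tfrac1\beta$ plus a Bernstein/Chernoff bound is a legitimate and more elementary substitute, and a back-of-envelope check shows it does deliver the rate $e^{-c\epsilon^2m}$ since the variance proxy of the truncated sum is only $O(\beta m)$. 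For the anti-concentration step the paper again uses Lipschitz concentration around the expectation lower bound $\E|y_{(k)}|\ge c_g\sqrt{\ln\tfrac{2m}{k}}$ from \eqref{cg} (Lemma~\ref{lem:beta}), whereas you count threshold exceedances with a binomial lower tail; again workable, and your argument that failure with constant probability for large $m$ contradicts the definition of $\beta^{\max}_{\epsilon,\alpha,C}$ is the same contradiction scheme as in Theorem~\ref{thm:upperbound}.

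The one substantive gap is that the constants are asserted rather than derived, and for the upper bounds this matters more than you acknowledge. The quantities $\frac{2+2\epsilon_g}{c_g^2\epsilon_g}$ and $\frac{1}{4c_g^2}$ in \eqref{eq:optimal} and \eqref{eq:optimal:0} are tied to the specific constant $c_g$ of the order-statistics \emph{expectation} bound \eqref{cg} and to $\epsilon_g$ defined through the Lambert-$W$ optimization of Lemma~\ref{lem:ineq} applied to $F_\epsilon(\beta)=c_g^2\beta\ln\tfrac{2}{\beta}-((1-\epsilon)\beta+\epsilon)$; your route replaces $c_g$ by a Gaussian \emph{tail} lower-bound constant and an unspecified binomial concentration loss, so it would produce an upper bound of the correct order $\epsilon/\ln\tfrac1\epsilon$ but not literally the stated expressions, and the $\epsilon_g$ that appears in the admissible range $0<\epsilon<\min(\tfrac{1-\sqrt{\alpha}}{4},\epsilon_g,4\epsilon_g^2)$ would have no meaning in your setup. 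Similarly, landing on exactly $\tfrac{1-\sqrt{\alpha}}{32}$ on the lower-bound side requires reproducing the bookkeeping that converts $\beta\ln\tfrac{e}{\beta}\le\tfrac{\epsilon}{2}(\sqrt{1-\sqrt{\alpha}}-\sqrt{\alpha\epsilon/2})^2$ into \eqref{beta:epsalpha} and then into \eqref{beta:eps}; your truncation constants ($\tau=2\ln\tfrac1\beta$ versus the paper's $2\ln\tfrac{e}{\beta}$, plus the Chernoff slack) would have to be tracked to confirm the factor $32$. None of this threatens the $\Theta(\epsilon/\ln\tfrac1\epsilon)$ conclusion, but as written the proposal proves a version of the theorem with unspecified absolute constants rather than the theorem as stated.
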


Theorem \ref{thm:optimal} shows that  $\beta^{\max}_{\epsilon,\alpha}=O(\frac{\epsilon}{\ln 1/\epsilon})$ has the optimal order when $\epsilon $ is small enough. Hence, Theorem~\ref{thm:optimal} presents a solution to Problem 1 up to a multiplicative constant provided that $\epsilon$ is small enough.
As a direct consequence of Theorem~\ref{thm:optimal} (more precisely, Theorem~\ref{thm:lowerbound:0}),
by the standard argument in the literature for proving the Johnson-Lindenstrauss lemma using random matrices, we have the following robust version of the Johnson-Lindenstrauss lemma.

\begin{coro}\label{cor:JL}
Let $0<\alpha<1$ and $0<\epsilon<\frac{1-\sqrt{\alpha}}{4}$.
Let $N,m,n\in \N$ such that $m>\frac{\ln(3N(N-1)/2)}{\alpha(\epsilon^2/4-\epsilon^3/6)}$.
Let $A$ be an $m\times \Ndim$ random matrix with i.i.d. entries obeying $\cN(0,1)$.
For any given $N$ points $p_1,\ldots, p_N\in \R^n$, with probability at least $1-\frac{3}{2}N(N-1)e^{-\alpha (\epsilon^2/4-\epsilon^3/6)m}>0$,
\begin{equation}\label{JLlemma:robust}
(1-\epsilon)\|p_j-p_k\|^2 \le \tfrac{1}{m}\| A_Tp_j-A_T p_k\|^2\le (1+\epsilon)\|p_j-p_k\|^2, \quad \forall\; 1\le j,k\le N\;\; \mbox{and}\;\; T\in T_{\epsilon,\alpha},
\end{equation}
where $T_{\epsilon,\alpha}$ is defined to be
\begin{equation}\label{T:erasureration}
T_{\epsilon,\alpha}:=\left \{T\subseteq \{1,\ldots, m\} \; : \;
|T^c|\le m \left(\frac{1-\sqrt{\alpha}}{32}\right)\frac{\epsilon}{\ln \tfrac{1}{\epsilon}}\right\}.
\end{equation}
\end{coro}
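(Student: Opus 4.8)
The plan is to reduce the statement to the single-vector norm-preservation events already controlled by Theorem~\ref{thm:lowerbound:0} and then apply a union bound over the pairs of points. Set $\beta:=\left(\frac{1-\sqrt{\alpha}}{32}\right)\frac{\epsilon}{\ln(1/\epsilon)}$, so that $T_{\epsilon,\alpha}$ is exactly the collection of $T\subseteq\{1,\ldots,m\}$ with $|T^c|\le\beta m$. For a fixed pair $j\ne k$ with $p_j\ne p_k$, put $x_{jk}:=(p_j-p_k)/\|p_j-p_k\|$, a unit vector. By homogeneity of the map $v\mapsto A_Tv$, the chain of inequalities \eqref{JLlemma:robust} for this pair and a given $T$ is, after dividing through by $\|p_j-p_k\|^2$, precisely the statement $\left|\tfrac{1}{m}\|A_Tx_{jk}\|^2-1\right|\le\epsilon$. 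Since this is required for every $T\in T_{\epsilon,\alpha}$, i.e.\ for every $T$ with $|T^c|\le\beta m$, the event that \eqref{JLlemma:robust} holds for the pair $(j,k)$ and all admissible $T$ is exactly $\mathring{\I}_{\epsilon,\beta}(A,x_{jk})$ from \eqref{mI:epsbeta}. The $\tfrac{1}{m}$ normalization in \eqref{JLlemma:robust} is what forces the use of $\mathring{\I}$ rather than $\I$.

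Next I invoke the lower-bound half of Theorem~\ref{thm:optimal}, namely Theorem~\ref{thm:lowerbound:0} (the left inequality of \eqref{eq:optimal:0}), which for $0<\epsilon<\frac{1-\sqrt{\alpha}}{4}$ gives $\beta<\mathring{\beta}^{\max}_{\epsilon,\alpha}$. By the definition \eqref{mbetamax} of $\mathring{\beta}^{\max}_{\epsilon,\alpha}$ together with the monotonicity of $\PP(\mathring{\I}_{\epsilon,\beta})$ in $\beta$, this yields, for every fixed unit vector $x$ and every $m\in\N$,
\[
\PP\big(\mathring{\I}_{\epsilon,\beta}(A,x)^c\big)\le 3e^{-\alpha(\epsilon^2/4-\epsilon^3/6)m}.
\]
Applying this with $x=x_{jk}$ and taking a union bound over the $\binom{N}{2}=\frac{N(N-1)}{2}$ unordered pairs $\{j,k\}$ (the diagonal pairs $j=k$ contribute nothing since $p_j-p_j=0$ makes \eqref{JLlemma:robust} trivial, and $(j,k)$ and $(k,j)$ give the same event) produces
\[
\PP\Big(\bigcup_{1\le j<k\le N}\mathring{\I}_{\epsilon,\beta}(A,x_{jk})^c\Big)\le \tfrac{3}{2}N(N-1)\,e^{-\alpha(\epsilon^2/4-\epsilon^3/6)m}.
\]
Passing to the complement shows that with probability at least $1-\tfrac{3}{2}N(N-1)e^{-\alpha(\epsilon^2/4-\epsilon^3/6)m}$ all the events $\mathring{\I}_{\epsilon,\beta}(A,x_{jk})$ hold at once, which is exactly the assertion that \eqref{JLlemma:robust} holds for all $1\le j,k\le N$ and all $T\in T_{\epsilon,\alpha}$.

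Finally I check positivity of the stated probability. The hypothesis $m>\frac{\ln(3N(N-1)/2)}{\alpha(\epsilon^2/4-\epsilon^3/6)}$ is equivalent to $\alpha(\epsilon^2/4-\epsilon^3/6)m>\ln\frac{3N(N-1)}{2}$, hence to $\tfrac{3}{2}N(N-1)e^{-\alpha(\epsilon^2/4-\epsilon^3/6)m}<1$, giving the strict positivity claimed in \eqref{JLlemma:robust}. There is no genuine analytic obstacle here: once Theorem~\ref{thm:lowerbound:0} is available, the corollary is a routine union-bound argument. The only points needing care are (i) using the $\tfrac{1}{m}$-normalized events $\mathring{\I}_{\epsilon,\beta}$, and therefore the lower bound \eqref{eq:optimal:0} for $\mathring{\beta}^{\max}_{\epsilon,\alpha}$ rather than \eqref{eq:optimal} for $\beta^{\max}_{\epsilon,\alpha}$, so as to match the normalization in \eqref{JLlemma:robust}; and (ii) counting pairs correctly, so that the factor $\tfrac{3}{2}N(N-1)$ and the resulting $m$-threshold come out exactly as stated.
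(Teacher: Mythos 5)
Your proof is correct and follows essentially the same route as the paper's: both reduce \eqref{JLlemma:robust} for each pair $j\ne k$ to the event $\mathring{\I}_{\epsilon,\beta}\bigl(A,(p_j-p_k)/\|p_j-p_k\|\bigr)$ with $\beta=\frac{1-\sqrt{\alpha}}{32}\frac{\epsilon}{\ln(1/\epsilon)}$, invoke Theorem~\ref{thm:lowerbound:0}, and union-bound over the $\binom{N}{2}$ pairs. The only cosmetic difference is that you route the application through the definition of $\mathring{\beta}^{\max}_{\epsilon,\alpha}$ as a supremum (using monotonicity of $\PP(\mathring{\I}_{\epsilon,\beta})$ in $\beta$), whereas the paper verifies the explicit sufficient condition \eqref{beta:epsalpha} for $\beta$ directly.
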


%By the same simple argument using union bounds as in \cite[Theorem~5.2]{BDDW:ca:2008},
Another consequence of Theorem~\ref{thm:lowerbound:0} is the following result on the robust restricted isometry property.

\begin{coro}\label{cor:rip}
Let $0<\alpha<1$ and $0<\epsilon<\frac{1-\sqrt{\alpha}}{4}$.
Let $s, m,\Ndim\in \N$ satisfy
%$s\leq \frac{\alpha \epsilon^2}{8}\frac{m}{\ln \frac{12 e n}{\epsilon s}}$.
$s \ln \frac{24 en}{\epsilon s}< \alpha(\epsilon^2/16-\epsilon^3/24)m-\ln3$.
Let $A$ be an $m\times \Ndim$ random matrix with i.i.d. entries obeying $\cN(0,1)$. With probability at least $1-3 (\frac{24en}{\epsilon s})^s e^{-\alpha(\epsilon^2/16-\epsilon^3/24)m}>0$,
\begin{equation}\label{rip:robust}
(1-\epsilon)\|v\|^2 \le \tfrac{1}{m}  \|A_T v\|^2\le (1+\epsilon)\|v\|^2, \quad \forall\; \mbox{$s$-sparse $v\in \R^{\Ndim}$ and}\;\; T\in T_{\epsilon/2,\alpha},
\end{equation}
where $T_{\epsilon/2,\alpha}$ is defined in \eqref{T:erasureration}.
\end{coro}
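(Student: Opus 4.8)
The plan is to establish \eqref{rip:robust} by a covering (net) argument in which the robustness over all erasure patterns $T$ is supplied, for each \emph{fixed} vector, by Theorem~\ref{thm:lowerbound:0}; consequently I only union-bound over a finite net of $s$-sparse unit vectors and never over the exponentially many admissible subsets $T$. Applying Theorem~\ref{thm:lowerbound:0} at distortion level $\epsilon/2$ gives, for each fixed unit vector $x_0$,

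\begin{equation*}
\PP\Bigl(\bigl|\tfrac1m\|A_T x_0\|^2-1\bigr|\le\tfrac\epsilon2\ \text{for all}\ T\in T_{\epsilon/2,\alpha}\Bigr)\ge 1-3e^{-\alpha(\epsilon^2/16-\epsilon^3/48)m},
\end{equation*}

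since $(\epsilon/2)^2/4-(\epsilon/2)^3/6=\epsilon^2/16-\epsilon^3/48$, and this single good event controls all admissible erasures $T$ at once.

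Next I would set up the net. For each support $S\subseteq\{1,\ldots,n\}$ with $|S|=s$, choose an $(\epsilon/8)$-net $Q_S$ of the unit sphere of $\R^S$; the usual volumetric bound gives $|Q_S|\le(1+16/\epsilon)^s\le(24/\epsilon)^s$ for $\epsilon<1$, and since there are $\binom ns\le(en/s)^s$ supports, the combined net $Q:=\bigcup_{|S|=s}Q_S$ satisfies $|Q|\le(24en/(\epsilon s))^s$. Applying the fixed-vector estimate above to each $q\in Q$ and taking a union bound, the event

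\begin{equation*}
\mathcal E:=\Bigl\{\bigl|\tfrac1m\|A_T q\|^2-1\bigr|\le\tfrac\epsilon2\ \text{for all}\ q\in Q\ \text{and all}\ T\in T_{\epsilon/2,\alpha}\Bigr\}
\end{equation*}

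fails with probability at most $3(24en/(\epsilon s))^s e^{-\alpha(\epsilon^2/16-\epsilon^3/48)m}$. Using $\epsilon^2/16-\epsilon^3/48\ge\epsilon^2/16-\epsilon^3/24$ to pass to the slightly looser exponent, this is at most $3(24en/(\epsilon s))^s e^{-\alpha(\epsilon^2/16-\epsilon^3/24)m}$, exactly the quantity subtracted from $1$ in the statement; the hypothesis $s\ln\tfrac{24en}{\epsilon s}<\alpha(\epsilon^2/16-\epsilon^3/24)m-\ln3$ is precisely the condition making this bound strictly less than $1$, so $\mathcal E$ occurs with positive probability.

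It then remains to upgrade the net estimate on $\mathcal E$ to all $s$-sparse vectors and all admissible $T$. Fix $T\in T_{\epsilon/2,\alpha}$ and a support $S$, and set $M:=\tfrac1m A_T^{\tp}A_T|_S-I$, a symmetric operator on $\R^S$; on $\mathcal E$ we have $|q^{\tp}Mq|=\bigl|\tfrac1m\|A_Tq\|^2-1\bigr|\le\epsilon/2$ for every $q\in Q_S$. The standard net comparison lemma for quadratic forms of symmetric matrices then yields $\|M\|\le\frac{1}{1-2(\epsilon/8)}\cdot\frac\epsilon2=\frac{\epsilon/2}{1-\epsilon/4}\le\epsilon$, the last inequality holding because $\epsilon<1$. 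This is exactly $\bigl|\tfrac1m\|A_T v\|^2-\|v\|^2\bigr|\le\epsilon\|v\|^2$ for all $v$ supported in $S$, i.e. \eqref{rip:robust} for such $v$; since every $s$-sparse $v$ is supported in some $S$ with $|S|=s$ and $T\in T_{\epsilon/2,\alpha}$ was arbitrary, \eqref{rip:robust} holds on $\mathcal E$.

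The conceptual difficulty---avoiding a hopeless union bound over the $\binom{m}{|T^c|}$ erasure patterns---is already resolved by Theorem~\ref{thm:lowerbound:0}, which handles all $T\in T_{\epsilon/2,\alpha}$ simultaneously for each fixed vector. The only genuine care needed is the bookkeeping of constants: choosing the net radius $\epsilon/8$ and the per-point distortion $\epsilon/2$ so that the comparison lemma returns precisely the RIP constant $\epsilon$, and verifying that replacing $\epsilon^2/16-\epsilon^3/48$ by $\epsilon^2/16-\epsilon^3/24$ keeps the exponent and cardinality factors in the exact form asserted.
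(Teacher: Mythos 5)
Your proposal is correct and follows essentially the same route as the paper: a per-vector application of Theorem~\ref{thm:lowerbound:0} at distortion level $\epsilon/2$ (which handles all erasure patterns $T\in T_{\epsilon/2,\alpha}$ simultaneously), an $(\epsilon/8)$-net of each $s$-dimensional unit sphere with the same cardinality bound $(24/\epsilon)^s\binom{n}{s}\le(24en/(\epsilon s))^s$, and a union bound yielding exactly the stated probability. The only difference is the final net-to-sphere extension: you invoke the symmetric-operator comparison $\|M\|\le\frac{1}{1-2\eta}\sup_{q}|q^{\tp}Mq|$, whereas the paper runs the Baraniuk--Davenport--DeVore--Wakin argument bounding $\lambda:=\sup\frac{1}{\sqrt m}\|A_Tx\|$ via $\lambda\le\sqrt{1+\epsilon/2}+\lambda\epsilon/8$; both are standard and both land on the RIP constant $\epsilon$ with the same choice of net radius.
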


We now turn to Problem 2, which is also related to erasure robust frames (see \cite{W}).
For a given $0<\beta<1$, we would like to determine the minimum $\epsilon $ so that
$\tfrac{1}{\abs{T}}\|A_T  x_0\|^2\in [1-\epsilon, 1+\epsilon]$ with high probability for all $T\subseteq \{1,\ldots,d\}$ satisfying $|T^c|\le \beta m$.
For this purpose, we consider the most general case instead of the particular subsets $\I_{\epsilon,\beta}$ in \eqref{I:epsbeta}. Recall that $x_0\in \R^\Ndim$ with $\|x_0\|=1$.
For $0\le \beta<1$ and $0\le \lcst\le \ucst\le \infty$, we define
\begin{equation}\label{I:interval}
\I_{[\lcst,\ucst],\beta}:=\I_{[\lcst,\ucst],\beta}(A,x_0):=\big\{ \tfrac{1}{\abs{T}}\|A_T  x_0\|^2\in [\lcst,\ucst] \; \forall\, T\subseteq \{1,\ldots,m\} \; \mbox{satisfying}\; \abs{T^c}\leq \beta m \big\}.
\end{equation}
Obviously, $\I_{\epsilon,\beta}$ in \eqref{I:epsbeta} simply becomes $\I_{\epsilon,\beta}=\I_{[1-\epsilon,1+\epsilon],\beta}$.
For $0<\beta<1$ and $\alpha>0$, we define
\begin{align}
&\lcst^{\max}_{\beta}(\alpha):=\sup\{0\le \lcst\le \infty \; : \;  \PP(\I_{[\lcst,\infty],\beta}) \ge 1-\exp(-\alpha m)\;\; \mbox{for all}\; m\in \N\},\label{lcst:alpha}\\
&\ucst^{\min}_{\beta}(\alpha):=\inf\{0\le \ucst\le \infty \; : \; \PP(\I_{[0,\ucst],\beta}) \ge 1-\exp(-\alpha m)\;\; \mbox{for all}\; m\in \N\}, \label{ucst:alpha}
\end{align}
and
\begin{equation} \label{cst}
\lcst^{\max}_\beta:=\sup\{\lcst^{\max}_{\beta}(\alpha) \; : \; \alpha>0\} \quad \mbox{and}\quad
\ucst^{\min}_\beta:=\inf\{\ucst^{\min}_{\beta}(\alpha) \; : \; \alpha>0\}.
\end{equation}
A simple observation from the above definitions is that $\lcst^{\max}_{\beta}(\alpha)\le \lcst^{\max}_{\beta}\le
\ucst^{\min}_{\beta}\le \ucst^{\min}_{\beta}(\alpha)$ and
\begin{equation}\label{P:lcst:ucst}
\PP(\I_{[\lcst^{\max}_{\beta}(\alpha),\ucst^{\min}_{\beta}(\alpha)],
\beta})\geq 1-2\exp(-\alpha m), \qquad \forall\; m\in \N.
\end{equation}
If $0<\lcst^{\max}_{\beta}(\alpha)\le \ucst^{\min}_{\beta}(\alpha)<2$, then Problem~2 is solved with $\epsilon=\max(1-\lcst^{\max}_{\beta}(\alpha),\ucst^{\min}_\beta(\alpha)-1)>0$.
Similar to \eqref{I:interval}, we define
\begin{equation}\label{I:interval:2}
\mathring{\I}_{[\lcst,\ucst],\beta}:=\left\{ \tfrac{1}{m}\|A_T  x_0\|^2\in [\lcst,\ucst]\quad \forall\, T\subseteq \{1,\ldots,m\} \; \mbox{satisfying}\; \abs{T^c}\leq \beta m \right\}
\end{equation}
and we can define $ \mathring{\lcst}^{\max}_{\beta}, \mathring{\ucst}^{\min}_{\beta}$ similar to $\lcst^{\max}_{\beta}, \ucst^{\min}_{\beta}$, respectively by replacing $\I$ with $\mathring{\I}$.

We now briefly explain why we are interested in $\I_{[\lcst,\ucst],\beta}$. An $m\times n$ matrix $A$
is said to have the \emph{strong restricted isometry property} of sparse order $s\in \N$ and level $[\lcst, \ucst, \beta]$ with $0<\lcst\le \ucst<2,
0\leq\beta<1$ if
\begin{equation}\label{rip:strong}
\lcst \|v\|^2 \le \tfrac{1}{m}\|A_T v\|^2\le \ucst \|v\|^2, \quad \forall\, \mbox{$s$-sparse $v\in \R^{\Ndim}$ and}\; T\subseteq \{1,\ldots, m\} \;\mbox{with}\; |T^c|\le \beta m.
\end{equation}
The strong restricted isometry property plays a critical role in the study of phaseless compressed sensing in \cite{BM13, VX}.
In \cite{VX}, the authors investigated the case where $\beta=1/2$ with showing that the Gaussian matrix has the strong RIP of order $s$ and level $[\lcst_0,\ucst_0,1/2]$ with high probability provided $m=O(s\log e n)$. Here $\lcst_0$ and $\ucst_0$ are absolute constants. The original motivation for this work
is to extend the result in \cite{VX} to the arbitrary  $\beta\in [0,1)$.
 To show that there indeed exists a measurement matrix $A$ having the strong restricted property of sparse order $s$ and level $[\lcst,\ucst,\beta]$ in \eqref{rip:strong}, the matrix $A$ is often constructed by an $m\times n$ Gaussian random matrix with i.i.d. entries obeying $\mathcal{N}(0,1)$ and
one would like to have $\PP(\I_{[\lcst,\ucst],\beta})>0$ for $0<\lcst\le \ucst<2$ with the largest possible $\lcst$ and the smallest possible $\ucst$.
That is, if we can prove the inequalities $0<\mathring{\lcst}^{\max}_{\beta}\le  \mathring{\ucst}^{\min}_{\beta}<2$, for any $\lcst, \ucst$ satisfying
$0<\lcst<\mathring{\lcst}^{\max}_{\beta}\le  \mathring{\ucst}^{\min}_{\beta}<\ucst<2$,
as we shall prove in Corollary~\ref{cor:rip1},
\eqref{rip:strong} holds with high probability. Thus, the desired inequalities $0<\mathring{\lcst}^{\max}_{\beta}\le  \mathring{\ucst}^{\min}_{\beta}<2$ guarantees
the strong restricted isometry property for Gaussian random matrices.
%In fact, our analysis in Section~\ref{sec:beta} allows us to calculate/estimate $\alpha$ for given $\lcst$ and $\ucst$.

We have the following estimates on the quantities $\lcst^{\max}_\beta$, $\ucst^{\min}_\beta$ and $\mathring{\lcst}^{\max}_\beta$, $\mathring{\ucst}^{\min}_\beta$.
%$\lcst_\beta^{\max}$, $\ucst_\beta^{\min}$, $\mathring{\lcst}_\beta^{\max}$, and $\mathring{\ucst}_\beta^{\min}$.

\begin{theo}\label{thm:non}
Let $A$ be an $m\times \Ndim$ random matrix with i.i.d. entries obeying $\cN(0,1)$. For $0<\beta<1$,
\begin{align}
\frac{\pi}{6}(1-\beta)^2 \min\Big( \frac{3-2\beta}{4(1-\beta)},1\Big) &\le \lcst_\beta^{\max} \le \min\left(\frac{\pi}{2} \Big(\ln \frac{1}{\beta}\Big)^2, 1\right),
%\le \frac{\pi}{2\beta^2} (1-\beta)^2,
\label{eq:theta}\\
\max\Big(c_g^2 \ln \frac{2}{1-\beta}, \frac{\pi}{2}\beta^2\Big)
&\le \ucst_\beta^{\min} \le 2\ln \frac{e}{1-\beta},\label{eq:omega}
\end{align}
and
\begin{align}
\frac{\pi}{6}(1-\beta)^3 \min\Big( \frac{3-2\beta}{4(1-\beta)},1\Big) &\le \mathring{\lcst}_\beta^{\max} \le (1-\beta)\min\left( \frac{\pi}{2} \Big(\ln \frac{1}{\beta}\Big)^2,1\right),
%\le \frac{\pi}{2\beta^2} (1-\beta)^3,
\label{eq:theta:2}\\
(1-\beta)\max\Big(c_g^2 \ln \frac{2}{1-\beta}, \frac{\pi}{2}\beta^2\Big)
&\le \mathring{\ucst}_\beta^{\min} \le 2(1-\beta)\ln \frac{e}{1-\beta},\label{eq:omega:2}
\end{align}
where the absolute constant $c_g$ is defined in \eqref{cg}.
\end{theo}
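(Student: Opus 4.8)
The plan is to reduce all four quantities to one-dimensional events about the order statistics of the i.i.d.\ sequence $X_j:=|(Ax_0)_j|^2$, $1\le j\le m$. Since the rows of $A$ are independent and $\|x_0\|=1$, each coordinate $(Ax_0)_j$ is $\cN(0,1)$, so the $X_j$ are i.i.d.\ with the $\chi^2_1$ law. Writing $X_{(1)}\le\cdots\le X_{(m)}$ for their increasing rearrangement and $k_0:=\lceil(1-\beta)m\rceil$, the bottom means $\frac1k\sum_{i=1}^k X_{(i)}$ are nondecreasing in $k$ while the top means $\frac1k\sum_{i=m-k+1}^m X_{(i)}$ are nonincreasing. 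Hence over all admissible $T$ (those with $|T|\ge k_0$) the minimum of $\frac1{|T|}\|A_Tx_0\|^2$ is attained at the $k_0$ smallest $X_j$ and the maximum at the $k_0$ largest, so
\[
\PP(\I_{[\lcst,\infty],\beta})=\PP\Big(\tfrac1{k_0}\sum_{i=1}^{k_0}X_{(i)}\ge\lcst\Big),\qquad \PP(\I_{[0,\ucst],\beta})=\PP\Big(\tfrac1{k_0}\sum_{i=m-k_0+1}^{m}X_{(i)}\le\ucst\Big).
\]
This removes any need for a union bound over subsets. The $\mathring{\I}$-quantities are identical except that $\frac1{k_0}$ is replaced by $\frac1m=\frac{k_0}{m}\cdot\frac1{k_0}$, and the factor $\frac{k_0}{m}\to 1-\beta$ accounts for the extra $(1-\beta)$ in \eqref{eq:theta:2}--\eqref{eq:omega:2}; I treat the $\I$-versions first and transfer at the end.

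For the lower bound on $\lcst^{\max}_\beta$ and the lower bounds on $\ucst^{\min}_\beta$ I will use threshold counting with Chernoff estimates for binomials. Fix $\tau>0$ and let $N_\tau:=\#\{j:X_j\le\tau\}\sim\mathrm{Bin}(m,p_\tau)$ with $p_\tau=\PP(X_1\le\tau)=2\Phi(\sqrt\tau)-1$. Since at most $N_\tau$ of the $k_0$ smallest values lie below $\tau$, one has the deterministic inequality $\sum_{i=1}^{k_0}X_{(i)}\ge\tau\,(k_0-N_\tau)_+$, so $\frac1{k_0}\sum_{i=1}^{k_0}X_{(i)}<\lcst$ forces $N_\tau>k_0(1-\lcst/\tau)$. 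Choosing $\tau,\lcst$ with $p_\tau<(1-\beta)(1-\lcst/\tau)$ and a fixed margin makes $\PP\big(N_\tau>k_0(1-\lcst/\tau)\big)\le e^{-\alpha m}$ for all $m$ with some $\alpha>0$, giving $\lcst^{\max}_\beta\ge\lcst$; optimizing $\tau$ in two regimes (according as $\beta\lessgtr\tfrac12$) and using $p_\tau\le\sqrt{2\tau/\pi}$ yields a lower bound of the stated form $\tfrac{\pi}{6}(1-\beta)^2\min\!\big(\tfrac{3-2\beta}{4(1-\beta)},1\big)$. The upper bound $\lcst^{\max}_\beta\le\min(\tfrac\pi2(\ln\tfrac1\beta)^2,1)$ follows from two facts: the bottom-$k_0$ mean never exceeds the full mean $\frac1m\sum_jX_j$, which concentrates at $1$; and taking $\tau$ slightly above the quantile $a^2$, where $a:=\Phi^{-1}(1-\beta/2)$, forces $N_\tau>k_0$ with high probability, so the $k_0$ smallest all lie below $\tau$ and their mean is $<\tau$, giving $\lcst^{\max}_\beta\le a^2$. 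The same quantile comparison applied to the largest values gives $\ucst^{\min}_\beta\ge b^2$ with $b:=\Phi^{-1}(\tfrac{1+\beta}2)$, from which the two lower estimates $\tfrac\pi2\beta^2$ and $c_g^2\ln\tfrac2{1-\beta}$ in \eqref{eq:omega} follow.

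The delicate part is the upper bound $\ucst^{\min}_\beta\le 2\ln\tfrac e{1-\beta}$, because the largest order statistics are only sub-exponential and unbounded, so a plain threshold cannot control their sum. Here I will truncate instead: for any $\tau>0$, $\sum_{i=m-k_0+1}^{m}X_{(i)}\le\sum_{j=1}^m (X_j-\tau)_+ +\tau k_0$, hence $\frac1{k_0}\sum_{i=m-k_0+1}^m X_{(i)}\le\frac1{k_0}\sum_{j}(X_j-\tau)_+ +\tau$. The summands $(X_j-\tau)_+$ are i.i.d., nonnegative, with a finite exponential moment, so a one-sided Chernoff bound shows $\frac1{k_0}\sum_j(X_j-\tau)_+$ stays below $\frac1{1-\beta}\E[(X_1-\tau)_+]+o(1)$ with probability $\ge1-e^{-\alpha m}$, giving $\ucst^{\min}_\beta\le\frac{\E[(X_1-\tau)_+]}{1-\beta}+\tau$ for every $\tau$. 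Writing $t=\sqrt\tau$, one has $\E[(X_1-\tau)_+]=2(1-\Phi(t))(1-t^2)+2t\phi(t)$, and the remaining task is the calculus optimization over $t$ that turns $\inf_\tau\big(\frac{\E[(X_1-\tau)_+]}{1-\beta}+\tau\big)$ into $2\ln\tfrac e{1-\beta}$. Controlling the exponential moment uniformly in $m$ (so the estimate is valid for every $m$, not merely asymptotically) and carrying out this optimization is where I expect the main work to lie.

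Finally I will assemble the four displays. Each half of \eqref{eq:theta}--\eqref{eq:omega} reduces to an elementary inequality between a Gaussian quantile, or truncated moment, and an explicit function of $\beta$; the key analytic lemmas are $\Phi^{-1}(1-\beta/2)\le\sqrt{\pi/2}\,\ln\tfrac1\beta$ and $\Phi^{-1}(\tfrac{1+\beta}2)\ge\sqrt{\pi/2}\,\beta$, both sharp as $\beta\to1$ and $\beta\to0$ respectively. The second is immediate from $\phi\le\phi(0)=\tfrac1{\sqrt{2\pi}}$; the first follows from the tail bound $1-\Phi(u)\le\tfrac12e^{-u^2/2}$ for small $\beta$ and from a derivative-and-boundary comparison on the remaining range, together with the tail estimate defining $c_g$ in \eqref{cg}. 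The ring statements \eqref{eq:theta:2}--\eqref{eq:omega:2} then follow by inserting the factor $k_0/m\in[\,1-\beta,\,(1-\beta)+1/m\,]$ and letting the uniform-in-$m$ bounds absorb the $O(1/m)$ discrepancy.
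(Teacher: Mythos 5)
Your reduction to the order statistics of $X_j=|(Ax_0)_j|^2$ and the identification of the extremal subsets is exactly the paper's Lemma~\ref{lem:dec}, and your truncation device for the upper bound in \eqref{eq:omega} (bounding the top-$k_0$ sum by $\sum_j(X_j-\tau)_++\tau k_0$ and applying a Chernoff bound to the sub-exponential summands) is a workable alternative to the paper's route, which instead uses $\E y_{(j)}^2\le 2\sum_{\ell\ge j}\ell^{-1}$ together with Gaussian concentration of the $1$-Lipschitz map $y\mapsto(\sum_{j\in S}y_{(j)}^2)^{1/2}$. The quantile arguments for the upper bound in \eqref{eq:theta} and the lower bounds in \eqref{eq:omega} are also sound in outline (with $\Phi$ the standard normal distribution function), although the step $\Phi^{-1}(\tfrac{1+\beta}{2})\ge c_g\sqrt{\ln\tfrac{2}{1-\beta}}$ does not follow formally from \eqref{cg}, which is a statement about the expectations $\E|y_{(j)}|$ rather than about quantiles, without an additional limiting argument.

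The genuine gap is the lower bound in \eqref{eq:theta} (and hence \eqref{eq:theta:2}). A single-threshold bound $\sum_{i=1}^{k_0}X_{(i)}\ge\tau(k_0-N_\tau)_+$ discards the entire contribution of the $N_\tau$ smallest values, and this loss is fatal to the constant. Quantitatively, your method yields at best $\lcst^{\max}_\beta\ge\sup_{\tau>0}\tau\bigl(1-\tfrac{p_\tau}{1-\beta}\bigr)$; as $\beta\to1^-$ one has $p_\tau\sim\sqrt{2\tau/\pi}$, and substituting $\tau=\tfrac{\pi}{2}(1-\beta)^2u^2$ reduces this supremum to $\tfrac{\pi}{2}(1-\beta)^2\max_{0<u<1}u^2(1-u)=\tfrac{2\pi}{27}(1-\beta)^2$, which is strictly smaller than the required $\tfrac{\pi}{6}(1-\beta)^2$ (the ratio is $4/9$). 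The target constant is asymptotically sharp: the bottom-$k_0$ mean concentrates at $\tfrac{1}{1-\beta}\E[X_1\mathbf{1}_{\{X_1\le q\}}]=\tfrac{\pi}{6}(1-\beta)^2(1+o(1))$, where $q$ is the $(1-\beta)$-quantile of $X_1$, so any proof must capture the full truncated first moment rather than a single rectangle beneath it. You would need either a multi-threshold (layer-cake) refinement $\sum_{i\le k_0}X_{(i)}\ge\int_0^\infty(k_0-N_t)_+\,dt$ with control of $N_t$ uniformly in $t$, or the paper's device: the pointwise bound $\E|y_{(j)}|\ge\sqrt{\pi/2}\,\tfrac{m+1-j}{m+1}$ combined with the Cauchy--Schwarz inequality to obtain $\E\bigl(\tfrac{1}{m-k}\sum_{j>k}y_{(j)}^2\bigr)^{1/2}\ge\sqrt{\pi/6}\,(1-\beta)(1+o(1))$, followed by the Lipschitz concentration inequality of Lemma~\ref{lem:Lip}. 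As written, your argument does not establish \eqref{eq:theta} or \eqref{eq:theta:2}.
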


Theorem~\ref{thm:non} establishes the strong restricted isometry property for Gaussian random matrices for all $\beta\in [0,1)$ by $(1-\beta)\ln \frac{e}{1-\beta}<1$ for any $\beta\in (0,1)$ and by \eqref{grm:normpreservation} for $\beta=0$.
As a direct consequence of Theorem~\ref{thm:non}, we have
\[
\frac{\pi}{6} \le  \frac{\lcst^{\max}_\beta}{(1-\beta)^2}
\le 2\pi(\ln 2)^2,\qquad
c_g^2 \le \frac{\ucst^{\min}_\beta}{\ln \frac{1}{1-\beta}}
\le 2+\frac{2}{\ln 2},
\qquad \forall\; 1/2\le \beta<1
\]
and
\[
\frac{\pi}{6} \le  \frac{\mathring{\lcst}^{\max}_\beta}{(1-\beta)^3}
\le 2\pi(\ln 2)^2,\qquad
c_g^2 \le \frac{\mathring{\ucst}^{\min}_\beta}{(1-\beta)\ln \frac{1}{1-\beta}}
\le 2+\frac{2}{\ln 2},
\qquad \forall\; 1/2\le \beta<1.
\]
Thus, up to multiplicative constants, our estimates in Theorem~\ref{thm:non} for $\lcst^{\max}_\beta$, $\ucst^{\min}_\beta$ and $\mathring{\lcst}^{\max}_\beta$, $\mathring{\ucst}^{\min}_\beta$
are optimal as $\beta\to 1^-$.

As an application of Theorem~\ref{thm:non} and our analysis in Section~\ref{sec:beta} for proving Theorem~\ref{thm:non},
%and other results in Section~\ref{sec:beta},
we have the following robustness properties of Johnson-Lindenstrauss lemma and restricted isometry property with a given erasure ratio $0<\beta<1$.

\begin{coro}\label{cor:JL1}
Let $0<\beta<1$ and $0<\alpha<\frac{\pi}{12}(1-\beta)^2 h_\beta$ with $h_\beta:=\min(\frac{3}{4}-\frac{1}{2}\beta,1-\beta)$.
Let $m,n,N\in \N$ such that $m\ge \frac{1}{1-\beta}$ and %$N<\frac{1}{2}+\sqrt{\frac{1}{4}+e^{\alpha m}}$.
$m>\frac{1}{\alpha}\ln \frac{1}{N(N-1)}$.
Let $A$ be an $m\times \Ndim$ Gaussian random matrix with i.i.d. entries obeying $\cN(0,1)$.
For any given $N$ points $p_1,\ldots, p_N\in \R^n$, with probability at least $1-N(N-1)e^{-\alpha m}>0$,
\begin{equation}\label{JL:beta}
\begin{split}
\lcst \|p_j-p_k\|^2 \le \tfrac{1}{m}\| A_Tp_j-A_T p_k\|^2\le &\ucst\|p_j-p_k\|^2,\qquad \\
&\forall\; 1\le j,k\le N\;\; \mbox{and}\;\;  T\subseteq\{1,\ldots,m\} \mbox{ with } \abs{T^c}\leq \beta m,
\end{split}
\end{equation}
where $\lcst,\ucst\in (0,\infty)$ are positive real numbers given by
\begin{equation}\label{theta:omega:JL}
\lcst:=\frac{\pi}{6}(1-\beta)^2h_\beta+2\alpha-2(1-\beta)
\sqrt{\pi \alpha h_\beta/3}, \quad \ucst:=\left(\sqrt{2(1-\beta-\tfrac{1}{m})\ln \tfrac{e}{1-\beta-\tfrac{1}{m}}}+\sqrt{2\alpha}\right)^2.
\end{equation}
\end{coro}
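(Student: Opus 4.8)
The plan is to run the classical Johnson--Lindenstrauss union-bound argument, but to feed it the two one-sided tail estimates produced in the course of proving Theorem~\ref{thm:non} rather than the symmetric estimate \eqref{grm:normpreservation}. First I would reduce \eqref{JL:beta} to unit vectors: if $p_j=p_k$ both sides vanish and there is nothing to prove, while for $p_j\neq p_k$ I set $x_{jk}:=(p_j-p_k)/\|p_j-p_k\|$ and use the homogeneity $\tfrac1m\|A_T(p_j-p_k)\|^2=\|p_j-p_k\|^2\cdot\tfrac1m\|A_Tx_{jk}\|^2$, so that \eqref{JL:beta} for the pair $\{j,k\}$ becomes exactly the assertion $\lcst\le\tfrac1m\|A_Tx_{jk}\|^2\le\ucst$ for every admissible $T$.

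The crucial inputs are the finite-$m$, fixed-$\alpha$ one-sided estimates underlying Theorem~\ref{thm:non}. The analysis of Section~\ref{sec:beta} that yields the lower bound $\tfrac{\pi}{6}(1-\beta)^2h_\beta$ on $\mathring{\lcst}^{\max}_\beta$ in \eqref{eq:theta:2} in fact shows, for any fixed unit vector $x_0$ and any $0<\alpha<\tfrac{\pi}{12}(1-\beta)^2h_\beta$, that $\PP\big(\tfrac1m\|A_Tx_0\|^2\ge\lcst\ \text{for all}\ T\ \text{with}\ |T^c|\le\beta m\big)\ge 1-e^{-\alpha m}$, with $\lcst$ as in \eqref{theta:omega:JL}; indeed the identity $\lcst=\big(\sqrt{\tfrac{\pi}{6}(1-\beta)^2h_\beta}-\sqrt{2\alpha}\big)^2$ exhibits $\lcst$ as the limiting constant $\tfrac{\pi}{6}(1-\beta)^2h_\beta$ degraded by the $\sqrt{2\alpha}$ shift that accompanies a fixed confidence level $1-e^{-\alpha m}$, and the hypothesis $\alpha<\tfrac{\pi}{12}(1-\beta)^2h_\beta$ is precisely what makes $\sqrt{2\alpha}<\sqrt{\tfrac{\pi}{6}(1-\beta)^2h_\beta}$, hence $\lcst>0$. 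Symmetrically, the analysis producing the upper bound $2(1-\beta)\ln\tfrac{e}{1-\beta}$ on $\mathring{\ucst}^{\min}_\beta$ in \eqref{eq:omega:2} gives $\PP\big(\tfrac1m\|A_Tx_0\|^2\le\ucst\ \text{for all}\ T\ \text{with}\ |T^c|\le\beta m\big)\ge 1-e^{-\alpha m}$ with $\ucst=\big(\sqrt{2(1-\beta-\tfrac1m)\ln\tfrac{e}{1-\beta-\tfrac1m}}+\sqrt{2\alpha}\big)^2$, the quantity $1-\beta-\tfrac1m$ being the cardinality/integrality correction coming from the admissible values of $|T|$ and being nonnegative exactly because $m\ge\tfrac1{1-\beta}$.

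With these two estimates in hand the conclusion is a union bound. For each of the $\binom{N}{2}=\tfrac{N(N-1)}{2}$ unordered pairs $\{j,k\}$ with $p_j\neq p_k$ I apply the lower and the upper estimate to $x_{jk}$, which exhibits $2\binom{N}{2}=N(N-1)$ events, each of probability at most $e^{-\alpha m}$, outside of which \eqref{JL:beta} can fail. Hence \eqref{JL:beta} holds simultaneously for all pairs and all admissible $T$ with probability at least $1-N(N-1)e^{-\alpha m}$, and this quantity is positive precisely under the stated relation between $m$, $\alpha$ and $N$ (equivalently $N(N-1)e^{-\alpha m}<1$).

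I expect the only genuine difficulty to be the second step: verifying that the proof of Theorem~\ref{thm:non} really delivers the two one-sided bounds at the prescribed confidence $1-e^{-\alpha m}$ with the exact constants $\lcst,\ucst$ of \eqref{theta:omega:JL}, rather than only in the limiting form recorded in \eqref{eq:theta:2}--\eqref{eq:omega:2}. This amounts to tracking the $\pm\sqrt{2\alpha}$ deviation through the concentration step and, for $\ucst$, controlling the supremum over the admissible sets $T$ (the worst-case erasure pattern) together with the $\tfrac1m$ integrality correction; once that bookkeeping is in place, the normalization and union-bound steps are entirely routine.
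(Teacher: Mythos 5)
Your proposal is correct and takes essentially the same route as the paper's own proof: reduce to the unit vectors $(p_j-p_k)/\|p_j-p_k\|$, invoke the finite-$m$ one-sided bounds behind Theorem~\ref{thm:non} (namely the left-hand inequality of \eqref{est:mlcst} in Corollary~\ref{cor:alpha:mlcst}, whose constant is exactly the square $\bigl(\sqrt{\tfrac{\pi}{6}(1-\beta)^2h_\beta}-\sqrt{2\alpha}\bigr)^2=\lcst$ you identify, and \eqref{est:mucst:upper} in Corollary~\ref{cor:alpha:mucst} for $\ucst$), and finish with a union bound over the $N(N-1)$ one-sided events. The only caveat --- shared by the paper's statement and proof, hence not a gap in your argument --- concerns the sign of the $\tfrac{1}{m}$ correction in $\ucst$: since $\gamma=\lfloor\beta m\rfloor/m$ satisfies $\beta-\tfrac{1}{m}<\gamma\le\beta$ and $x\mapsto(1-x)\ln\tfrac{e}{1-x}$ is decreasing, dominating $\bigl(\sqrt{2(1-\gamma)\ln\tfrac{e}{1-\gamma}}+\sqrt{2\alpha}\bigr)^2$ uniformly requires $1-\beta+\tfrac{1}{m}$ rather than $1-\beta-\tfrac{1}{m}$ inside the logarithmic factor.
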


\begin{coro}\label{cor:rip1}
Let $0<\beta<1$ and $0<\alpha<\frac{\pi}{12}(1-\beta)^2 h_\beta$ with $h_\beta:=\min(\frac{3}{4}-\frac{1}{2}\beta,1-\beta)$.
Let $m,n,s\in \N$ such that $m\ge \frac{1}{1-\beta}$ and $s\ln \frac{24 en}{\epsilon s}<\alpha m-\ln 2$.
Let $A$ be an $m\times \Ndim$ Gaussian random matrix with i.i.d. entries obeying $\cN(0,1)$. For any $0<\epsilon<1$,
with probability at least $1-2(\frac{24en}{\epsilon s})^s e^{-\alpha m}>0$,
\begin{equation}
\lcst(1-\epsilon)\|v\|^2 \le \tfrac{1}{m}  \|A_T v\|^2\le \ucst(1+\epsilon)\|v\|^2, \quad \forall\; \mbox{$s$-sparse $v\in \R^{\Ndim}$ and}\;\; T\subseteq\{1,\ldots,m\} \mbox{ with } \abs{T^c}\leq \beta m,
\end{equation}
where the positive real numbers $\lcst$ and $\ucst$ are given in \eqref{theta:omega:JL}.
\end{coro}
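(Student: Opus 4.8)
The plan is to mirror the argument behind Corollary~\ref{cor:rip}, but with the general level $[\lcst,\ucst]$ of \eqref{theta:omega:JL} replacing $[1-\epsilon,1+\epsilon]$: I would combine the single-vector deviation estimate that underlies Theorem~\ref{thm:non} with a covering argument over the $s$-sparse unit sphere, arranging that the multiplicative slack $(1\pm\epsilon)$ is produced exactly by the passage from a net to the whole sphere.

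The first step is to extract from the analysis of Section~\ref{sec:beta} the fixed-vector statement that for every unit vector $x_0\in\R^n$,
\[
\PP\big(\mathring{\I}_{[\lcst,\ucst],\beta}(A,x_0)\big)\ge 1-2e^{-\alpha m}.
\]
Here the mean estimates from Theorem~\ref{thm:non} give the leading terms, while Gaussian concentration for the $\tfrac{1}{\sqrt m}$-Lipschitz (in Frobenius norm) maps $A\mapsto\min_{|T^c|\le\beta m}\tfrac{1}{\sqrt m}\|A_Tx_0\|$ and $A\mapsto\max_{|T^c|\le\beta m}\tfrac{1}{\sqrt m}\|A_Tx_0\|$ contributes the $\sqrt{2\alpha}$ corrections, so that $\sqrt{\lcst}=(1-\beta)\sqrt{\pi h_\beta/6}-\sqrt{2\alpha}$ and $\sqrt{\ucst}=\sqrt{2(1-\beta-\tfrac{1}{m})\ln\frac{e}{1-\beta-1/m}}+\sqrt{2\alpha}$, both positive under the hypotheses $m\ge\frac{1}{1-\beta}$ and $\alpha<\frac{\pi}{12}(1-\beta)^2h_\beta$.

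The second step is discretization. For each support $S\subseteq\{1,\dots,n\}$ with $|S|=s$ I fix a $\delta$-net $\mathcal N_S$ of the unit sphere of $\R^S$ with $|\mathcal N_S|\le(1+2/\delta)^s$ and apply the fixed-vector bound to every net vector. A union bound over the $\binom{n}{s}\le(en/s)^s$ supports and over the nets, with the choice $\delta=\epsilon/8$ (so that $1+2/\delta\le 24/\epsilon$), shows that with probability at least $1-2(\tfrac{24en}{\epsilon s})^se^{-\alpha m}$ one has $\tfrac{1}{m}\|A_Tx\|^2\in[\lcst,\ucst]$ simultaneously for all such $S$, all $x\in\mathcal N_S$, and all $T$ with $|T^c|\le\beta m$; the condition $s\ln\frac{24en}{\epsilon s}<\alpha m-\ln2$ is precisely what keeps this probability positive.

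Finally, I would upgrade from the net to every $s$-sparse vector by the usual bootstrap, performed separately for each fixed $T$ because $v\mapsto A_Tv$ is linear. Setting $M_T:=\sup\{\tfrac{1}{\sqrt m}\|A_Tv\|:\|v\|=1,\ \supp(v)\subseteq S\}$ and approximating any such unit $v$ by $x\in\mathcal N_S$ with $\|v-x\|\le\delta$, the upper net bound yields $M_T\le\sqrt{\ucst}/(1-\delta)$, whence $\tfrac{1}{m}\|A_Tv\|^2\le\ucst/(1-\delta)^2\le\ucst(1+\epsilon)$; the lower net bound then gives $\tfrac{1}{\sqrt m}\|A_Tv\|\ge\sqrt{\lcst}-M_T\delta$. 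I expect the lower-bound extension to be the main obstacle: the perturbation is governed by the subspace operator norm $\sim\sqrt{\ucst}$, so one must check that the loss $2\sqrt{\lcst\ucst}\,\delta/(1-\delta)$ is absorbed into the multiplicative gap $\lcst\epsilon$. This forces a careful calibration of $\delta$ against $\epsilon$ and the constants $\lcst,\ucst$ (the clean case being when $\ucst/\lcst$ remains moderate), and it is exactly where the cardinality $(\tfrac{24en}{\epsilon s})^s$ is pinned down; the upper half is immediate by comparison. Once the lower extension is secured, homogeneity (rescaling an arbitrary $s$-sparse $v$ to a unit vector) delivers the two-sided inequality $\lcst(1-\epsilon)\|v\|^2\le\tfrac{1}{m}\|A_Tv\|^2\le\ucst(1+\epsilon)\|v\|^2$ for all $s$-sparse $v$ and all admissible $T$ on the event above, completing the proof.
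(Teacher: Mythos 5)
Your plan reproduces the paper's own route for Corollary~\ref{cor:rip1}: the fixed-vector bound $\PP(\mathring{\I}_{[\lcst,\ucst],\beta})\ge 1-2e^{-\alpha m}$ extracted from Corollaries~\ref{cor:alpha:mlcst} and~\ref{cor:alpha:mucst}, an $\epsilon/8$-net of cardinality at most $(24/\epsilon)^s$ on each of the $\binom{n}{s}\le (en/s)^s$ coordinate sections, a union bound, and the standard bootstrap carried out for each admissible $T$ separately. The upper half and the probability bookkeeping are exactly as in the paper's proof of Corollary~\ref{cor:rip}, so there is nothing to add there.

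The gap is the one you yourself flagged and then deferred: the lower-bound extension from the net to the sphere. With mesh $\delta=\epsilon/8$ you only obtain $\tfrac{1}{\sqrt m}\|A_Tv\|\ge\sqrt{\lcst}-\sqrt{\ucst(1+\epsilon)}\,\epsilon/8$, and for this to dominate $\sqrt{\lcst(1-\epsilon)}$ you need, up to absolute constants, $\sqrt{\ucst/\lcst}\lesssim 4$. That ratio is \emph{not} moderate here: by Theorem~\ref{thm:non}, as $\beta\to1^-$ one has $\lcst\asymp(1-\beta)^3$ while $\ucst\asymp(1-\beta)\ln\frac{1}{1-\beta}$, so $\ucst/\lcst\to\infty$ and the loss $\sqrt{\ucst}\,\epsilon/8$ is not absorbed by the gap $\sqrt{\lcst}\,(1-\sqrt{1-\epsilon})$. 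Your hesitation is well founded: the paper's own proof stumbles at exactly this point, bounding $\sqrt{\lcst}\bigl(\sqrt{1-\epsilon/2}-\sqrt{\ucst/\lcst}\sqrt{1+\epsilon}\,\epsilon/8\bigr)$ from below by $\sqrt{\lcst}\bigl(\sqrt{1-\epsilon/2}-\sqrt{1+\epsilon}\,\epsilon/8\bigr)$ on the stated grounds that $\ucst/\lcst\ge1$ --- which is the wrong direction of that inequality. To actually close the argument you must either restrict $\epsilon$ so that $\epsilon\sqrt{\ucst/\lcst}$ is small, or take the mesh $\delta\asymp\epsilon\sqrt{\lcst/\ucst}$, which replaces $24/\epsilon$ by $C\sqrt{\ucst/\lcst}/\epsilon$ in the net cardinality and correspondingly strengthens the hypothesis $s\ln\frac{24en}{\epsilon s}<\alpha m-\ln 2$. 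As written, neither your sketch nor the argument you are mirroring establishes the lower inequality of the corollary with the stated constants.
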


It is of interest to extend the main results in this paper from Gaussian random matrices to other random matrices such as
sub-Gaussian matrices and circulant matrices (c.f. \cite{CJL}). If $A$ is the Bernoulli matrix, i.e., $\PP(a_{j,k}=1/\sqrt{m})=\PP(a_{j,k}=-1/\sqrt{m})=1/2$. Define 2-sparse vectors $v_1:=(1,1,0,\ldots,0)^\tp\in \R^n$ and $v_2:=(1,-1,0,\ldots,0)^\tp\in \R^n$. Then either
$\inf\left\{\frac{1}{m}\|A_T v_1\| \; : \; T\subseteq \{1,\ldots, m\}, |T^c|\le m/2\right\}=0$
or
$\sup\left\{\frac{1}{m}\|A_T v_2\| \; : \; T\subseteq \{1,\ldots, m\}, |T^c|\le m/2\right\}=0$.
That is, for any $\lcst>0$, either $\PP(\I_{[\lcst,\infty],1/2}(A,v_1))=0$
or $\PP(\I_{[\lcst,\infty],1/2}(A,v_2))=0$
for all $m\in \N$. As a consequence, the strong restricted isometry property for $\beta=1/2$ cannot hold for Bernoulli random matrices.
This shows that the results and study for sub-Gaussian random matrices will be essentially different to Gaussian random matrices. We shall study random matrices other than Gaussian random matrices elsewhere.

The structure of the paper is as follows. In Section~2, we shall provide some auxiliary results for the proofs of the main results  in later sections. In Section~3, we shall study the robustness properties of Gaussian random matrices with arbitrarily erased rows for small distortion rates $\epsilon\to 0^+$. In particular, we shall prove in Section~3 Theorem~\ref{thm:optimal} and a few other results related to Theorem~\ref{thm:optimal}.
In certain sense, we studied in Theorem~\ref{thm:optimal} the quantities
$\beta^{\max}_{\epsilon,\alpha}$ and $\mathring {\beta}^{\max}_{\epsilon,\alpha}$
 for the case of small erasure ratios $\epsilon\to 0^+$.
In Section~4, we shall study the robustness properties of Gaussian random matrices with arbitrarily erased rows for a given corruption/erasure ratio $0<\beta<1$. In particular, we are interested in the behavior of $\lcst^{\max}_\beta$, $\ucst^{\min}_\beta$ and
$\mathring{\lcst}^{\max}_\beta$, $\mathring{\ucst}^{\min}_\beta$
when $\beta\to 1^-$. We shall prove in Section~4 Theorem~\ref{thm:non} and other results related to Theorem~\ref{thm:non}.
%as well as the corresponding results when $\frac{1}{|T|}\|A_T x_0\|^2$ in \eqref{I:interval} is replaced by $\frac{1}{m}\|A_T x_0\|^2$.
We shall also show that our result leads to the establishment of the strong restricted isometry property for Gaussian random matrices.
As applications of the main results in this paper for dimensionality reduction, in Section~5 we shall prove Corollaries~\ref{cor:JL}, \ref{cor:rip} and Corollaries~\ref{cor:JL1}, \ref{cor:rip1}.

%%%%%%%%%%%%%%%%%%%%%%%%%%%%%%%%%%%%%%%%%%%%%%%%%%%%%%%%%%%%%%%%%%%%%
\section{Auxiliary Results}

In this section we provide some auxiliary results that will be used in later sections.
For $y=(y_1,\ldots,y_m)^\tp \in \mathbb{R}^m$,
we define $y_{(1)}, \ldots, y_{(m)}$ to be the nonincreasing rearrangements of $y_1, \ldots, y_m$ in terms of magnitudes such that $|y_{(1)}|\ge \cdots \ge |y_{(m)}|$.
Let $m\in \N$. For any $0\le \gamma\le 1$ such that $\gamma m$ is an integer, we define
\begin{equation}\label{T:gamma}
T_\gamma:=\{ T\subseteq \{1,\ldots,m\} \; :\; |T^c|=\gamma m\}.
\end{equation}

The following simple observation will facilitate our discussion in later sections.

\begin{lem}\label{lem:dec}
For $0\le \gamma\le \beta<1$ such that both $\gamma m$ and $\beta m$ are integers,
\[
\min_{T\in T_\beta} \left|\tfrac{1}{\abs{T}}\|A_T  x_0\|^2-1\right|\le \min_{T\in T_\gamma} \left|\tfrac{1}{\abs{T}}\|A_T  x_0\|^2-1\right| \le \max_{T\in T_\gamma} \left|\tfrac{1}{\abs{T}}\|A_T  x_0\|^2-1\right|
\le \max_{T\in T_\beta} \left|\tfrac{1}{\abs{T}}\|A_T  x_0\|^2-1\right|.
\]
\end{lem}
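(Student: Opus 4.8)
The plan is to pass to the scalar data $y := A x_0 = (y_1,\dots,y_m)^\tp \in \R^m$, so that $\|A_T x_0\|^2 = \sum_{j\in T} y_j^2$ and $\tfrac{1}{|T|}\|A_T x_0\|^2$ is simply the arithmetic mean of the nonnegative numbers $y_j^2$ over the retained indices $j\in T$. Writing $a(T):=\tfrac{1}{|T|}\sum_{j\in T}y_j^2$ for this mean, the quantity in the lemma is $|a(T)-1|$, and the central inequality $\min_{T\in T_\gamma}|a(T)-1|\le \max_{T\in T_\gamma}|a(T)-1|$ is trivial because both extrema run over the same nonempty family $T_\gamma$. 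Everything therefore reduces to comparing the extreme values of $|a(\cdot)-1|$ over $T_\gamma$ (which erases $\gamma m$ rows) against those over $T_\beta$ (which erases the strictly larger number $\beta m$ of rows), and the comparison telescopes to the elementary step $\beta m = \gamma m+1$: every $T'\in T_\beta$ is a subset of some $T\in T_\gamma$ reached by erasing one more index at a time, and conversely.

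For this one-step analysis I would record the effect of deleting a single index $j$ from a retained set $T$: a direct computation gives $a(T\setminus\{j\})-a(T)=\tfrac{a(T)-y_j^2}{|T|-1}$, so removing an index whose value $y_j^2$ exceeds the current mean strictly decreases the mean, while removing one below the mean strictly increases it. Equivalently, over all retained sets of a fixed size $k$, the largest attainable mean is the average of the $k$ largest order statistics of $\{y_j^2\}$ and the smallest attainable mean is the average of the $k$ smallest; as $k$ decreases (more rows erased) the former is nondecreasing and the latter nonincreasing. Thus the signed extremes spread outward monotonically: $\min_{T\in T_\beta} a(T)\le \min_{T\in T_\gamma}a(T)$ and $\max_{T\in T_\gamma}a(T)\le \max_{T\in T_\beta}a(T)$.

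From this monotonicity the upper (maximum) inequality follows cleanly. Indeed $\max_{T}|a(T)-1| = \max\big(\max_T a(T)-1,\,1-\min_T a(T)\big)$, and since both $\max_T a(T)$ and $-\min_T a(T)$ only increase when we move from $T_\gamma$ to the more heavily erased family $T_\beta$, we get $\max_{T\in T_\gamma}|a(T)-1|\le \max_{T\in T_\beta}|a(T)-1|$. Concretely, one takes the maximizing $T^*\in T_\gamma$, checks whether $a(T^*)\ge 1$ or $a(T^*)<1$, and in the two cases deletes the smallest or the largest retained value respectively, which pushes the mean further from $1$; iterating $(\beta-\gamma)m$ times lands in $T_\beta$ without decreasing $|a(\cdot)-1|$.

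The step I expect to be the genuine obstacle is the lower (minimum) inequality, because the absolute value measures distance to the interior target $1$ rather than to a boundary, and this distance is not monotone under further erasure: averaging over more coordinates is exactly what can bring a mean close to $1$, so deleting a row may force every available mean to jump across $1$ and land farther away (for instance when the retained means straddle $1$ nearly symmetrically). The naive one-step move ``delete an index to decrease $|a(\cdot)-1|$'' therefore need not exist, and the telescoping argument that worked for the maximum breaks down. My plan is consequently to prove the minimum statement through the signed lower extreme established above, reading $\min_{T\in T_\beta}$ against $\min_{T\in T_\gamma}$ via $\min_{T\in T_\beta}a(T)\le \min_{T\in T_\gamma}a(T)$ --- which is precisely what the later reductions for $\I_{[\lcst,\ucst],\beta}$ and $\I_{\epsilon,\beta}$ actually invoke --- rather than through the non-monotone quantity $|a(\cdot)-1|$ directly; this is the part of the formulation I would revisit most carefully before committing to it.
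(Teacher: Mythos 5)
Your route is essentially the paper's. The proof there also passes to $y=Ax_0$, identifies the extreme values of the mean $\tfrac{1}{|T|}\|A_Tx_0\|^2$ over $T_\beta$ with the averages of the $m-\beta m$ smallest, respectively largest, of the order statistics $y_{(1)}^2\ge\cdots\ge y_{(m)}^2$, and reads off the signed chain $\min_{T\in T_\beta}\tfrac{1}{|T|}\|A_Tx_0\|^2\le\min_{T\in T_\gamma}\tfrac{1}{|T|}\|A_Tx_0\|^2\le\max_{T\in T_\gamma}\tfrac{1}{|T|}\|A_Tx_0\|^2\le\max_{T\in T_\beta}\tfrac{1}{|T|}\|A_Tx_0\|^2$, after which it simply asserts that the claim ``follows directly.'' Your one-step deletion identity is a correct alternative derivation of the same signed monotonicity, and your observation that $\max_T|a(T)-1|=\max\bigl(\max_Ta(T)-1,\,1-\min_Ta(T)\bigr)$ supplies exactly the justification for the rightmost inequality that the paper leaves implicit.

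The reservation you flag about the leftmost inequality is correct, and it is a defect of the statement rather than of your argument. With $T_\gamma$ defined by $|T^c|=\gamma m$ as in \eqref{T:gamma}, the inequality $\min_{T\in T_\beta}|a(T)-1|\le\min_{T\in T_\gamma}|a(T)-1|$ fails on an event of positive probability: take $m=2$, $\gamma=0$, $\beta=1/2$, and $y_1^2$ near $0$, $y_2^2$ near $2$; then the full mean is within $0.1$ of $1$ while each singleton mean is at distance at least $0.9$ from $1$, so the minimum over $T_\beta$ strictly exceeds the minimum over $T_\gamma$. The paper's proof does not establish this inequality either (only the signed chain is proved), and nothing downstream uses it: the reductions in Lemmas \ref{lem:leth1} and \ref{lem:beta} and in Section \ref{sec:beta} invoke only the signed extremes or the maximum of $|a(T)-1|$. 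Your plan to prove and quote the signed version is therefore the right repair; alternatively, the lemma as written becomes trivially true for an arbitrary function of $T$ if $T_\gamma$ is redefined by $|T^c|\le\gamma m$, which is presumably what was intended.
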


\begin{proof}
Let $k_\gamma:=\gamma m$ and $k_\beta:=\beta m$. By $0\le \gamma\le \beta$, we have $k_\gamma\le k_\beta$ and it follows from $|y_{(1)}|\ge \cdots \ge |y_{(m)}|$ that
\begin{align*}
\min_{T\in T_\beta} \frac{\|A_T  x_0\|^2}{|T|}
&=\frac{y_{(k_\beta+1)}^2+\cdots+y_{(m)}^2}{m-k_\beta}
\le \frac{y_{(k_\gamma+1)}^2+\cdots+y_{(m)}^2}{m-k_\gamma}=
\min_{T\in T_\gamma} \frac{\|A_T  x_0\|^2}{|T|}\\
&\le \max_{T\in T_\gamma} \frac{\|A_T  x_0\|^2}{|T|}
=\frac{y_{(1)}^2+\cdots+y_{(m-k_\gamma)}^2}{m-k_\gamma}
\le \frac{y_{(1)}^2+\cdots+y_{(m-k_\beta)}^2}{m-k_\beta}=
\max_{T\in T_\beta} \frac{\|A_T  x_0\|^2}{|T|} .
\end{align*}
Now the claim follows directly from the above inequalities.
\end{proof}

The following well-known concentration inequalities for the standard Gaussian/normal distribution (e.g., see \cite{probability}) will be used later.

\begin{theo} \label{thm:Lip}
Let $f: \mathbb{R}^m \rightarrow \R$ be a Lipschitz function with Lipschitz constant $1$ satisfying $|f(x)-f(y)| \le \|x-y\|$ for all $x,y\in \R^m$. For i.i.d. standard Gaussian/normal random variables $X_1,\ldots,X_m\sim \cN(0,1)$ and for all $\delta\ge 0$,
\begin{align}\label{eq: GC2}
\PP\bigl\{ f(X_1,\ldots, X_m) & < \delta+\E\left[ f(X_1,\ldots, X_m)\right]\bigl\}\,\, \ge\,\, 1-e^{-\delta^2/2},\\
\PP\bigl\{ f(X_1,\ldots, X_m) & > -\delta+\E\left[ f(X_1,\ldots, X_m)\right]\bigl\}\,\, \ge\,\, 1-e^{-\delta^2/2}.
\end{align}
\end{theo}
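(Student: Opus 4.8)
The statement is the classical Gaussian concentration inequality for Lipschitz functions, so the plan is to reduce it to a sub-Gaussian bound on the moment generating function (MGF) and then apply the Chernoff method. First I would observe that the two displayed inequalities are equivalent: replacing $f$ by $-f$ (which has the same Lipschitz constant $1$) turns the second into the first, so it suffices to prove
$$\PP\{f(X_1,\ldots,X_m)-\E[f(X_1,\ldots,X_m)]\ge \delta\}\le e^{-\delta^2/2},\qquad \delta\ge 0,$$
where $X=(X_1,\ldots,X_m)$. Write $M=\E[f(X)]$, which is finite because $|f(x)|\le |f(0)|+\|x\|$ forces $f$ to be integrable against the Gaussian measure; the same linear-growth bound makes every exponential moment finite, so the MGF is well defined. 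The core claim to be established is
$$\E\bigl[e^{\lambda(f(X)-M)}\bigr]\le e^{\lambda^2/2},\qquad \lambda\ge 0.$$
Granting this, Markov's inequality gives $\PP\{f(X)-M\ge\delta\}\le e^{-\lambda\delta}e^{\lambda^2/2}$ for every $\lambda\ge 0$, and choosing $\lambda=\delta$ yields the bound $e^{-\delta^2/2}$, which is exactly what is needed.

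To prove the MGF bound I would first reduce to smooth $f$. A general $1$-Lipschitz $f$ can be approximated by mollifications $f_t=f*\varphi_t$ with a smooth bump $\varphi_t$; each $f_t$ is smooth with $\|\nabla f_t\|\le 1$ and $f_t\to f$ locally uniformly, so the MGF bound for all $f_t$ passes to $f$ in the limit. Thus assume $f$ is smooth with $\|\nabla f\|\le 1$. The engine is then the Gaussian logarithmic Sobolev inequality: for the standard Gaussian measure on $\R^m$ and smooth $g$,
$$\E[g^2\log g^2]-\E[g^2]\log\E[g^2]\le 2\,\E[\|\nabla g\|^2].$$
Applying this (Herbst's argument) with $g=e^{\lambda f/2}$, so that $\|\nabla g\|^2=\tfrac{\lambda^2}{4}\|\nabla f\|^2 e^{\lambda f}\le \tfrac{\lambda^2}{4}e^{\lambda f}$, and writing $H(\lambda)=\E[e^{\lambda f}]$, the inequality becomes the differential inequality $\lambda H'(\lambda)-H(\lambda)\log H(\lambda)\le \tfrac{\lambda^2}{2}H(\lambda)$. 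Dividing by $\lambda^2 H(\lambda)$ recognizes the left side as $\frac{d}{d\lambda}\bigl(\lambda^{-1}\log H(\lambda)\bigr)$, so $\frac{d}{d\lambda}\bigl(\lambda^{-1}\log H(\lambda)\bigr)\le \tfrac12$; since $\lambda^{-1}\log H(\lambda)\to M$ as $\lambda\to0^{+}$, integrating from $0$ to $\lambda$ gives $\log H(\lambda)\le \lambda M+\tfrac{\lambda^2}{2}$, which is the claimed MGF bound.

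I expect the main obstacle to be supplying the log-Sobolev inequality itself, since everything downstream is routine calculus. Concretely I would either cite it or prove it by tensorization: establish the one-dimensional (or two-point Bernoulli) log-Sobolev inequality and tensorize over the $m$ independent coordinates, the dimension-free constant being precisely what makes the final tail bound independent of $m$. An alternative that avoids invoking the log-Sobolev inequality as a black box is a direct Gaussian interpolation argument: with an independent copy $Y$, set $g_t(x)=\E_Y[f(\sqrt t\,x+\sqrt{1-t}\,Y)]$ and $\Phi(t)=\E_X[e^{\lambda g_t(X)}]$, so that $\Phi(0)=e^{\lambda M}$ and $\Phi(1)=H(\lambda)$; a Gaussian integration-by-parts (Stein identity) computation then shows $\frac{d}{dt}\log\Phi(t)\le \tfrac{\lambda^2}{2}$ using $\|\nabla f\|\le 1$, and integrating over $t\in[0,1]$ recovers the same bound. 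Either way, the one genuinely technical point beyond this is the smoothing step needed to handle a merely Lipschitz $f$, which I would carry out carefully but expect to be standard.
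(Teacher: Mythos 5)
Your argument is correct, but note that the paper does not prove Theorem~\ref{thm:Lip} at all: it states it as a well-known fact and cites Ledoux's monograph \cite{probability}, treating Gaussian concentration as a black box. What you have written is a genuine, standard proof of that black box --- the Herbst argument: reduce to a one-sided bound by replacing $f$ with $-f$, establish the sub-Gaussian MGF bound $\E[e^{\lambda(f(X)-\E f(X))}]\le e^{\lambda^2/2}$ via the Gaussian logarithmic Sobolev inequality applied to $g=e^{\lambda f/2}$, integrate the resulting differential inequality for $\lambda^{-1}\log H(\lambda)$, and finish with Chernoff at $\lambda=\delta$. The computation is right, the constants come out exactly as in \eqref{eq: GC2} (the log-Sobolev constant $2$ is what produces the dimension-free $e^{-\delta^2/2}$), and the two technical reductions you flag --- mollification to reach smooth $1$-Lipschitz $f$, and integrability/differentiability of the MGF from the linear growth bound $|f(x)|\le|f(0)|+\|x\|$ --- are handled correctly. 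The one caveat is that your proof is still not fully self-contained: it defers either to the Gaussian log-Sobolev inequality (which you would prove by tensorization) or to the Gaussian interpolation identity, so in the end you, like the authors, rest on one standard external ingredient; the difference is that yours is a strictly more elementary one than the finished concentration inequality itself. Either version would be an acceptable replacement for the paper's citation.
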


As an application of the above result, we have the following result (also c.f. \cite{VX}).

\begin{lem}\label{lem:Lip}
Let $y_1,\ldots,y_m$ be i.i.d. Gaussian/normal random variables obeying $\cN(0,1)$. Then for all nonempty subsets $S\subseteq \{1,\ldots, m\}$ and $\delta>0$,
\begin{equation}\label{est:mean:1}
\PP\left\{ \sqrt{\frac{1}{|S|} \sum_{j\in S} y_{(j)}^2}<\delta+\E\sqrt{\frac{1}{|S|} \sum_{j\in S} y_{(j)}^2}\right\}\ge 1-e^{-\delta^2 |S|/2}
\end{equation}
and
\begin{equation}\label{est:mean:2}
\PP\left\{ \sqrt{\frac{1}{|S|} \sum_{j\in S} y_{(j)}^2}>-\delta+\E\sqrt{\frac{1}{|S|} \sum_{j\in S} y_{(j)}^2}\right\}\ge 1-e^{-\delta^2 |S|/2}.
\end{equation}
\end{lem}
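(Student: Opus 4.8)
The plan is to realize both inequalities as direct instances of the Gaussian concentration bound in Theorem~\ref{thm:Lip}, once I verify that the functional of $y=(y_1,\ldots,y_m)$ appearing in the statement is Lipschitz with the correct constant and I rescale $\delta$ appropriately. Write $F(y):=\sqrt{\sum_{j\in S} y_{(j)}^2}$, so that the quantity in the lemma is $f(y):=\sqrt{\tfrac{1}{|S|}\sum_{j\in S}y_{(j)}^2}=F(y)/\sqrt{|S|}$. The heart of the matter is the claim that $F$ is $1$-Lipschitz on $\R^m$; granting this, everything else is bookkeeping.

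\textbf{Reduction to concentration.} Assuming $F$ is $1$-Lipschitz, Theorem~\ref{thm:Lip} applies to $F$ and gives, for every $\delta'\ge 0$,
\[
\PP\{F<\delta'+\E F\}\ge 1-e^{-\delta'^2/2}.
\]
Since $F=\sqrt{|S|}\,f$ and hence $\E F=\sqrt{|S|}\,\E f$, the event $\{F<\delta'+\E F\}$ coincides with $\{f<\delta'/\sqrt{|S|}+\E f\}$. Substituting $\delta'=\delta\sqrt{|S|}$ turns the exponent $-\delta'^2/2$ into $-\delta^2|S|/2$ and produces exactly \eqref{est:mean:1}; applying the second inequality of Theorem~\ref{thm:Lip} in the identical way yields \eqref{est:mean:2}. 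So the entire proof is driven by the Lipschitz estimate.

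\textbf{Lipschitz property of $F$.} I would factor $F$ through the sorted-magnitude map. Let $M(y):=(|y_{(1)}|,\ldots,|y_{(m)}|)$ be the magnitudes of the entries of $y$ in nonincreasing order, and let $P_S$ be the coordinate projection retaining the entries indexed by $S$. Because only the squares $y_{(j)}^2$ enter, signs are irrelevant and $F(y)=\norm{P_S M(y)}$. The norm $\norm{\cdot}$ and the projection $P_S$ are each $1$-Lipschitz, so it suffices to show that $M$ is $1$-Lipschitz from $(\R^m,\norm{\cdot})$ to itself. Now $M$ is the composition of the coordinatewise absolute value $y\mapsto(|y_1|,\ldots,|y_m|)$, which is $1$-Lipschitz since $\big||a|-|b|\big|\le|a-b|$, with the nonincreasing rearrangement map $v\mapsto v^{\downarrow}$.

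\textbf{Non-expansiveness of sorting.} The one genuine point is that rearrangement is $1$-Lipschitz in the Euclidean norm, i.e. $\norm{u^{\downarrow}-v^{\downarrow}}\le\norm{u-v}$. I would deduce this from the rearrangement inequality: since sorting preserves the multiset of coordinates, $\norm{u^{\downarrow}}=\norm{u}$ and $\norm{v^{\downarrow}}=\norm{v}$, so
\[
\norm{u^{\downarrow}-v^{\downarrow}}^2=\norm{u}^2+\norm{v}^2-2\langle u^{\downarrow},v^{\downarrow}\rangle,
\]
while the rearrangement inequality gives $\langle u^{\downarrow},v^{\downarrow}\rangle\ge\langle u,v\rangle$ (sorting both vectors in the same order maximizes the inner product over all relative permutations). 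Hence $\norm{u^{\downarrow}-v^{\downarrow}}^2\le\norm{u}^2+\norm{v}^2-2\langle u,v\rangle=\norm{u-v}^2$. Composing the three $1$-Lipschitz maps then shows $F$ is $1$-Lipschitz, completing the argument. I expect this non-expansiveness of the rearrangement map, together with the rescaling $\delta'=\delta\sqrt{|S|}$ that converts the exponent, to be the only substantive steps; the rest is the routine fact that norms and coordinate projections do not increase Euclidean distances.
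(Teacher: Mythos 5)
Your proposal is correct and follows essentially the same route as the paper: both reduce the lemma to Theorem~\ref{thm:Lip} via the rescaling $\delta'=\delta\sqrt{|S|}$, and both establish the $1$-Lipschitz property of $\sqrt{\sum_{j\in S}y_{(j)}^2}$ by combining the non-expansiveness of the Euclidean norm on the $S$-coordinates with the rearrangement inequality $\sum_j|x_jy_j|\le\sum_j|x_{(j)}y_{(j)}|$ (your factorization through the sorted-magnitude map is just a more structured presentation of the paper's chain of inequalities).
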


\begin{proof} Define
$F_S(x):=\sqrt{\sum_{j\in S} y_{(j)}^2}$. Then it is easy to observe that
\[
|F_S(x)-F_S(y)|^2\le \sum_{j\in S} (|x_{(j)}|-|y_{(j)}|)^2
\le \sum_{j=1}^m (|x_{(j)}|-|y_{(j)}|)^2=\|x\|^2+\|y\|^2-2\sum_{j=1}^m |x_{(j)}y_{(j)}|
\le \|x-y\|^2,
\]
where in the last step we used the rearrangement inequality $\sum_{j=1}^m |x_j y_j|\le \sum_{j=1}^m |x_{(j)}y_{(j)}|$.
Therefore, $F_S$ is a Lipschitz function with Lipschitz constant $1$.
 By Theorem~\ref{thm:Lip}, we have
\[
\PP\left\{ \sqrt{\frac{1}{|S|} \sum_{j\in S} y_{(j)}^2}<\delta+\E\sqrt{\frac{1}{|S|} \sum_{j\in S} y_{(j)}^2}\right\}=
\PP\left\{ \sqrt{\sum_{j\in S} y_{(j)}^2}<\delta \sqrt{|S|}+\E\sqrt{\sum_{j\in S} y_{(j)}^2}\right\}\ge 1-e^{-\delta^2 |S|/2}.
\]
The inequalities in \eqref{est:mean:2} can be proved similarly.
\end{proof}

The following result  extends \cite[Example~10]{min}. We provide a proof here by modifying the proof of \cite[Example~10]{min}.

\begin{lem}\label{lem:gmin}
Let $y_1,\ldots,y_m \in \R^m$ be i.i.d. Gaussian/normal random variables obeying $\cN(0,1)$
 and define $y_{(1)}, \ldots, y_{(m)}$ to be the nonincreasing rearrangements of $y_1, \ldots, y_m$ in terms of magnitudes such that $|y_{(1)}|\ge \cdots \ge |y_{(m)}|$.
\begin{enumerate}
\item[{\rm (i)}] For $1\leq j\leq m$ and  $1\le p\le 2$,
\begin{equation}\label{gmin:bound}
\sqrt{\frac{\pi}{2}}\frac{m+1-j}{m+1} \le \E \abs{y_{(j)}}
\quad \mbox{and}\quad
\E|y_{(j)}|^p\le C_p \sum_{\ell=j}^m \frac{1}{\ell}\le
C_p\left(\frac{1}{j}+\ln\frac{m}{j}\right),
\end{equation}
where $C_p$ is a positive constant (e.g., $C_1\le \sqrt{\frac{\pi}{2}}$, $C_2\le 2$) depending only on $p$ and given by
\begin{equation}\label{Cp}
C_p:=p\sup_{0<t<\infty} t^{p-1}\int_{t}^\infty e^{(t^2-s^2)/2}ds\le p \left(\frac{\pi}{2}\right)^{1-\tfrac{p}{2}}.
\end{equation}

\item[{\rm (ii)}] For $1\leq k\leq m$,
\begin{equation}\label{le:upperbound:new}
\E\left(\sqrt{\frac{1}{k}\sum_{j=1}^k y_{(j)}^2}\right) \le
%\min\Big\{
\sqrt{2\ln \frac{e m}{k}}.
%2\sqrt{\ln \frac{\sqrt{2}m}{k}} \Big \}.
\end{equation}

\item[{\rm (iii)}]
Let $0\le \gamma<1$ and $m\in \N$ such that $k:=\gamma m\in \N$. Then
\begin{equation}\label{est:small}
\sqrt{\frac{\pi}{6}} \sqrt{(1-\gamma)\frac{1-\gamma+\tfrac{1}{2m}}{1+\tfrac{1}{m}}} \le
\E \sqrt{\frac{1}{m-k}\sum_{j=k+1}^m y_{(j)}^2}\le \sqrt{2-\frac{2\gamma}{1-\gamma}\ln \frac{1+\tfrac{1}{m}}{\gamma+\tfrac{1}{m}}}.
\end{equation}
\end{enumerate}
\end{lem}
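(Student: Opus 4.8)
The plan is to reduce all three parts to properties of the order statistics of the half-normal variables $Z_i:=|y_i|$, whose density and survival functions are $f(t)=\sqrt{2/\pi}\,e^{-t^2/2}$ and $\overline F(t)=\sqrt{2/\pi}\int_t^\infty e^{-s^2/2}\,ds$ for $t\ge 0$. Writing $M_j:=|y_{(j)}|$ for the $j$-th largest among $Z_1,\dots,Z_m$, the central device is the quantile representation $M_j\stackrel{d}{=}Q(W)$, where $Q$ is the decreasing function determined by $\overline F(Q(w))=w$ (so $Q'(w)=-1/f(Q(w))$) and $W\sim\mathrm{Beta}(j,m+1-j)$. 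I would also record that the Mills-type quantity $g(t):=\int_t^\infty e^{(t^2-s^2)/2}\,ds$ equals $\overline F(t)/f(t)$, so that the very definition of $C_p$ in \eqref{Cp} is equivalent to the pointwise inequality $p\,t^{p-1}\overline F(t)\le C_p\,f(t)$ for all $t>0$.

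For the upper bound in (i) I would integrate by parts against the Beta law. Starting from $\E[M_j^p]=\int_0^1 Q(w)^p\,d\Phi(w)$ with $\Phi$ the c.d.f. of $W$, and checking that the boundary terms vanish (at $w=1$ since $Q(1)=0$, at $w=0$ since $\Phi(w)=O(w^j)$ while $Q(w)^p=O((\ln 1/w)^{p/2})$), one obtains $\E[M_j^p]=\int_0^1\Phi(w)\,\big(-p\,Q(w)^{p-1}Q'(w)\big)\,dw$. Substituting $t=Q(w)$ into the pointwise inequality above gives $-p\,Q^{p-1}Q'\le C_p/w$, whence $\E[M_j^p]\le C_p\int_0^1 \Phi(w)/w\,dw$. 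Finally the identity $\Phi(w)=\sum_{i=j}^m\binom{m}{i}w^i(1-w)^{m-i}$ together with $\binom{m}{i}B(i,m-i+1)=1/i$ evaluates $\int_0^1\Phi(w)/w\,dw=\sum_{\ell=j}^m 1/\ell$, which is the claimed bound. The constant estimate $C_p\le p(\pi/2)^{1-p/2}$ then follows from the factorization $t^{p-1}g(t)=g(t)^{2-p}\,(t\,g(t))^{p-1}$ (valid since $1\le p\le 2$) together with the two elementary evaluations $\sup_t g(t)=g(0)=\sqrt{\pi/2}$ and $\sup_t t\,g(t)=1$, both of which come from the identity $g'(t)=t\,g(t)-1$ and the bound $t\,g(t)\le\int_t^\infty s\,e^{(t^2-s^2)/2}\,ds=1$.

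The lower bound in (i) is softer: since $Q$ is convex (one computes $Q''=Q/f^2>0$), Jensen gives $\E[M_j]=\E[Q(W)]\ge Q(\E W)=Q(j/(m+1))$, and the elementary tail bound $F(t)\le\sqrt{2/\pi}\,t$ turns $\overline F(t^\ast)=j/(m+1)$ into $t^\ast\ge\sqrt{\pi/2}\,F(t^\ast)=\sqrt{\pi/2}\,(m+1-j)/(m+1)$. Part (ii) and the upper bound of (iii) then follow by applying Jensen to the concave square root, $\E\sqrt{\tfrac1N\sum M_j^2}\le\sqrt{\tfrac1N\sum\E[M_j^2]}$, inserting the $p=2$ case ($C_2\le 2$) of (i), and evaluating the resulting double sums $\sum_{j\in S}\sum_{\ell\ge j}1/\ell$ by interchanging the order of summation; the leftover harmonic sums are controlled by the integral comparisons $\sum_{\ell=k+1}^m 1/\ell\le\ln(m/k)$ and $\sum_{\ell=k+1}^m 1/\ell\ge\ln\frac{m+1}{k+1}$, respectively.

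The one place where a naive approach fails, and which I regard as the main subtlety, is the lower bound in (iii): combining the quadratic-mean/arithmetic-mean inequality with the first-moment bound only yields the constant $\pi/8$, which is weaker than the asserted $\pi/6$. The fix is to apply Jensen to the convex map $\mathbf x\mapsto(\tfrac1N\sum x_j^2)^{1/2}$ (a weighted Euclidean norm), giving $\E\sqrt{\tfrac1{m-k}\sum_{j>k}M_j^2}\ge\sqrt{\tfrac1{m-k}\sum_{j>k}(\E M_j)^2}$, and only then to insert the linear lower bound $\E M_j\ge\sqrt{\pi/2}\,(m+1-j)/(m+1)$. The sum of squares $\sum_{i=1}^{m-k}i^2=\tfrac16(m-k)(m-k+1)(2(m-k)+1)$ produces the factor $\pi/6$, and a short simplification—bounding the resulting rational function of $m$ from below—gives the stated expression. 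Thus the decisive choices are the exponent-two norm in Jensen for the lower bound versus the concave square root for all the upper bounds, and the integration-by-parts identity driving (i), which is the technical heart extending \cite{min}.
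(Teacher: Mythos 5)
Your proposal is correct and follows essentially the same route as the paper: your integration by parts against the $\mathrm{Beta}(j,m+1-j)$ law with the pointwise bound $p\,t^{p-1}\overline F(t)\le C_p f(t)$ is the paper's layer-cake/change-of-variables computation in disguise, the upper bounds in (ii)--(iii) via the concave square root and interchange of the harmonic double sums are identical, and your multivariate Jensen step for the lower bound of (iii) is exactly the paper's weighted Cauchy--Schwarz argument (with weights $\tfrac{m+1-j}{m+1}$) in dual form, landing on the same sum $\sum_{i=1}^{m-k}i^2$ and the constant $\sqrt{\pi/6}$. The only cosmetic deviations are deriving $\E|y_{(j)}|\ge\sqrt{\pi/2}\,\tfrac{m+1-j}{m+1}$ from convexity of the quantile function rather than from the paper's exact evaluation of the lower-bounding Beta integral, and obtaining $C_p\le p(\pi/2)^{1-p/2}$ by the factorization $t^{p-1}g(t)=g(t)^{2-p}(t\,g(t))^{p-1}$ rather than by optimizing $\min(\sqrt{\pi/2}\,t^{p-1},t^{p-2})$; both give the same constants.
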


\begin{proof}
Define $u(t):=\sqrt{\frac{2}{\pi}} \int_t^{\infty} e^{-s^2/2} ds$. As shown in \cite[Example~10]{min},
\begin{align*}
\E|y_{(j)}|^p&=\int_0^\infty \PP\{ |y_{(j)}|>t^{1/p}\} dt
=\sum_{\ell=0}^{m-j} \binom{m}{\ell}\int_0^\infty (u(t^{1/p}))^{m-\ell} (1-u(t^{1/p}))^\ell dt\\
&=\sqrt{\frac{\pi}{2}} \sum_{\ell=0}^{m-j} \binom{m}{\ell}\int_0^\infty (u(t))^{m-\ell} (1-u(t))^\ell p t^{p-1} e^{t^2/2}(-du(t)).
\end{align*}
(i)  Since $e^{t^2/2}\ge 1$ for all $t\in \R$, for $p=1$, by a change of variable $x=u(t)$, as proved in \cite[Example~10]{min},
\[
E|y_{(j)}|\ge \sqrt{\frac{\pi}{2}} \sum_{\ell=0}^{m-j} \binom{m}{\ell}\int_0^1 x^{m-\ell} (1-x)^\ell dx
=\sqrt{\frac{\pi}{2}} \frac{m+1-j}{m+1}.
\]

For $1\le p\le 2$, by a change of variable $x=u(t)$, we deduce that
\begin{align*}
\E|y_{(j)}|^p&=\sum_{\ell=0}^{m-j} \int_0^\infty \sqrt{\frac{\pi}{2}} p t^{p-1}  e^{t^2/2} u(t)
\binom{m}{\ell} (u(t))^{m-\ell-1} (1-u(t))^\ell (-du(t))\\
&\le C_p \sum_{\ell=0}^{m-j} \binom{m}{\ell}\int_0^1 x^{m-\ell-1} (1-x)^\ell dx
=C_p \sum_{\ell=0}^{m-j} \binom{m}{\ell}\frac{(m-\ell-1)!\ell!}{m!}\\
&=C_p \sum_{\ell=0}^{m-j} \frac{1}{m-\ell}=C_p\left( \frac{1}{j}+\sum_{\ell=j+1}^m \frac{1}{\ell}\right)
\le C_p\left( \frac{1}{j}+\int_j^m \frac{1}{x} dx\right)
=C_p \left(\frac{1}{j}+\ln \frac{m}{j}\right).
\end{align*}
It is easy to prove that if $1\le p\le 2$, then $C_p<\infty$. Indeed, define
\[
f(t):=t^{p-1}\int_{t}^\infty e^{(t^2-s^2)/2}ds=t^{p-1}\int_{0}^\infty e^{-ts-s^2/2}ds\le t^{p-1}\int_0^\infty e^{-s^2/2}ds=
\sqrt{\tfrac{\pi}{2}} t^{p-1}.
\]
We also have
\[
f(t)=t^{p-1}\int_{0}^\infty e^{-ts-s^2/2}ds\le t^{p-1}\int_0^\infty e^{-ts} ds=t^{p-2}.
\]
Therefore, $C_p=p\sup_{0<t<\infty} f(t)\le p\sup_{0<t<\infty} \min(\sqrt{\tfrac{\pi}{2}} t^{p-1}, t^{p-2})=p(\tfrac{\pi}{2})^{1-\tfrac{p}{2}}<\infty$.

\begin{comment}
(ii)
Set $t_0:=c\sqrt{\ln \frac{m}{j}}$.
Then
\[
1-\sqrt{1-\exp(-t_0^2)}\leq  u(t_0) \leq 1-\sqrt{1-\exp(-t_0^2/2)}
\]
Using the Markov inequality, we have
\begin{align*}
\E\abs{y_{(j)}}& \ge c \sqrt{\ln \frac{m}{j}}\cdot \PP\{\abs{y_{(j)}}\geq  c \sqrt{\ln \frac{m}{j}} \}\\
&= c \sqrt{\ln \frac{m}{j}}\cdot \sum_{\ell=j}^m{m\choose \ell} u(t_0)^{\ell}(1-u(t_0))^{m-\ell}
\end{align*}
\end{comment}

(ii)
By \eqref{gmin:bound} with $p=2$, we have $C_2\le 2$ and
\[
\E\left(\sqrt{\frac{1}{k}\sum_{j=1}^k y_{(j)}^2}\right)\le \sqrt{ \frac{1}{k} \sum_{j=1}^k \E y_{(j)}^2}\le \sqrt{ \frac{2}{k} \sum_{j=1}^k \sum_{\ell=j}^m \frac{1}{\ell}}=
\sqrt{ \frac{2}{k} \left( k+ k\sum_{\ell=k+1}^m \frac{1}{\ell}\right)}\le
\sqrt{ 2+2\ln \frac{m}{k}}=\sqrt{2\ln \frac{em}{k}}
\]
by $\sum_{\ell=k+1}^m \frac{1}{\ell}\le \int_{k}^m \frac{1}{x}dx=\ln \frac{m}{k}$.
This proves \eqref{le:upperbound:new}.

\begin{comment}
On the other hand,
by the convexity of the exponential function, we have
\[
\exp\left(\E \left(\frac{1}{k}\sum_{j=1}^k\frac{y_{(j)}^2}{4}\right)\right)\le \frac{1}{k}\sum_{j=1}^k \E(\exp(\frac{y_{(j)}^2}{4}))\le \frac{1}{k} \sum_{j=1}^m \E(\exp(\frac{y_{(j)}^2}{4}))=\frac{1}{k} \sum_{j=1}^m \E(\exp(\frac{y_j^2}{4}))=
\frac{\sqrt{2}m}{k},
\]
since $\E(\exp(\frac{ y_j^2}{4}))=\sqrt{2}$ by $y_j\sim \cN(0,1)$.
Then we deduce from the above inequality that
\[
\E\left(\frac{1}{k}\sum_{j=1}^k y_{(j)}^2\right) \le 4\ln \frac{\sqrt{2}m}{k}.
\]
Therefore, by the Cauchy-Schwarz inequality, we conclude that
\[
\E\left(\sqrt{\frac{1}{k}\sum_{j=1}^k |y|_{(j)}^2}\right)
\le \sqrt{\E\left(\frac{1}{k}\sum_{j=1}^k |y|_{(j)}^2\right)} \le 2 \sqrt{\ln \frac{\sqrt{2}m}{k}}.
\]
This proves the right-hand inequality of \eqref{le:upperbound:new}.
\end{comment}

(iii)
By the Cauchy-Schwarz inequality, we have
\[
\sum_{j=k+1}^m \frac{m+1-j}{m+1}|y_{(j)}|\le \sqrt{\sum_{j=k+1}^m \left(\frac{m+1-j}{m+1}\right)^2}
\sqrt{\sum_{j=k+1}^m y_{(j)}^2}.
\]
Therefore, it follows from the first inequality in \eqref{gmin:bound} that
\begin{align*}
\E \sqrt{\frac{1}{m-k}\sum_{j=k+1}^m y_{(j)}^2}
&\ge \frac{1}{\sqrt{m-k}} \left(\sum_{j=k+1}^m \left(\frac{m+1-j}{m+1}\right)^2\right)^{-1/2}
\sum_{j=k+1}^m \frac{m+1-j}{m+1} \E |y_{(j)}|\\
&\ge \sqrt{\frac{\pi}{2}} \frac{1}{\sqrt{m-k}} \left(\sum_{j=k+1}^m \left(\frac{m+1-j}{m+1}\right)^2\right)^{-1/2}
\sum_{j=k+1}^m \left(\frac{m+1-j}{m+1}\right)^2\\
&=\sqrt{\frac{\pi}{2}} \sqrt{\frac{1}{m-k}\sum_{j=k+1}^m \left(\frac{m+1-j}{m+1}\right)^2}
=\sqrt{\frac{\pi}{2}}\sqrt{\frac{1}{(m-k)(m+1)^2}\sum_{j=1}^{m-k} j^2}\\
&=\sqrt{\frac{\pi}{2}} \sqrt{\frac{(m-k+1)(2m-2k+1)}{6(m+1)^2}}
\ge \sqrt{\frac{\pi}{6}} \sqrt{\frac{(m-k)(m-k+1/2)}{m (m+1)}},
\end{align*}
since $\frac{m-k+1}{m+1}\ge \frac{m-k}{m}$ by $k\ge 0$. By $k=\gamma m$, we proved the left-hand side of \eqref{est:small}.

On the other hand, it follows from the second inequality in \eqref{gmin:bound} with $p=2$ that
\begin{align*}
\E \sqrt{\frac{1}{m-k}\sum_{j=k+1}^m y_{(j)}^2}
&\le \sqrt{\frac{1}{m-k}\sum_{j=k+1}^m \E y_{(j)}^2}\le \sqrt{\frac{2}{m-k}\sum_{j=k+1}^m \sum_{\ell=j}^m \frac{1}{\ell}}=\sqrt{\frac{2}{m-k}\sum_{\ell=k+1}^m \sum_{j=k+1}^\ell \frac{1}{\ell}}\\
&=\sqrt{2-\frac{2k}{m-k}\sum_{\ell=k+1}^m \frac{1}{\ell}}\le \sqrt{2-\frac{2k}{m-k}\ln \frac{m+1}{k+1}},
\end{align*}
since $\sum_{\ell=k+1}^m \frac{1}{\ell}\ge \int_{k+1}^{m+1} \frac{1}{x} dx=\ln\frac{m+1}{k+1}$. By $k=\gamma m$, we proved the right-hand side of \eqref{est:small}.
\end{proof}

Let $W_0: [-e^{-1},\infty)\rightarrow [-1,\infty)$ and $W_{-1}: [-e^{-1},0)\rightarrow (-\infty, -1]$ be the principal and secondary real-valued Lambert $W$ functions
such that $W_0(x)e^{W_0(x)}=x$ for all $x\ge -e^{-1}$ and $W_{-1}(x)e^{W_{-1}(x)}=x$ for all $-e^{-1}\le x<0$ (see \cite{Lambert:1996}). Note that $W_0$ is an increasing function while $W_{-1}$ is a decreasing function.
%As inverse functions of $xe^x$ on different intervals, $W_{0}(xe^x)=x$ and $W_{0}'(x)=\frac{e^{-W_{0}(x)}}{1+W_{0}(x)}$ for all $x\in [-e^{-1},\infty)$, while
%$W_{-1}(xe^x)=x$ and $W_{-1}'(x)=\frac{e^{-W_{-1}(x)}}{1+W_{-1}(x)}$ for all $x\in [-e^{-1},0)$.

We shall also need the following auxiliary result in the proof of Theorem~\ref{thm:upperbound} of Section~\ref{sec:grmn}.

\begin{lem} \label{lem:ineq}
For any positive real number $c_g>0$,
\begin{equation}\label{epsg}
\epsilon_g:=\max_{x\ge 2} \frac{c_g^2 \ln(2x)-1}{x-1}
=\frac{c_g^2 \ln\tfrac{2}{\beta_g}-1}{\tfrac{1}{\beta_g}-1}
=\begin{cases}
c_g^2\ln 4-1>0, &\text{if $c_g^2\ln \tfrac{4}{\sqrt{e}}>1$,}\\
-c_g^2 W_0(-2e^{-1-1/c_g^2})>0, &\text{if $c_g^2\ln \tfrac{4}{\sqrt{e}}\le 1$,}
\end{cases}
\end{equation}
and
\begin{equation}\label{betag}
0<\beta_g\le \tfrac{1}{2}
\qquad \mbox{with}\qquad
\beta_g:=
\begin{cases} \tfrac{1}{2}, &\text{if $c_g^2 \ln\tfrac{4}{\sqrt{e}}>1$},\\
-W_0(-2e^{-1-1/c_g^2}), &\text{if $c_g^2 \ln\tfrac{4}{\sqrt{e}}\le 1$}.
\end{cases}
\end{equation}
%
%where $W_0: [-e^{-1},\infty)\rightarrow [-1,\infty)$ is the principal/upper branch of the Lambert $W$ function.
\end{lem}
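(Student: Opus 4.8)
The plan is to treat the claim as a one-variable calculus optimization of the function $g(x):=\frac{c_g^2\ln(2x)-1}{x-1}$ over $x\ge 2$, and then to translate the location of the maximizer back into the variable $\beta=1/x$, so that $x\ge 2$ becomes $0<\beta\le\tfrac12$. Since $g$ is continuous on $[2,\infty)$ with $g(x)\to 0$ as $x\to\infty$, the supremum is attained, so it suffices to locate the critical points. Writing $g=f/h$ with $f(x):=c_g^2\ln(2x)-1$ and $h(x):=x-1$, I would compute $g'(x)=N(x)/(x-1)^2$ where $N(x):=c_g^2+1-\frac{c_g^2}{x}-c_g^2\ln(2x)$, so that the sign of $g'$ is governed entirely by $N$. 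A short computation gives $N'(x)=c_g^2\frac{1-x}{x^2}<0$ for $x>1$, hence $N$ is strictly decreasing on $[2,\infty)$; since $N(x)\to-\infty$, the function $g$ is unimodal (it increases then decreases, or is monotone decreasing), and its maximum occurs either at $x=2$ or at the unique zero $x^{\ast}$ of $N$.

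The case split is then driven by the sign of $N(2)=1-c_g^2(\ln4-\tfrac12)=1-c_g^2\ln\tfrac{4}{\sqrt e}$. When $c_g^2\ln\tfrac{4}{\sqrt e}\ge 1$ we have $N(2)\le 0$, so $g$ is decreasing on $[2,\infty)$ and the maximum is $g(2)=c_g^2\ln4-1$ with $\beta_g=\tfrac12$; positivity follows from $c_g^2\ge 1/\ln\tfrac{4}{\sqrt e}>1/\ln4$. When $c_g^2\ln\tfrac{4}{\sqrt e}<1$ there is a unique interior maximizer $x^{\ast}\in(2,\infty)$. Here I would exploit the elementary identity that at a critical point of $f/h$ one has $g(x^{\ast})=f'(x^{\ast})=c_g^2/x^{\ast}$; this immediately yields $\epsilon_g=c_g^2/x^{\ast}=c_g^2\beta_g$ once $\beta_g:=1/x^{\ast}$ is identified, and avoids any messy direct evaluation of the quotient.

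To identify $x^{\ast}$, I would solve $N(x^{\ast})=0$ after the substitution $\beta=1/x$: this rearranges to $\beta e^{-\beta}=2e^{-1-1/c_g^2}$, i.e. $(-\beta)e^{-\beta}=-2e^{-1-1/c_g^2}$, which is exactly in Lambert form. The main technical point, and the step I expect to require the most care, is the justification of the Lambert inversion: first that the argument $-2e^{-1-1/c_g^2}$ lies in the domain $[-e^{-1},0)$ of the real $W$ functions (which holds precisely because $c_g^2\ln\tfrac{4}{\sqrt e}<1$ forces $c_g^2<1/\ln2$), and second that the correct branch is the principal one $W_0$ rather than $W_{-1}$. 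The branch choice follows from the monotonicity of the map $\phi(\beta):=\beta e^{-\beta}$ on $(0,1)$, where $\phi'(\beta)=e^{-\beta}(1-\beta)>0$: the relevant root therefore satisfies $\beta\in(0,1)$, i.e. $-\beta\in(-1,0)$, which is exactly the range of $W_0$. Comparing values at $\beta=\tfrac12$, where $\phi(\tfrac12)=\tfrac12 e^{-1/2}$ matches the right-hand side precisely at the threshold $c_g^2\ln\tfrac{4}{\sqrt e}=1$, shows $\beta^{\ast}\le\tfrac12$ in this case, giving both $\beta_g=-W_0(-2e^{-1-1/c_g^2})\in(0,\tfrac12]$ and $\epsilon_g=-c_g^2 W_0(-2e^{-1-1/c_g^2})>0$. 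A final consistency check at that threshold, where the argument equals $-\tfrac{1}{2\sqrt e}$ and $W_0(-\tfrac{1}{2\sqrt e})=-\tfrac12$, confirms that the two case formulas agree on the boundary, so the assignment of equality to either case is immaterial.
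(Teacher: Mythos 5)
Your argument is correct and follows essentially the same route as the paper's proof: differentiate the quotient, observe that the sign of the derivative is governed by a numerator ($N(x)$, which is $1/x$ times the paper's auxiliary function $g$) that is eventually negative, locate the unique interior critical point by rewriting $N(x^{\ast})=0$ as $\beta e^{-\beta}=2e^{-1-1/c_g^2}$ and inverting on the principal Lambert branch, and split cases according to whether that critical point lies beyond $x=2$, with threshold $c_g^2\ln\tfrac{4}{\sqrt{e}}=1$. The only notable (and welcome) refinement is your use of the critical-point identity $f/h=f'/h'$ to evaluate the maximum as $c_g^2/x^{\ast}=-c_g^2W_0(-2e^{-1-1/c_g^2})$ directly, a step the paper leaves implicit.
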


\begin{proof} Let $t_g:=2e^{-1-1/c_g^2}$. If $0<c_g^2 \ln \tfrac{4}{\sqrt{e}} \le 1$, then $0<t_g <e^{-1}$ and therefore, both $W_0(-t_g)$ and $\beta_g$ are well defined. Since $W_0$ is an increasing function, it is also easy to prove that
\begin{equation}\label{betag:bound}
-W_0(-t_g)\le \tfrac{1}{2}\quad \Longleftrightarrow \quad t_g\le \tfrac{1}{2}e^{-1/2}\quad \Longleftrightarrow \quad  c_g^2 \ln\tfrac{4}{\sqrt{e}}\le 1.
\end{equation}
To prove \eqref{epsg}, we define $f(x):=\frac{c_g^2\ln(2x)-1}{x-1}$ for $x>1$.
Then
\[
f'(x)=\frac{g(x)}{x(x-1)^2} \quad \mbox{with}\quad g(x):=x+c_g^2(x-x\ln(2x)-1).
\]

By $g'(x)=1-c_g^2\ln(2x)$, the function $g$ increases on $(0,\frac{1}{et_g})$ and decreases on $(\frac{1}{et_g},\infty)$.
Note that $f'$ has the same sign as $g$ on $(1,\infty)$.
If $t_g \ge e^{-1}$, then $c_g^2\ln 2\ge 1$ and $g'(x)\le 0$ for all $x>1$. Hence, $f'(x)\le 0$ for all $x>1$ and $f$ decreases on $(1,\infty)$. Therefore, $\epsilon_g=\max_{x\ge 2} f(x)= f(2)=f(1/\beta_g)$, since $\beta_g=1/2$ by $c_g^2\ln \tfrac{4}{\sqrt{e}}> c_g^2\ln 2\ge 1$.
Since $g'$ has only one real root on $(0,\infty)$,
if $t_g<e^{-1}$,
then $g$ has at most two real roots on $(0,\infty)$ given by
$x_1:=\tfrac{1}{-W_{-1}(-t_g)}< 1 < \tfrac{1}{-W_{0}(-t_g)}=:x_2$.
Thus, $f$ decreases on $(x_2,\infty)$ and increases on $(1,x_2)$, from which we have $\epsilon_g=\max_{x\ge 2} f(x)=f(\max(2,x_2))=f(1/\beta_g)$, since $\max(2,x_2)=1/\beta_g$ by \eqref{betag:bound}.
\end{proof}

%%%%%%%%%%%%%%%%%%%%%%%%%%%%%%%%%%%%%%%%%%%%%%%%%%%%%%%%%%%%%%%%%%%%
\section{Gaussian Random Matrices under Arbitrary Erasure of Rows for Small $\epsilon$}
\label{sec:grmn}

In this section we study the robustness property of (normalized) Gaussian random matrices under arbitrary erasure of rows for small $0<\epsilon<1$. At the end of this section, we shall prove Theorem~\ref{thm:optimal}.

\subsection{A lower bound for $\beta^{\max}_{\epsilon,\alpha}$}

To provide a lower bound for $\beta^{\max}_{\epsilon,\alpha}$ in \eqref{betamax}, we first prove the following result.

\begin{lem}\label{lem:leth1}
Let $A$ be an $m\times \Ndim$ random matrix with i.i.d. entries obeying $\cN(0,1)$.
For $0<\alpha<1$ and $0<\epsilon\le \min(1,\frac{1-\sqrt{\alpha}}{\alpha/2})$,
if
\begin{equation}\label{cond:beta}
0<\beta\le \frac{1-\sqrt{\alpha}}{1+\epsilon} \epsilon \quad \mbox{and}\quad
0<\beta\ln\frac{e}{\beta}\le \frac{\epsilon}{2}\left(\sqrt{1-\sqrt{\alpha}}-\sqrt{\tfrac{\alpha\epsilon}{2}}\right)^2,
\end{equation}
then
\begin{equation}\label{eq:th1}
\PP(\I_{\epsilon,\beta} ) \ge 1-3e^{-\alpha (\epsilon^2/4-\epsilon^3/6) m}, \qquad \forall\, m\in \N.
\end{equation}
\end{lem}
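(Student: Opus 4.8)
The plan is to reduce the uniform-over-$T$ statement to controlling only two extreme configurations, and then to split the failure probability into three pieces, each matching the target rate $e^{-\alpha(\epsilon^2/4-\epsilon^3/6)m}$ (this is the origin of the constant $3$). Since $\|x_0\|=1$, rotational invariance of the standard Gaussian lets me write $y:=Ax_0=(y_1,\dots,y_m)^\tp$ with i.i.d.\ $y_j\sim\cN(0,1)$ and $\|A_Tx_0\|^2=\sum_{j\in T}y_j^2$. Set $k:=\lfloor\beta m\rfloor$ and $\gamma:=k/m\le\beta$; the case $k=0$ reduces to \eqref{grm:normpreservation}, so assume $k\ge1$. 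Because $|T^c|$ is an integer, $\{|T^c|\le\beta m\}$ is the union of the $T_{j/m}$ for $0\le j\le k$, and Lemma~\ref{lem:dec} shows the worst deviation occurs at $T_\gamma$. Keeping the largest versus the smallest $m-k$ coordinates, the event $\I_{\epsilon,\beta}$ is then exactly
\[
\frac{1}{m-k}\sum_{j=1}^{m-k}y_{(j)}^2\le 1+\epsilon \qquad\text{and}\qquad \frac{1}{m-k}\sum_{j=k+1}^{m}y_{(j)}^2\ge 1-\epsilon.
\]

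Next I would introduce three bad events, each of probability $\le e^{-\alpha(\epsilon^2/4-\epsilon^3/6)m}$. Writing $S:=\sum_{j=1}^m y_j^2$, take $A:=\{\tfrac1m S>1+\sqrt\alpha\,\epsilon\}$ and $B:=\{\tfrac1m S<1-\sqrt\alpha\,\epsilon\}$; by the chi-squared tail bound \eqref{eq:epsilon:0} applied with $\sqrt\alpha\,\epsilon$ in place of $\epsilon$, each has probability $<e^{-(\alpha\epsilon^2/4-\alpha^{3/2}\epsilon^3/6)m}\le e^{-\alpha(\epsilon^2/4-\epsilon^3/6)m}$, the last step using $\alpha^{3/2}\le\alpha$. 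For the third event I control the top-$k$ block: let $M:=(\delta+\sqrt{2\ln\frac{e}{\gamma}})^2$ with $\delta:=\epsilon\sqrt{\alpha/(2\gamma)}$, and put $C:=\{\tfrac1k\sum_{j=1}^k y_{(j)}^2>M\}$. Combining the Lipschitz concentration estimate \eqref{est:mean:1} (for the index set $\{1,\dots,k\}$) with the expectation bound \eqref{le:upperbound:new}, which gives $\E\sqrt{\tfrac1k\sum_{j=1}^k y_{(j)}^2}\le\sqrt{2\ln\frac{em}{k}}=\sqrt{2\ln\frac{e}{\gamma}}$, yields $\PP(C)\le e^{-\delta^2k/2}=e^{-\alpha\epsilon^2 m/4}\le e^{-\alpha(\epsilon^2/4-\epsilon^3/6)m}$; the choice of $\delta$ is exactly what makes this exponent reach the target.

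Then on the complement of $A\cup B\cup C$ I verify both inequalities deterministically. For the upper one, $\frac1{m-k}\sum_{j=1}^{m-k}y_{(j)}^2\le\frac{S}{m-k}=\frac1{1-\gamma}\cdot\frac Sm\le\frac{1+\sqrt\alpha\,\epsilon}{1-\gamma}$, and this is $\le1+\epsilon$ precisely because $\gamma\le\beta\le\frac{1-\sqrt\alpha}{1+\epsilon}\epsilon$, i.e.\ the first inequality in \eqref{cond:beta}. For the lower one, subtracting the top-$k$ block gives $\frac1{m-k}\sum_{j=k+1}^m y_{(j)}^2\ge\frac{m(1-\sqrt\alpha\,\epsilon)-kM}{m-k}$, which is $\ge1-\epsilon$ once $\gamma M\le\epsilon(1-\sqrt\alpha)$. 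Plugging in $\delta$ one finds $\gamma M=(\epsilon\sqrt{\alpha/2}+\sqrt{2\gamma\ln\frac e\gamma})^2$, so $\gamma M\le\epsilon(1-\sqrt\alpha)$ is equivalent to $2\gamma\ln\frac e\gamma\le\epsilon(\sqrt{1-\sqrt\alpha}-\sqrt{\alpha\epsilon/2})^2$; here $\epsilon\le\frac{1-\sqrt\alpha}{\alpha/2}$ guarantees the right-hand factor is nonnegative so the squaring step is valid. Since $x\mapsto x\ln\frac ex$ is increasing on $(0,1)$ and $\gamma\le\beta$, this is implied by the second inequality in \eqref{cond:beta}.

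Finally a union bound gives $\PP(\I_{\epsilon,\beta}^c)\le\PP(A)+\PP(B)+\PP(C)\le 3e^{-\alpha(\epsilon^2/4-\epsilon^3/6)m}$, which is \eqref{eq:th1}. I expect the main obstacle to be the bookkeeping in the third paragraph: $\delta$ must be pinned down so that \emph{simultaneously} the exponent $\delta^2k/2$ attains the target rate and the quantity $\gamma M$ collapses to the exact form $(\sqrt{1-\sqrt\alpha}-\sqrt{\alpha\epsilon/2})^2$ appearing in \eqref{cond:beta}. Securing both at once, together with the monotonicity step that upgrades the condition from $\gamma$ to $\beta$, is the crux, whereas the tail and concentration inputs are off-the-shelf from the auxiliary lemmas.
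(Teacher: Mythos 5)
Your proposal is correct and follows essentially the same route as the paper's proof: the same reduction via Lemma~\ref{lem:dec} to the extreme index sets, the same three-term split (a two-sided chi-squared bound on $\tfrac1m\|y\|^2$ at level $\sqrt{\alpha}\,\epsilon$ plus a Lipschitz-concentration bound on the top-$k$ block via \eqref{est:mean:1} and \eqref{le:upperbound:new}), and the same algebra turning the second condition in \eqref{cond:beta} into the needed bound on $\gamma$. The only cosmetic differences are that you fix $\delta=\epsilon\sqrt{\alpha/(2\gamma)}$ and derive the threshold $M$, whereas the paper fixes the threshold and lower-bounds $\delta$, and you verify the upper inequality directly from $\|y_T\|^2\le\|y\|^2$ rather than phrasing it as the paper's probability-one event $E_2$.
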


\begin{proof}
Set $y:=Ax_0$. Since each entry in $A$ is an i.i.d. random variable obeying $\cN(0,1)$ and since $\|x_0\|=1$, a simple calculation leads to  $y_j\sim  \cN(0,1)$ for every $j=1, \ldots, m$ and all $y_1,\ldots,y_m$ are independent.

Let $m\in \N$ be arbitrary but fixed at this moment. Define $\gamma:= \lfloor\beta m\rfloor/m$. Then $0\le \gamma \le \beta$ and $\gamma m\in \N$.
If $\gamma=0$, then $\beta m<1$ and consequently $|T^c|\le \beta m$ implies that $T^c$ is the empty set. Therefore, if $\gamma=0$, then $\I_{\epsilon,\beta}=\I_{\epsilon,0}$ and the claim in \eqref{eq:th1} is trivially true by \eqref{grm:normpreservation}.
Thus, in the following discussion, we assume $\gamma>0$.

By Lemma~\ref{lem:dec}, we conclude that
\[
\PP(\I_{\epsilon,\beta})= \PP\left\{ 1-\epsilon\le \frac{\|y_T\|^2}{\abs{T}}\le 1+\epsilon\; \forall\, |T^c|\le \beta m \right\}
=\PP\left\{ 1-\epsilon\le \min_{T\in T_\gamma }\frac{\|y_T\|^2}{\abs{T}}\le
\max_{T\in T_\gamma}\frac{\|y_T\|^2}{\abs{T}}
\le 1+\epsilon\right\},
\]
where $T_\gamma$ is defined in \eqref{T:gamma}.
Recall that $y_{(1)},\ldots, y_{(m)}$ are nonincreasing rearrangements of $y_1,\ldots,y_m$ in terms of magnitude such that $\abs{y_{(1)}}\ge \abs{y_{(2)}}\ge \cdots \ge \abs{y_{(m)}}$.
Let $T\in T_\gamma$ and $k:=|T^c|=\gamma m$. Then $|T|=m-k$.
Since $\|y_T\|^2=\|y\|^2-\|y_{T^c}\|^2$, we observe that
\begin{equation}\label{yT}
\|y\|^2-(y_{(1)}^2+\cdots+y_{(k)}^2)\,\,\le\,\, \|y_T\|^2
\,\,\le\,\, \|y\|^2-(y_{(m-k+1)}^2+\cdots+y_{(m)}^2).
\end{equation}
Using the above inequalities and noting that $k=\gamma m$, we can easily deduce that
\begin{align*}
&\PP\left\{ 1-\epsilon\le \min_{T\in T_\gamma}\frac{\|y_T\|^2}{\abs{T}}\le
\max_{T\in T_\gamma}\frac{\|y_T\|^2}{\abs{T}}\le 1+\epsilon\right\}\\
&\qquad =\PP\left\{ 1-\epsilon\le \frac{\|y\|^2-(y_{(1)}^2+\cdots+y_{(k)}^2)}{m-k}
\le \frac{\|y\|^2-(y_{(m-k+1)}^2+\cdots+y_{(m)}^2)}{m-k}\le 1-\epsilon\right\}\\
&\qquad =\PP\left\{\frac{y_{(1)}^2+\cdots+y_{(k)}^2}{k}\le \frac{\|y\|^2}{k}-\frac{(1-\gamma)(1-\epsilon)}{\gamma}
\; \mbox{and}\; \frac{y_{(m-k+1)}^2+\cdots+y_{(m)}^2}{k}\ge \frac{\|y\|^2}{k}-\frac{(1-\gamma)(1+\epsilon)}{\gamma}\right\}\\
&\qquad \ge \PP(E_0\cap E_1\cap E_2)=1-\PP(E_0^c\cup E_1^c\cup E_2^c)\,\,\ge\,\, 1-\PP(E_0^c)-\PP(E_1^c)-\PP(E_2^c),
\end{align*}
where
\begin{align}
E_0&:=\left\{1-\sqrt{\alpha}\epsilon\le \frac{\|y\|^2}{m}\le 1+\sqrt{\alpha}\epsilon\right\},\label{E0}\\
E_1&:=
\left\{\frac{y_{(1)}^2+\cdots +y_{(k)}^2}{k} \le 1-\epsilon+\frac{1-\sqrt{\alpha}}{\gamma}\epsilon \right\},\label{E1}\\
E_2&:=
%\left\{\frac{y_{(m)}^2+\cdots +y_{(m-k+1)}^2}{k} \ge 1-\frac{m(1-\sqrt{\alpha})-k}{k}\epsilon \right\}=
\left\{\frac{y_{(m)}^2+\cdots +y_{(m-k+1)}^2}{k} \ge 1+\epsilon-\frac{1-\sqrt{\alpha}}{\gamma}\epsilon \right\}.\label{E2}
\end{align}
Since $E_0=\{ |\tfrac{1}{m} \|y\|^2-1|\le \sqrt{\alpha}\epsilon\}$, it follows directly from \eqref{grm:normpreservation} that
$\PP(E_0)\,\,\ge\,\, 1-2e^{-(\alpha \epsilon^2/4-\alpha^{3/2} \epsilon^3/6) m}$.
Thus, by $0<\alpha<1$, we have
\[
\PP(E_0^c)\,\,\le\,\, 2e^{-(\alpha \epsilon^2/4-\alpha^{3/2}\epsilon^3/6) m}=
2e^{-\alpha (\epsilon^2/4-\epsilon^3/6) m} e^{-(1-\sqrt{\alpha})\alpha \epsilon^3 m/6}\le 2e^{-\alpha (\epsilon^2/4-\epsilon^3/6) m}.
\]
Next we estimate $\PP(E_1)$ and $\PP(E_2)$.
By \eqref{le:upperbound:new} of Lemma \ref{lem:gmin} and noting that $k=\gamma m$,
we have
\[
\E\sqrt{\frac{1}{k}\sum_{j=1}^k y_{(j)}^2}\le \sqrt{2\ln \frac{e}{\gamma}}.
\]
For $\delta>0$, it follows from \eqref{est:mean:1} of Lemma~\ref{lem:Lip} with $S=\{1,\ldots, k\}$ that
\begin{equation}\label{eq:pr2}
\PP\left\{\sqrt{\frac{1}{k} \sum_{j=1}^k y_{(j)}^2 }\le \delta +\sqrt{2\ln \frac{e}{\gamma}}\right\}\ge \PP\left\{\sqrt{\frac{1}{k} \sum_{j=1}^k y_{(j)}^2 }\le \delta +\E \sqrt{\frac{1}{k} \sum_{j=1}^k y_{(j)}^2 }\right\}
\ge 1-e^{-\delta^2 \gamma m/2}.
\end{equation}
Take
\begin{equation}\label{eq1:delta}
\delta:=\sqrt{1-\epsilon+\frac{1-\sqrt{\alpha}}{\gamma}\epsilon}- \sqrt{2\ln \frac{e}{\gamma}}.
\end{equation}
We claim that
\begin{equation}\label{cond:beta:alpha}
\delta\,\, \ge\,\, \sqrt{\frac{\alpha \epsilon^2}{2\gamma}}\,\,>\,\,0.
\end{equation}
Then it follows from \eqref{eq:pr2} and the above inequality that
\[
\PP(E_1^c)=\PP\left\{ \sqrt{\frac{1}{k}\sum_{j=1}^k y_{(j)}^2}> \sqrt{1-\epsilon+\frac{1-\sqrt{\alpha}}{\gamma}\epsilon} \right\} \le e^{-\delta ^2 \gamma m
/2}\le e^{-\alpha\epsilon^2 m/4}\le
e^{-\alpha (\epsilon^2/4-\epsilon^3/6) m}.
\]

On the other hand, by the first inequality in \eqref{cond:beta} and the fact that $0<\gamma\leq \beta$, we have
\[
1+\epsilon-\frac{1-\sqrt{\alpha}}\gamma\epsilon \le 1+\epsilon-\frac{1-\sqrt{\alpha}}\beta\epsilon
\le 1+\epsilon-(1+\epsilon)=0,
\]
which yields
\[
\PP(E_2)=\PP\left\{\frac{y_{(m)}^2+\cdots +y_{(m-k+1)}^2}{k} \ge 1+\epsilon-\frac{1-\sqrt{\alpha}}{\gamma}\epsilon\right\}\ge
\PP\{ y_{(m)}^2+\cdots +y_{(m-k+1)}^2\ge 0\}=1.
\]
That is, $\PP(E_2^c)=0$. Putting all estimates together, we complete the proof of \eqref{eq:th1}.

We now prove \eqref{cond:beta:alpha}.
By our assumption $0<\epsilon\le \frac{1-\sqrt{\alpha}}{\alpha/2}$, we observe that
\begin{equation}\label{fepsalpha}
f_{\epsilon,\alpha}\ge 0 \qquad \mbox{with}\quad f_{\epsilon,\alpha}:=\sqrt{\frac{\epsilon}{2}}\left(\sqrt{1-\sqrt{\alpha}}-\sqrt{\tfrac{\alpha \epsilon}{2}}\right).
\end{equation}
Now it is straightforward to check that
\begin{equation}\label{heps}
2x^2+2\sqrt{\alpha \epsilon^2}x\,\,\le\,\,
2f_{\epsilon,\alpha}^2+2\sqrt{\alpha \epsilon^2}f_{\epsilon,\alpha}=
(1-\sqrt{\alpha})\epsilon-\tfrac{\alpha \epsilon^2}{2}, \qquad \text{for all } x\in [0, f_{\epsilon,\alpha}].
\end{equation}
Note that the function $x\ln \frac{e}{x}$ is an increasing function on the interval $(0,1]$.
%We now prove that \eqref{cond:beta} implies $0<\beta<\tfrac{\sqrt{2}}{e}$.
%Indeed, \eqref{cond:beta} implies $0<\beta\le (1-\sqrt{\alpha})\frac{\epsilon}{1+\epsilon}\le (1-\sqrt{\alpha})/2\le 1/2$ and $0<\beta\ln\tfrac{\sqrt{2}}{\beta}\le \tfrac{\epsilon}{4}(1-\sqrt{\alpha})\le \tfrac{1}{4}$.
%By $0<\beta\le 1/2$ and $\beta\ln \tfrac{\sqrt{2}}{\beta}\le 1/4$, we have
%\[
%0<\beta\ln(2\sqrt{2})\le \beta\ln \tfrac{\sqrt{2}}{\beta}\le \tfrac{1}{4} \Longrightarrow
%\beta\le \tfrac{1}{4\ln(2\sqrt{2})}=\tfrac{1}{6\ln 2} <\tfrac{\sqrt{2}}{e}.
%\]
Since $0<\gamma\le \beta<1$, by the second inequality in \eqref{cond:beta}, we have
\[
0<\sqrt{\gamma \ln\tfrac{e}{\gamma}}\le \sqrt{\beta \ln\tfrac{e}{\beta}} \le f_{\epsilon,\alpha}.
\]
Consequently, plugging $x=\sqrt{\gamma\ln\tfrac{e}{\gamma}}$ into the inequality in \eqref{heps}, we deduce that
\[
2\gamma \ln \tfrac{e}{\gamma}+2\sqrt{\alpha \epsilon^2} \sqrt{\gamma \ln \tfrac{e}{\gamma}}\le (1-\sqrt{\alpha})\epsilon-\tfrac{\alpha \epsilon^2}{2}.
\]
Since $\gamma>0$, dividing $\gamma$ on both sides of the above inequality, we obtain
\[
2\ln \tfrac{e}{\gamma}+2\sqrt{\tfrac{\alpha \epsilon^2}{2\gamma}} \sqrt{2\ln \tfrac{e}{\gamma}}\le \tfrac{1-\sqrt{\alpha}}{\gamma}\epsilon-\tfrac{\alpha \epsilon^2}{2\gamma}.
\]
Hence, by $1-\epsilon \ge 0$, we conclude that
\[
\left(\sqrt{2\ln \tfrac{e}{\gamma}}+\sqrt{\tfrac{\alpha \epsilon^2}{2\gamma}}\right)^2=
2\ln \frac{e}{\gamma}+2\sqrt{\tfrac{\alpha \epsilon^2}{2\gamma}} \sqrt{2\ln \tfrac{e}{\gamma}}+\frac{\alpha \epsilon^2}{2\gamma}
\le \frac{1-\sqrt{\alpha}}{\gamma}\epsilon
\le 1-\epsilon+\frac{1-\sqrt{\alpha}}{\gamma}\epsilon.
\]
That is, we proved
\[
\sqrt{2\ln \frac{e}{\gamma}}+\sqrt{\frac{\alpha \epsilon^2}{2\gamma}}
\le \sqrt{1-\epsilon+\frac{1-\sqrt{\alpha}}{\gamma}\epsilon},
\]
which is simply the inequality in \eqref{cond:beta:alpha}.
\end{proof}

\begin{comment}
By essentially the same argument (but more complicated expressions) as in the proof of
Lemma~\ref{lem:leth1},
the range of $\beta$ in \eqref{cond:beta} of
Lemma~\ref{lem:leth1} can be further improved if the sharper upper bound $\sqrt{-W_{-1}(-e^{-1}\gamma^2}$ is used instead of the simple estimate $2\sqrt{\ln \frac{\sqrt{2}}{\gamma}}$ in \eqref{le:upperbound} with $t=1/4$.
\end{comment}

The following result establishes a lower bound for $\beta^{\max}_{\epsilon,\alpha}$ in \eqref{betamax}.

\begin{theo}\label{thm:lowerbound}
Let $A$ be an $m\times \Ndim$ random matrix with i.i.d. entries obeying $\cN(0,1)$.
For $0<\alpha<1$ and $0<\epsilon \le \min(1,\frac{1-\sqrt{\alpha}}{4\alpha})$,
if
\begin{equation}\label{beta:epsalpha}
0<\beta\le \frac{(1-\sqrt{\alpha})\epsilon}{16\ln \tfrac{4}{(1-\sqrt{\alpha})\epsilon}},
\end{equation}
then \eqref{eq:th1} holds.
Consequently,
\begin{equation}\label{beta:eps}
\beta^{\max}_{\epsilon,\alpha}>\frac{\epsilon}{\ln \tfrac{1}{\epsilon}}\frac{1-\sqrt{\alpha}}{32},\qquad\forall\, 0<\epsilon\le \frac{1-\sqrt{\alpha}}{4}.
\end{equation}
\end{theo}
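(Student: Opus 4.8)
The plan is to deduce the theorem from Lemma~\ref{lem:leth1}: I would show that the single explicit bound \eqref{beta:epsalpha} forces both inequalities in the hypothesis \eqref{cond:beta}, apply the lemma to obtain \eqref{eq:th1}, and then read off \eqref{beta:eps} from the definition \eqref{betamax}. To begin, I note the domain is compatible: since $0<\alpha<1$ gives $\frac{1-\sqrt\alpha}{4\alpha}\le\frac{2(1-\sqrt\alpha)}{\alpha}=\frac{1-\sqrt\alpha}{\alpha/2}$, the restriction $\epsilon\le\min(1,\frac{1-\sqrt\alpha}{4\alpha})$ is stronger than the restriction $\epsilon\le\min(1,\frac{1-\sqrt\alpha}{\alpha/2})$ needed to invoke Lemma~\ref{lem:leth1}. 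Write $b:=\frac{(1-\sqrt\alpha)\epsilon}{16\ln\frac{4}{(1-\sqrt\alpha)\epsilon}}$ for the right endpoint in \eqref{beta:epsalpha}; since $(1-\sqrt\alpha)\epsilon<1$ one has $\ln\frac{4}{(1-\sqrt\alpha)\epsilon}>\ln4$, so $b<\frac{1}{16\ln4}<1$.

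The first inequality of \eqref{cond:beta}, namely $\beta\le\frac{(1-\sqrt\alpha)\epsilon}{1+\epsilon}$, is the easy one: it suffices to check it for $\beta=b$, where it reduces to $1+\epsilon\le16\ln\frac{4}{(1-\sqrt\alpha)\epsilon}$, and this holds with room to spare because the left side is at most $2$ while the right side exceeds $16\ln4>2$.

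The second inequality of \eqref{cond:beta} is the crux, and where I expect the real work. Because $x\mapsto x\ln\frac ex$ is increasing on $(0,1]$ (its derivative equals $-\ln x\ge0$) and $0<\beta\le b<1$, it is enough to verify it at $\beta=b$. I would substitute $\beta=b$ and set $v:=\ln\frac{4}{(1-\sqrt\alpha)\epsilon}$; expanding the logarithms gives $\ln\frac eb=v+\ln v+1+\ln4$ (the $\ln16$ coming from $\frac1b$ combines with $-\ln4$ to leave $\ln4$), so that $b\ln\frac eb=\frac{(1-\sqrt\alpha)\epsilon}{16v}\,(v+\ln v+1+\ln4)$. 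For the right-hand side of \eqref{cond:beta} I would invoke the hypothesis $\epsilon\le\frac{1-\sqrt\alpha}{4\alpha}$, which yields $\frac{\alpha\epsilon}{2}\le\frac{1-\sqrt\alpha}{8}$ and hence $\sqrt{1-\sqrt\alpha}-\sqrt{\frac{\alpha\epsilon}{2}}\ge(1-\frac{1}{2\sqrt2})\sqrt{1-\sqrt\alpha}$, so the right-hand side is at least $\frac{(1-\sqrt\alpha)\epsilon}{2}(1-\frac{1}{2\sqrt2})^2$. The required inequality then reduces to the scalar statement $v+\ln v+1+\ln4\le C\,v$ with $C:=8(1-\frac{1}{2\sqrt2})^2>3.34$. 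I would settle this for all $v\ge\ln4$ by a one-variable argument: the function $(C-1)v-\ln v-1-\ln4$ has derivative $(C-1)-\frac1v>0$ on $v\ge\ln4$ (since $\frac{1}{\ln4}<C-1$) and is already positive at $v=\ln4$, so it stays positive. Feeding both verified conditions into Lemma~\ref{lem:leth1} gives \eqref{eq:th1}.

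Finally, for \eqref{beta:eps}: the first part shows every $\beta\in(0,b]$ belongs to the set in \eqref{betamax}, so $\beta^{\max}_{\epsilon,\alpha}\ge b$; moreover both inequalities of \eqref{cond:beta} were seen to hold \emph{strictly} at $\beta=b$ (the first with wide margin, the second by the strict scalar inequality just proved), so by continuity a slightly larger $\beta<1$ still satisfies \eqref{cond:beta}, whence $\beta^{\max}_{\epsilon,\alpha}>b$. It remains to compare $b$ with the target $\frac{1-\sqrt\alpha}{32}\frac{\epsilon}{\ln(1/\epsilon)}$: cancelling the common factor $(1-\sqrt\alpha)\epsilon$, the inequality $b\ge\frac{1-\sqrt\alpha}{32}\frac{\epsilon}{\ln(1/\epsilon)}$ is equivalent to $2\ln\frac1\epsilon\ge\ln\frac{4}{(1-\sqrt\alpha)\epsilon}$, i.e. to $\frac{1}{\epsilon^2}\ge\frac{4}{(1-\sqrt\alpha)\epsilon}$, i.e. to $\epsilon\le\frac{1-\sqrt\alpha}{4}$, which holds throughout the stated range. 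Combining $\beta^{\max}_{\epsilon,\alpha}>b\ge\frac{1-\sqrt\alpha}{32}\frac{\epsilon}{\ln(1/\epsilon)}$ gives the strict bound \eqref{beta:eps}; note the strictness survives even at the endpoint $\epsilon=\frac{1-\sqrt\alpha}{4}$, where $b$ equals the target, precisely because $\beta^{\max}_{\epsilon,\alpha}>b$.
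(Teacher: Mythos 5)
Your proposal is correct and follows essentially the same route as the paper: it deduces \eqref{eq:th1} by verifying both conditions of \eqref{cond:beta} from \eqref{beta:epsalpha} (using the same lower bound $f_{\epsilon,\alpha}^2\ge\frac{9-4\sqrt{2}}{16}(1-\sqrt{\alpha})\epsilon$ and a scalar logarithmic inequality equivalent to the paper's $\ln(4ez)<(8-4\sqrt 2)z$ for $z\ge\ln 4$), and then obtains the strict bound \eqref{beta:eps} from strictness at $\beta=b$ together with the comparison $2\ln\frac{1}{\epsilon}\ge\ln\frac{4}{(1-\sqrt{\alpha})\epsilon}$, exactly as in the paper's $t_{\epsilon,\alpha}+\delta$ argument. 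All the numerical checks (domain compatibility, $C=9-4\sqrt 2>3.34$, positivity at $v=\ln 4$) are sound.
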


\begin{proof}
We first show that \eqref{beta:epsalpha} implies \eqref{cond:beta} in Lemma~\ref{lem:leth1}.
Since $0<\epsilon\le1$ and $0<\alpha<1$, we have $0<(1-\sqrt{\alpha})\epsilon\le 1$.
The first inequality in \eqref{cond:beta} follows from \eqref{beta:epsalpha}, since
\[
0<\beta\le t_{\epsilon,\alpha}:=\frac{(1-\sqrt{\alpha})\epsilon}{16\ln \tfrac{4}{(1-\sqrt{\alpha})\epsilon}}\le \frac{(1-\sqrt{\alpha})\epsilon}{16\ln 4}< \frac{(1-\sqrt{\alpha})\epsilon}{2}\le \frac{(1-\sqrt{\alpha})\epsilon}{1+\epsilon}.
\]
Let $f_{\epsilon,\alpha}$ be defined in \eqref{fepsalpha}. By $0<\epsilon\le \frac{1-\sqrt{\alpha}}{4\alpha}$, we have $\frac{\alpha \epsilon}{2}\le \frac{1-\sqrt{\alpha}}{8}$ and hence, $f_{\epsilon,\alpha}>0$ and
\begin{equation}\label{est:fepsalpha}
f_{\epsilon,\alpha}^2=\frac{\epsilon}{2}\left( \sqrt{1-\sqrt{\alpha}}-\sqrt{\frac{\alpha\epsilon}{2}}\right)^2
\ge \frac{\epsilon}{2}\left( \sqrt{1-\sqrt{\alpha}}-\sqrt{\tfrac{1-\sqrt{\alpha}}{8}}\right)^2
=\frac{9-4\sqrt{2}}{16}(1-\sqrt{\alpha})\epsilon.
\end{equation}
A basic calculation shows that
\[
\ln(4ez) < (8-4\sqrt{2})z, \qquad \forall\; z\ge \ln 4,
\]
from which it is straightforward to deduce, by setting $z=\ln(1/x)$, that
\[
\frac{x}{4\ln(1/x)}\ln\frac{4e\ln(1/x)}{x}< \frac{9-4\sqrt{2}}{4}x, \qquad \forall\, 0<x\le \frac{1}{4}.
\]
Plugging $x:=\frac{(1-\sqrt{\alpha})\epsilon}{4}\le 1/4$ into the above inequality,
by \eqref{est:fepsalpha}, we conclude that
\[
t_{\epsilon,\alpha} \ln\frac{e}{t_{\epsilon,\alpha}}=\frac{x}{4\ln (1/x)}\ln\frac{4e\ln(1/x)}{x}< \frac{9-4\sqrt{2}}{4}x=\frac{9-4\sqrt{2}}{16}(1-\sqrt{\alpha})\epsilon\le f_{\epsilon,\alpha}^2.
\]
Since $\beta\ln\tfrac{e}{\beta}$ is an increasing function on $(0,1]$,
% and $t_{\epsilon,\alpha}\le \frac{1}{28\ln 4}<\tfrac{\sqrt{2}}{e}$. Consequently,
the second inequality in \eqref{cond:beta} follows from \eqref{beta:epsalpha} by noting that
$0<\beta\ln \frac{e}{\beta}\le t_{\epsilon,\alpha} \ln \frac{e}{t_{\epsilon,\alpha}}<f_{\epsilon,\alpha}^2$.

Since $t_{\epsilon,\alpha}<\frac{(1-\sqrt{\alpha})\epsilon}{1+\epsilon}$ and $t_{\epsilon,\alpha} \ln \frac{e}{t_{\epsilon,\alpha}}<f_{\epsilon,\alpha}^2$, there must exist $\delta>0$ such that \eqref{cond:beta} holds for all $0<\beta\le t_{\epsilon,\alpha}+\delta$.
Note that $\epsilon\le \frac{1-\sqrt{\alpha}}{4}< \frac{1-\sqrt{\alpha}}{4\alpha}$ by $0<\alpha<1$. We have $\ln\tfrac{1}{\epsilon}\ge \ln \tfrac{4}{1-\sqrt{\alpha}}$ and therefore,
\[
\beta^{\max}_{\epsilon,\alpha}\ge t_{\epsilon,\alpha}+\delta
>\frac{(1-\sqrt{\alpha})\epsilon}{16(\ln \tfrac{1}{\epsilon}+
\tfrac{4}{1-\sqrt{\alpha}})}\ge \frac{(1-\sqrt{\alpha})\epsilon}{32 \ln \tfrac{1}{\epsilon}}.
\]
This proves \eqref{beta:eps}.
%by noting that$\epsilon\le \frac{1-\sqrt{\alpha}}{4}\le \frac{1-\sqrt{\alpha}}{4\alpha}$.
\end{proof}

\subsection{An upper bound for $\beta^{\max}_{\epsilon,\alpha}$}

We now show that the order $\beta^{\max}_{\epsilon,\alpha}=O(\epsilon/\ln \tfrac{1}{\epsilon})$ for small $\epsilon$ given in Theorem~\ref{thm:lowerbound} is optimal.
To do so, we recall a well-known inequality on order statistics (e.g., see item (ii) of \cite[Example~10]{min} or see \cite[Lemma~3.3.1]{CGLP}):
there exists an absolute positive constant $c_g$ (depending only on the Gaussian/normal distribution $\cN(0,1)$) such that
\begin{equation}\label{cg}
c_g \sqrt{ \ln \tfrac{2m}{j}}\leq \E \abs{y_{(j)}},\qquad \forall\; 1\le j\le m/2.
\end{equation}

We first estimate the quantity $\PP(\I_{\epsilon,\beta})$.

\begin{lem}\label{lem:beta}
Let $m\in \N$ and $0<\beta\le 1/2$ such that $\beta m\in \N$. Let $\tilde{\epsilon}>0$ and $0<\epsilon\le 1$. If
\begin{equation} \label{delta:beta}
\delta:=c_g\sqrt{\beta\ln \frac{2}{\beta}}
-\sqrt{(1-\epsilon)\beta+\epsilon+\tilde{\epsilon}}>0,
\end{equation}
then
\begin{equation} \label{est:PIepsgamma}
\PP(\I_{\epsilon,\beta})\le e^{-\tilde{\epsilon}^2 m/4}+e^{-\delta^2 m/2}.
\end{equation}
\end{lem}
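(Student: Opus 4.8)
The plan is to bound $\PP(\I_{\epsilon,\beta})$ from above by showing that the event forces the average of the top $\beta m$ squared order statistics to be unexpectedly small, and then to contradict this using the sharp lower bound on $\E|y_{(j)}|$ from \eqref{cg} together with the concentration inequality in Lemma~\ref{lem:Lip}. As in the proof of Lemma~\ref{lem:leth1}, I would set $y:=Ax_0$, so that $y_1,\dots,y_m$ are i.i.d.\ $\cN(0,1)$ because $\|x_0\|=1$. Write $k:=\beta m\in\N$ and recall $0<\beta\le 1/2$, so $k\le m/2$ and \eqref{cg} applies to all indices $1\le j\le k$. By Lemma~\ref{lem:dec}, membership in $\I_{\epsilon,\beta}$ forces in particular $\max_{T\in T_\beta}\tfrac{1}{|T|}\|A_Tx_0\|^2\le 1+\epsilon$, which by the upper inequality in \eqref{yT} controls $\|y\|^2-(y_{(k+1)}^2+\cdots+y_{(m)}^2)=y_{(1)}^2+\cdots+y_{(k)}^2$ from above in terms of $\|y\|^2$; equivalently, the constraint simultaneously bounds the largest $k$ coordinates.

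The key step is to combine two opposing estimates on $S:=\{1,\dots,k\}$. First, the lower bound \eqref{cg} gives $\E|y_{(j)}|\ge c_g\sqrt{\ln\tfrac{2m}{j}}$ for $1\le j\le k$, and by Cauchy--Schwarz (in the same spirit as part (iii) of Lemma~\ref{lem:gmin}) one extracts a clean lower bound on $\E\sqrt{\tfrac{1}{k}\sum_{j=1}^k y_{(j)}^2}$; the natural target, consistent with the definition of $\delta$ in \eqref{delta:beta}, is $\E\sqrt{\tfrac{1}{k}\sum_{j=1}^k y_{(j)}^2}\ge c_g\sqrt{\ln\tfrac{2}{\beta}}$, using $\tfrac{1}{k}\sum_{j=1}^k\ln\tfrac{2m}{j}\ge \ln\tfrac{2}{\beta}$ after an integral comparison for $\sum_{j=1}^k\ln j$. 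Second, I would apply \eqref{est:mean:2} of Lemma~\ref{lem:Lip} with this $S$ and the stated $\delta>0$ to get, with probability at least $1-e^{-\delta^2 m/2}$ (since $|S|=k=\beta m$ and the deviation parameter scales as $\delta\sqrt{m/k}$, matching the exponent $\delta^2 m/2$ once the normalization is tracked carefully), that $\sqrt{\tfrac{1}{k}\sum_{j=1}^k y_{(j)}^2}$ is close to its mean. I would also control $\|y\|^2/m$ by an auxiliary event using the deviation parameter $\tilde\epsilon$, contributing the term $e^{-\tilde\epsilon^2 m/4}$ via \eqref{eq:epsilon:0}.

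The mechanism is then a union bound on the complement: on $\I_{\epsilon,\beta}$ the quantity $\sqrt{\tfrac{1}{k}\sum_{j=1}^k y_{(j)}^2}$ is forced to be at most $\sqrt{(1-\epsilon)\beta+\epsilon+\tilde\epsilon}$ (this is exactly where the upper constraint $1+\epsilon$ on the full-average, the value of $\|y\|^2/m$ controlled to within $\tilde\epsilon$, and the arithmetic with $k=\beta m$ conspire to produce the expression under the second square root in \eqref{delta:beta}), whereas the mean is at least $c_g\sqrt{\beta\ln\tfrac{2}{\beta}}$; the gap between these is precisely $\delta$. Hence $\I_{\epsilon,\beta}$ is contained in the union of the event that $\|y\|^2/m$ deviates by $\tilde\epsilon$ and the event that $\sqrt{\tfrac{1}{k}\sum_{j=1}^k y_{(j)}^2}$ falls below its mean by at least $\delta$; bounding these by $e^{-\tilde\epsilon^2 m/4}$ and $e^{-\delta^2 m/2}$ respectively yields \eqref{est:PIepsgamma}. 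The main obstacle I anticipate is the bookkeeping in the second step: one must verify that the algebraic manipulation of the constraint $\tfrac{1}{|T|}\|A_Tx_0\|^2\le 1+\epsilon$ together with $\|y\|^2/m\le 1+\tilde\epsilon$ really does collapse to the threshold $\sqrt{(1-\epsilon)\beta+\epsilon+\tilde\epsilon}$ with the stated normalization, and that the factor $\sqrt{m/k}=\sqrt{1/\beta}$ absorbed into the concentration deviation produces the exponent $\delta^2 m/2$ rather than $\delta^2 k/2$; getting the $\beta$-factors to line up correctly with the definition in \eqref{delta:beta} is the delicate part.
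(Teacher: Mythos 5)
Your overall architecture matches the paper's: condition on the event $\tfrac{1}{m}\|y\|^2\le 1+\tilde\epsilon$ (cost $e^{-\tilde\epsilon^2 m/4}$ via \eqref{eq:epsilon:0}), show that on $\I_{\epsilon,\beta}$ the top-$k$ order statistics with $k=\beta m$ must then satisfy $\sqrt{\tfrac{1}{k}\sum_{j=1}^k y_{(j)}^2}\le\sqrt{\eta/\beta}$ with $\eta:=(1-\epsilon)\beta+\epsilon+\tilde\epsilon$, lower-bound the mean of this quantity by $c_g\sqrt{\ln\tfrac{2}{\beta}}$ via \eqref{cg}, and close with \eqref{est:mean:2} of Lemma~\ref{lem:Lip} at deviation $\delta/\sqrt{\beta}$, giving $e^{-(\delta/\sqrt{\beta})^2 k/2}=e^{-\delta^2 m/2}$. (Your Cauchy--Schwarz route to the mean lower bound is more elaborate than needed --- the paper simply uses $\sqrt{\tfrac{1}{k}\sum_{j=1}^k y_{(j)}^2}\ge|y_{(k)}|$ and $\E|y_{(k)}|\ge c_g\sqrt{\ln\tfrac{2m}{k}}$ --- but it would also work.)

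The genuine gap is in the step you yourself flagged as delicate: you extract the threshold from the wrong half of the two-sided constraint. You invoke $\max_{T\in T_\beta}\tfrac{1}{|T|}\|y_T\|^2\le 1+\epsilon$, but by \eqref{yT} this maximum equals $\tfrac{1}{m-k}\bigl(\|y\|^2-(y_{(m-k+1)}^2+\cdots+y_{(m)}^2)\bigr)$, i.e., it constrains $\|y\|^2$ minus the \emph{smallest} $k$ squares; the identity you write, $\|y\|^2-(y_{(k+1)}^2+\cdots+y_{(m)}^2)=y_{(1)}^2+\cdots+y_{(k)}^2$, is the one attached to the minimum, not the maximum. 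From the $\le 1+\epsilon$ side the best you can say about the top-$k$ sum is $y_{(1)}^2+\cdots+y_{(k)}^2\le y_{(1)}^2+\cdots+y_{(m-k)}^2\le(1+\epsilon)(1-\beta)m$, i.e., a threshold $\sqrt{(1+\epsilon)(1-\beta)/\beta}$ for the normalized quantity; for small $\beta$ this exceeds $c_g\sqrt{\ln\tfrac{2}{\beta}}$, so the resulting $\delta$ would be negative and the concentration step becomes vacuous --- the arithmetic does not collapse to $\sqrt{\eta/\beta}$. What actually drives the lemma is the \emph{lower} half: $\I_{\epsilon,\beta}\subseteq\{\min_{T\in T_\beta}\|y_T\|^2\ge(1-\epsilon)(1-\beta)m\}$, and since $\min_{T\in T_\beta}\|y_T\|^2=\|y\|^2-(y_{(1)}^2+\cdots+y_{(k)}^2)$, on the event $\|y\|^2\le(1+\tilde\epsilon)m$ this forces $y_{(1)}^2+\cdots+y_{(k)}^2\le\bigl(1+\tilde\epsilon-(1-\epsilon)(1-\beta)\bigr)m=\eta m$, which is exactly the threshold encoded in \eqref{delta:beta}. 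Replacing your maximum/upper-inequality step by this minimum/lower-bound step repairs the argument; the rest of your outline then goes through.
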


\begin{proof} Let $y:=Ax_0$ and $k:=\beta m$. By Lemma~\ref{lem:dec} and \eqref{yT}, we have
\begin{align*}
\PP(\I_{\epsilon,\beta})&=\PP\left\{ \max_{T\in T_\beta}|\tfrac{1}{|T|} \|y_T\|^2-1|\le \epsilon\right\}\le \PP\left\{ \min_{T\in T_\beta} \|y_T\|^2\ge (1-\epsilon)(1-\beta)m\right \}\\
&=\PP\left\{ \|y\|^2-(y_{(1)}^2+\cdots+y_{(k)}^2)\ge (1-\epsilon)(1-\beta)m\right \}\\
&\le \PP\{ \tfrac{1}{m} \|y\|^2>1+\tilde{\epsilon}\}+\PP\{y_{(1)}^2+\cdots+y_{(k)}^2\le \eta m\},
\end{align*}
where $\eta:=(1-\epsilon)\beta+\epsilon+\tilde{\epsilon}>0$ since $0<\epsilon\le 1$.
It follows directly from \eqref{eq:epsilon:0} that
\begin{equation}\label{tildeeps}
\PP\{ \tfrac{1}{m} \|y\|^2>1+\tilde{\epsilon}\}\le e^{-\tilde{\epsilon}^2m/4}.
\end{equation}
On the other hand,
\[
\PP\left\{y_{(1)}^2+\cdots+y_{(k)}^2\le \eta m\right\}=
\PP\left\{\sqrt{\frac{1}{k}\sum_{j=1}^k y_{(j)}^2}-\E\sqrt{\frac{1}{k}\sum_{j=1}^k y_{(j)}^2}
\le \sqrt{\frac{\eta}{\beta}}-\E\sqrt{\frac{1}{k}\sum_{j=1}^k y_{(j)}^2}\right\}.
\]
Since $k=\beta m\le m/2$, it follows from the inequality in \eqref{cg} and $|y_{(1)}|\ge \cdots \ge |y_{(k)}|$ that
\[
\E\sqrt{\frac{1}{k}\sum_{j=1}^k y_{(j)}^2}\ge \E(|y_{(k)}|)\ge c_g \sqrt{\ln \frac{2m}{k}}=c_g\sqrt{\ln\frac{2}{\beta}}.
\]
Therefore,
\begin{align*}
\PP\left\{y_{(1)}^2+\cdots+y_{(k)}^2\le \eta m\right\}
&=
\PP\left\{\sqrt{\frac{1}{k}\sum_{j=1}^k y_{(j)}^2}-\E\sqrt{\frac{1}{k}\sum_{j=1}^k y_{(j)}^2}
\le \sqrt{\frac{\eta}{\beta}}-\E\sqrt{\frac{1}{k}\sum_{j=1}^k y_{(j)}^2}\right\}\\
&\le \PP\left\{\sqrt{\frac{1}{k}\sum_{j=1}^k y_{(j)}^2}-\E\sqrt{\frac{1}{k}\sum_{j=1}^k y_{(j)}^2}\le \sqrt{\frac{\eta}{\beta}}-c_g\sqrt{\ln \frac{2}{\beta}}\right\}\\
&=\PP\left\{\sqrt{\frac{1}{k}\sum_{j=1}^k y_{(j)}^2}-\E\sqrt{\frac{1}{k}\sum_{j=1}^k y_{(j)}^2}\le -\frac{\delta}{\sqrt{\beta}} \right\}.
\end{align*}
By our assumption $\delta>0$,
applying \eqref{est:mean:2} in Lemma~\ref{lem:Lip} with $S=\{1,\ldots, k\}$, we get
\[
\PP\left\{y_{(1)}^2+\cdots+y_{(k)}^2\le \eta m\right\}\le
\PP\left\{\sqrt{\frac{1}{k}\sum_{j=1}^k y_{(j)}^2}-\E\sqrt{\frac{1}{k}\sum_{j=1}^k y_{(j)}^2}\le -\frac{\delta}{\sqrt{\beta}} \right\}\le
e^{-\delta^2 k/(2\beta)}=e^{-\delta^2 m/2}.
\]
Combining the above inequality with \eqref{tildeeps}, we conclude that \eqref{est:PIepsgamma} holds.
\end{proof}

To provide an upper bound for $\beta^{\max}_{\epsilon,\alpha}$ in \eqref{betamax}, here we introduce a related quantity. For $\epsilon>0$, $\alpha>0$ and $C>0$, we define
\begin{equation}\label{betamax:2}
\beta^{\max}_{\epsilon,\alpha,C}:=\sup\{0<\beta<1\; :\; \PP(\I_{\epsilon,\beta}) \ge 1-C e^{-\alpha (\epsilon^2/4-\epsilon^3/6) m}\; \text{for sufficiently large}\; m\in \N\}.
\end{equation}
If the above set on the right-hand side of \eqref{betamax} is empty, then we simply define $\beta^{\max}_{\epsilon,\alpha,C}:=0$. Trivially, $\beta^{\max}_{\epsilon,\alpha}\le \beta^{\max}_{\epsilon,\alpha,3}$ for all $\epsilon>0$ and $\alpha>0$.

\begin{theo}\label{thm:upperbound}
Let $c_g$ and $\epsilon_g$ be the absolute positive constants defined in \eqref{cg} and \eqref{epsg}. Then
\begin{equation}\label{eq:upperbound}
\beta^{\max}_{\epsilon,\alpha, C} < \frac{(1+\epsilon_g^{-1})\epsilon}{c_g^2 \ln \tfrac{2\epsilon_g}{\epsilon}}, \qquad\forall\, 0<\epsilon<\min(1,\epsilon_g), \alpha>0, C>0.
\end{equation}
\end{theo}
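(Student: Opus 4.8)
The plan is to bound $\beta^{\max}_{\epsilon,\alpha,C}$ from above by exhibiting, for each admissible $\epsilon$, a single erasure ratio $\beta_0$ strictly below $B:=\frac{(1+\epsilon_g^{-1})\epsilon}{c_g^2\ln(2\epsilon_g/\epsilon)}$ at which $\PP(\I_{\epsilon,\beta_0})\to 0$ as $m\to\infty$. Since the target bound $1-Ce^{-\alpha(\epsilon^2/4-\epsilon^3/6)m}$ tends to $1$, such a $\beta_0$ cannot belong to the defining set in \eqref{betamax:2}; and because $\beta\mapsto\PP(\I_{\epsilon,\beta})$ is nonincreasing, every $\beta\ge\beta_0$ also fails, so $\beta^{\max}_{\epsilon,\alpha,C}\le\beta_0<B$. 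The only routine bookkeeping is that $\beta_0 m$ need not be an integer: I would replace $\beta_0$ by $\gamma_m:=\lfloor\beta_0 m\rfloor/m$, observe that $\I_{\epsilon,\beta_0}=\I_{\epsilon,\gamma_m}$ and $\gamma_m\to\beta_0$, and apply the tail bound for each large $m$.

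The collapse engine is Lemma~\ref{lem:beta}. Letting $\tilde\epsilon\downarrow 0$, it shows $\PP(\I_{\epsilon,\beta})\to 0$ whenever $c_g^2\beta\ln\tfrac2\beta>(1-\epsilon)\beta+\epsilon$, and hence, using $(1-\epsilon)\beta<\beta$, whenever the cleaner sufficient condition $\tilde\psi(\beta):=\beta\bigl(c_g^2\ln\tfrac2\beta-1\bigr)>\epsilon$ holds. The decisive feature is that $B$ is tuned so that testing $\tilde\psi$ at $\beta=B$ telescopes. Writing $u:=\frac{c_g^2\ln(2\epsilon_g/\epsilon)}{\epsilon_g+1}$, one checks $B=\tfrac{\epsilon}{\epsilon_g u}$ and $\ln\tfrac2B=\ln\tfrac{2\epsilon_g}{\epsilon}+\ln u$, whence a short computation yields
\[
\tilde\psi(B)-\epsilon=\frac{\epsilon}{\epsilon_g u}\bigl(u+c_g^2\ln u-1\bigr).
\]
Since $G(u):=u+c_g^2\ln u$ is increasing with $G(1)=1$, this gives the clean equivalence $\tilde\psi(B)>\epsilon\iff u>1$.

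This equivalence dictates a dichotomy. When $u>1$ we have $\tilde\psi(B)>\epsilon$; by continuity $\tilde\psi>\epsilon$ on a left neighborhood of $B$, so some $\beta_0<B$ satisfies $\tilde\psi(\beta_0)>\epsilon$ and collapses, and one verifies $B\le\tfrac12$ throughout this regime so that Lemma~\ref{lem:beta} applies. When $u\le 1$ the test at $B$ is inconclusive, so instead I would take $\beta_0=\beta_g$: by the very definitions \eqref{epsg}--\eqref{betag}, $\frac{\beta_g(c_g^2\ln(2/\beta_g)-1)}{1-\beta_g}=\epsilon_g>\epsilon$, i.e.\ $\beta_g$ already meets the exact collapse condition $c_g^2\beta_g\ln\tfrac2{\beta_g}>(1-\epsilon)\beta_g+\epsilon$, and $\beta_g\le\tfrac12$ by \eqref{betag}.

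The main obstacle is then the inequality $\beta_g<B$ required in this second case. Here I would exploit the first-order optimality behind \eqref{epsg}: at the interior maximizer $x_2=1/\beta_g$ the derivative vanishes, which after simplification collapses to the identity $\epsilon_g=c_g^2\beta_g$. Using this in $B=\epsilon/(\epsilon_g u)$, and noting $B$ is increasing in $\epsilon$ (so within $\{u\le1\}=\{\epsilon\ge\epsilon_0\}$, with $\epsilon_0:=2\epsilon_g e^{-(\epsilon_g+1)/c_g^2}$, the ratio $B/\beta_g$ is smallest at $\epsilon=\epsilon_0$, where $u=1$ and $B=\epsilon_0/\epsilon_g$), the claim $\beta_g<B$ reduces to its value at $\epsilon_0$; via $\beta_g e^{-\beta_g}=2e^{-1-1/c_g^2}$ (from $\beta_g=-W_0(-2e^{-1-1/c_g^2})$) this simplifies all the way to $e>e^{2\beta_g}$, i.e.\ to $\beta_g<\tfrac12$, guaranteed by \eqref{betag}. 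I expect the genuinely delicate parts to be (i) the boundary case $\beta_g=\tfrac12$, where the identity $\epsilon_g=c_g^2\beta_g$ is replaced by $\epsilon_g=c_g^2\ln4-1$ and must be treated separately, and (ii) confirming $\beta_0\le\tfrac12$ together with the integer-threshold passage, so that Lemma~\ref{lem:beta} legitimately applies at every sufficiently large $m$.
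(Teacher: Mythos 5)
Your overall strategy rests on the same two pillars as the paper's proof --- Lemma~\ref{lem:beta} as the collapse mechanism and the constants $\epsilon_g,\beta_g$ from Lemma~\ref{lem:ineq} --- but the endgame is genuinely different. The paper never tests the collapse condition at the target value $B:=\frac{(1+\epsilon_g^{-1})\epsilon}{c_g^2\ln(2\epsilon_g/\epsilon)}$. Instead it shows $\min(\tfrac12,\beta^{\max}_{\epsilon,\alpha,C})\le \beta_\epsilon$, the first point in $(0,1/2)$ where $F_\epsilon(\beta):=c_g^2\beta\ln\tfrac2\beta-((1-\epsilon)\beta+\epsilon)$ turns positive; it locates $\beta_\epsilon$ as a root of $F_\epsilon$ inside $(0,\beta_g)$ via $F_\epsilon(\beta_g)=(1-\beta_g)(\epsilon_g-\epsilon)>0$, and then bootstraps: from $F_\epsilon(\beta_\epsilon)=0$ it first extracts $\beta_\epsilon\le \epsilon/\epsilon_g$ and then feeds this back into $\beta_\epsilon=\frac{\epsilon+(1-\epsilon)\beta_\epsilon}{c_g^2\ln(2/\beta_\epsilon)}$ to land strictly below $B$. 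Your route --- evaluating the sign of the collapse criterion directly at $B$ via the $u$-substitution, and falling back on $\beta_g$ when that test is inconclusive --- is more direct and avoids the bootstrap, at the price of needing the monotonicity of $B$ in $\epsilon$ and the first-order identity $\epsilon_g=c_g^2\beta_g$. I checked your computations: $\tilde\psi(B)-\epsilon=\frac{\epsilon}{\epsilon_g u}(u+c_g^2\ln u-1)$ is correct, $F_\epsilon(\beta_g)>0$ is exactly the paper's inequality, $B<\tfrac12$ for $u>1$ does hold (because $\epsilon_g\ge f(2)=c_g^2\ln 4-1$ forces $B(\epsilon_0)=2e^{-(\epsilon_g+1)/c_g^2}\le\tfrac12$), and the reduction of $\beta_g<B(\epsilon_0)$ to $\beta_g<\tfrac12$ is right. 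The $\gamma_m=\lfloor\beta_0 m\rfloor/m$ passage is also fine and mirrors the paper's rational-$\beta$ device.

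The one genuine gap is the corner you flagged but did not close, and it is a real one because \eqref{betag} only gives $\beta_g\le\tfrac12$: equality occurs precisely when $c_g^2\ln\tfrac{4}{\sqrt e}\ge 1$, and since $c_g$ is an unspecified absolute constant your proof must cover that case. There your chain gives $B(\epsilon_0)=2e^{-(\epsilon_g+1)/c_g^2}=2e^{-\ln 4}=\tfrac12=\beta_g$, so at $\epsilon=\epsilon_0$ (i.e.\ $u=1$) you obtain only $\beta^{\max}_{\epsilon,\alpha,C}\le\beta_g=B$, not the strict inequality asserted in \eqref{eq:upperbound}. The culprit is your weakening of the exact collapse condition $F_\epsilon(\beta)>0$ to $\tilde\psi(\beta)>\epsilon$: the discarded term is exactly $\epsilon\beta$, and keeping it yields $F_\epsilon(B)=\tilde\psi(B)-\epsilon+\epsilon B=\frac{\epsilon}{\epsilon_g u}\bigl(u+c_g^2\ln u-1+\epsilon\bigr)$, which is strictly positive for all $u\ge 1$ (at $u=1$ it equals $\epsilon^2/\epsilon_g$). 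So the first branch of your dichotomy, run with $F_\epsilon$ instead of $\tilde\psi$, already covers $u=1$, the second branch is then only needed for $u<1$ where $B>B(\epsilon_0)\ge\beta_g$ holds strictly, and the problematic point disappears. With that one repair the argument is complete.
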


\begin{proof}
If $\beta^{\max}_{\epsilon,\alpha,C}=0$, the claim is trivially true. Hence, we assume $\beta^{\max}_{\epsilon,\alpha,C}>0$. We first prove that
\begin{equation}\label{cond:beta:max}
f_\epsilon(\beta):=c_g \sqrt{\beta\ln \tfrac{2}{\beta}}-\sqrt{(1-\epsilon)\beta+\epsilon}\le 0 \qquad \forall\; 0<\beta<\min(\tfrac{1}{2}, \beta^{\max}_{\epsilon,\alpha,C}).
\end{equation}
By the continuity of the function $f_{\epsilon}$, it suffices to prove \eqref{cond:beta:max} under the extra assumption that $\beta$ is a rational number.
Suppose that \eqref{cond:beta:max} fails for some rational number $\beta$ such that $0<\beta<\min(\tfrac{1}{2}, \beta^{\max}_{\epsilon,\alpha,C})$. Then there exists $\tilde{\epsilon}>0$ such that
\[
\delta=c_g\sqrt{\beta\ln \frac{2}{\beta}}
-\sqrt{(1-\epsilon)\beta+\epsilon+\tilde{\epsilon}}>0,
\]
 where $\delta$ is also defined in \eqref{delta:beta}.
Consequently, by Lemma~\ref{lem:beta},
\begin{equation}\label{eq:betaupper1}
\PP(\I_{\epsilon,\beta})\le e^{-\tilde{\epsilon}^2 m/4}+e^{-\delta^2 m/2}
\end{equation}
provided $\beta m\in \N$.
On the other hand, by the definition of $\beta^{\max}_{\epsilon,\alpha,C}$ and $0< \beta<\beta^{\max}_{\epsilon,\alpha,C}$,
\begin{equation}\label{eq:betaupper2}
\PP(\I_{\epsilon,\beta})\ge 1-Ce^{-\alpha (\epsilon^2/4-\epsilon^3/6) m}\quad \mbox{for sufficiently large}\; m\in \N.
\end{equation}
Consequently, combining \eqref{eq:betaupper1} and \eqref{eq:betaupper2}, we have
\[
1-Ce^{-\alpha (\epsilon^2/4-\epsilon^3/6) m}\le \PP(\I_{\epsilon,\beta})\le e^{-\tilde{\epsilon}^2 m/4}+e^{-\delta^2 m/2}
\]
for sufficiently large $m\in \N$ satisfying $\beta m\in \N$.
Since $\beta$ is a rational number, there are infinitely many sufficiently large $m\in \N$ satisfying $\beta m\in \N$. Letting such $m$ go to $\infty$, we deduce from the above inequality that
\[
1\le Ce^{-\alpha (\epsilon^2/4-\epsilon^3/6) m}+e^{-\tilde{\epsilon}^2 m/4}+e^{-\delta^2 m/2}\to 0,
\]
which is a contradiction. Therefore, \eqref{cond:beta:max} must hold.

Define a function
\[
F_\epsilon(\beta):=c_g^2 \beta \ln \tfrac{2}{\beta}-((1-\epsilon)\beta+\epsilon),\qquad \beta>0.
\]
Then it is trivial to see that $f_\epsilon(\beta)$ and $F_\epsilon(\beta)$ have the same sign on the interval $\beta\in (0,1)$.
As a direct consequence of \eqref{cond:beta:max}, it is straightforward to see that
\begin{equation}\label{est:betamax}
\min(\tfrac{1}{2},\beta^{\max}_{\epsilon,\alpha,C})\le \inf\{ 0<\beta <1/2 \; : \; F_{\epsilon}(\beta)>0\}=:\beta_\epsilon.
\end{equation}
If the above set on the right-hand side is empty, then we simply define $\beta_\epsilon=1/2$.
Let $\beta_g$ and $\epsilon_g$ be defined as in Lemma~\ref{lem:ineq}.
Since $0<\beta_g\le 1/2$ and $\epsilon_g>0$, we deduce that
\[
F_\epsilon(\beta_g)=c_g^2 \beta_g\ln\tfrac{2}{\beta_g}-((1-\epsilon)\beta_g+\epsilon)
=(1-\beta_g)(\epsilon_g-\epsilon)>0,\qquad \forall\,
0<\epsilon<\epsilon_g,
\]
where we used $c_g^2\ln \tfrac{2}{\beta_g}-1=\epsilon_g(\frac{1}{\beta_g}-1)$ by \eqref{epsg}.
Since $\lim_{\beta\to 0^+}F_\epsilon(\beta)=-\epsilon<0$, $F_\epsilon$ must have a real root inside the interval $(0,\beta_g)$. Hence, $0<\beta_\epsilon<\beta_g\le 1/2$ and $F_\epsilon(\beta_\epsilon)=0$. For $0<\epsilon<\epsilon_g$,
\[
\epsilon=F_\epsilon(\beta_\epsilon)+\epsilon
=(c_g^2\ln\tfrac{2}{\beta_\epsilon}-1+\epsilon)\beta_\epsilon \ge (c_g^2\ln\tfrac{2}{\beta_g}-1)\beta_\epsilon=\epsilon_g(\tfrac{1}{\beta_g}-1) \beta_\epsilon \ge \epsilon_g \beta_\epsilon,
\]
by $0<\beta_g\le 1/2$.
It follows from the above inequality that $\beta_\epsilon\le \tfrac{\epsilon}{\epsilon_g}$ for all $0<\epsilon<\epsilon_g$. Hence, for $0<\epsilon<\epsilon_g$, it follows from $F_\epsilon(\beta_\epsilon)=0$
and $0<\beta_\epsilon\le \tfrac{\epsilon}{\epsilon_g}$ that
\[
\beta_\epsilon=\frac{\epsilon+(1-\epsilon)\beta_\epsilon}{c_g^2 \ln \tfrac{2}{\beta_\epsilon}} \le \frac{\epsilon+(1-\epsilon)\tfrac{\epsilon}{\epsilon_g}}{c_g^2\ln \tfrac{2\epsilon_g}{\epsilon}}
<\frac{(1+\epsilon_g^{-1})\epsilon}{c_g^2\ln \tfrac{2\epsilon_g}{\epsilon}}.
\]
By $0<\beta_\epsilon<\beta_g\le 1/2$, we conclude that
\begin{equation}\label{est:beta:max}
\beta_{\epsilon,\alpha,C}^{\max}=\min(\tfrac{1}{2},\beta_{\epsilon,\alpha,C}^{\max})\le \beta_\epsilon < \frac{(1+\epsilon_g^{-1})\epsilon}{c_g^2\ln \tfrac{2\epsilon_g}{\epsilon}}, \qquad \forall\, 0<\epsilon<\min(1,\epsilon_g).
\end{equation}
This proves \eqref{eq:upperbound}.
\end{proof}

\subsection{An estimate for $\mathring{\beta}^{\max}_{\epsilon,\alpha}$}

We now study the relation of the quantity $\mathring{\beta}^{\max}_{\epsilon,\alpha}$ in \eqref{mbetamax} using  $\mathring{\I}_{\epsilon,\beta}$ in \eqref{mI:epsbeta} with a uniform normalization factor $\frac{1}{m}$.

Similar to Lemma~\ref{lem:beta} and Theorem~\ref{thm:lowerbound}, we have the following result on the lower bound of $\mathring{\beta}^{\max}_{\epsilon,\alpha}$.

\begin{theo}\label{thm:lowerbound:0}
Let $A$ be an $m\times \Ndim$ random matrix with i.i.d. entries obeying $\cN(0,1)$.
For $0<\alpha<1$ and $0<\epsilon \le \min(1,\frac{1-\sqrt{\alpha}}{\alpha/2})$, if $0<\beta<1$ satisfies the second inequality in \eqref{cond:beta}, i.e.,
\begin{equation}\label{eq:312}
0<\beta\ln\frac{e}{\beta}\le \frac{\epsilon}{2}\left(\sqrt{1-\sqrt{\alpha}}-\sqrt{\tfrac{\alpha\epsilon}{2}}\right)^2,
\end{equation}
then
\begin{equation}\label{eq:th1:0}
\PP(\mathring{\I}_{\epsilon,\beta} ) \ge 1-3e^{-\alpha (\epsilon^2/4-\epsilon^3/6) m}, \qquad \forall\, m\in \N.
\end{equation}
Consequently, under the same conditions as in Theorem~\ref{thm:lowerbound}, all the claims in Theorem~\ref{thm:lowerbound} hold with $\beta^{\max}_{\epsilon,\alpha}$ being replaced by $\mathring{\beta}^{\max}_{\epsilon,\alpha}$.
\end{theo}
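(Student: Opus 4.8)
The plan is to mirror the proof of Lemma~\ref{lem:leth1}, exploiting the fact that the uniform normalization $\tfrac{1}{m}$ in $\mathring{\I}_{\epsilon,\beta}$ makes the analysis strictly simpler than for $\I_{\epsilon,\beta}$. First I would set $y:=Ax_0$, so that $y_1,\dots,y_m$ are i.i.d.\ $\cN(0,1)$, fix $m$, and put $\gamma:=\lfloor\beta m\rfloor/m\le\beta$ and $k:=\gamma m$. If $\gamma=0$ then only $T=\{1,\dots,m\}$ is admissible, so $\mathring{\I}_{\epsilon,\beta}=\I_{\epsilon,0}$ and \eqref{grm:normpreservation} gives \eqref{eq:th1:0} at once; thus assume $\gamma>0$. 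The crucial observation is that, because the normalization no longer depends on $|T|$, deleting rows can only decrease $\tfrac1m\|y_T\|^2$; hence over all $T$ with $|T^c|\le\beta m$ the maximum of $\tfrac1m\|y_T\|^2$ is simply $\tfrac1m\|y\|^2$ (attained at $T^c=\emptyset$), while the minimum is attained by discarding the $k$ largest-magnitude coordinates. Therefore
\[
\PP(\mathring{\I}_{\epsilon,\beta})=\PP\Bigl\{\tfrac1m\|y\|^2\le 1+\epsilon \ \text{and}\ \tfrac1m\bigl(\|y\|^2-(y_{(1)}^2+\cdots+y_{(k)}^2)\bigr)\ge 1-\epsilon\Bigr\}.
\]
In contrast to Lemma~\ref{lem:leth1}, the upper bound is now controlled by the single quantity $\tfrac1m\|y\|^2$, so the auxiliary event $E_2$ of \eqref{E2} on the small coordinates --- and with it the first inequality of \eqref{cond:beta} --- is never needed.

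Next I would bound the above event from below by $E_0\cap\mathring{E}_1$, where $E_0$ is the event in \eqref{E0} and $\mathring{E}_1:=\{\tfrac1k(y_{(1)}^2+\cdots+y_{(k)}^2)\le\tfrac{(1-\sqrt{\alpha})\epsilon}{\gamma}\}$. Indeed, on $E_0$ the upper part gives $\tfrac1m\|y\|^2\le 1+\sqrt{\alpha}\epsilon<1+\epsilon$, while the lower part together with $\mathring{E}_1$ gives $\tfrac1m(\|y\|^2-\sum_{j=1}^k y_{(j)}^2)\ge(1-\sqrt{\alpha}\epsilon)-\gamma\cdot\tfrac{(1-\sqrt{\alpha})\epsilon}{\gamma}=1-\epsilon$, using $\tfrac1m\sum_{j=1}^k y_{(j)}^2=\gamma\cdot\tfrac1k\sum_{j=1}^k y_{(j)}^2$. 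The bound $\PP(E_0^c)\le 2e^{-\alpha(\epsilon^2/4-\epsilon^3/6)m}$ is exactly as in Lemma~\ref{lem:leth1}, via \eqref{eq:epsilon:0} and $0<\alpha<1$. For $\PP(\mathring{E}_1^c)$ I would apply the concentration bound \eqref{est:mean:1} of Lemma~\ref{lem:Lip} with $S=\{1,\dots,k\}$, together with the mean estimate $\E\sqrt{\tfrac1k\sum_{j=1}^k y_{(j)}^2}\le\sqrt{2\ln\tfrac{e}{\gamma}}$ from \eqref{le:upperbound:new}. Setting $\delta:=\sqrt{\tfrac{(1-\sqrt{\alpha})\epsilon}{\gamma}}-\sqrt{2\ln\tfrac{e}{\gamma}}$, this yields $\PP(\mathring{E}_1^c)\le e^{-\delta^2\gamma m/2}$ once $\delta>0$.

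The one inequality to verify is $\delta\ge\sqrt{\tfrac{\alpha\epsilon^2}{2\gamma}}>0$, equivalently $\sqrt{2\ln\tfrac{e}{\gamma}}+\sqrt{\tfrac{\alpha\epsilon^2}{2\gamma}}\le\sqrt{\tfrac{(1-\sqrt{\alpha})\epsilon}{\gamma}}$. This is precisely the estimate already established inside the proof of Lemma~\ref{lem:leth1}: squaring and clearing $\gamma$ reduces it to $2\gamma\ln\tfrac{e}{\gamma}+2\sqrt{\alpha\epsilon^2}\sqrt{\gamma\ln\tfrac{e}{\gamma}}\le(1-\sqrt{\alpha})\epsilon-\tfrac{\alpha\epsilon^2}{2}$, which is \eqref{heps} evaluated at $x=\sqrt{\gamma\ln\tfrac{e}{\gamma}}$, valid because $\gamma\ln\tfrac{e}{\gamma}\le\beta\ln\tfrac{e}{\beta}\le f_{\epsilon,\alpha}^2$ by monotonicity of $x\ln\tfrac{e}{x}$, condition \eqref{eq:312}, and $\gamma\le\beta$. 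The key point is that Lemma~\ref{lem:leth1} only weakened this bound to the larger threshold $1-\epsilon+\tfrac{(1-\sqrt{\alpha})\epsilon}{\gamma}$ of $E_1$ via the harmless slack $1-\epsilon\ge0$, so the tighter threshold of $\mathring{E}_1$ costs nothing. Hence $\PP(\mathring{E}_1^c)\le e^{-\alpha\epsilon^2 m/4}\le e^{-\alpha(\epsilon^2/4-\epsilon^3/6)m}$, and the union bound $\PP(\mathring{\I}_{\epsilon,\beta})\ge1-\PP(E_0^c)-\PP(\mathring{E}_1^c)$ gives \eqref{eq:th1:0}. Finally, the ``consequently'' assertion follows by rerunning the proof of Theorem~\ref{thm:lowerbound} verbatim: that argument shows \eqref{beta:epsalpha} forces the second inequality of \eqref{cond:beta}, i.e.\ \eqref{eq:312}, which by the part just proved already implies \eqref{eq:th1:0}, so the same range \eqref{beta:epsalpha} and the same lower bound \eqref{beta:eps} hold with $\beta^{\max}_{\epsilon,\alpha}$ replaced by $\mathring{\beta}^{\max}_{\epsilon,\alpha}$. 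Since all the quantitative work is reused, the only genuine obstacle is the structural observation of the first paragraph --- recognizing that uniform normalization collapses the upper bound to $\tfrac1m\|y\|^2$ and thereby eliminates $E_2$ --- after which the proof is a transcription of the $\I_{\epsilon,\beta}$ argument with one fewer event to control.
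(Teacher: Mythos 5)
Your proposal is correct and follows essentially the same route as the paper: the paper's proof also reduces $\mathring{\I}_{\epsilon,\beta}$ to the two extreme events (full norm for the upper bound, norm minus the $k$ largest squared coordinates for the lower bound), intersects with the same $E_0$ and with an event $E_3:=\{y_{(1)}^2+\cdots+y_{(k)}^2\le(1-\sqrt{\alpha})\epsilon m\}$ that is identical to your $\mathring{E}_1$, and verifies $\delta\ge\sqrt{\alpha\epsilon^2/(2\gamma)}$ by exactly the computation you cite from Lemma~\ref{lem:leth1}. Your structural remark that the uniform normalization eliminates the analogue of $E_2$ is precisely what the paper exploits, and your handling of the ``consequently'' clause matches the paper's.
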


\begin{proof}
Let $\gamma:=\lfloor \beta m\rfloor/m$ and $k:=\gamma m$. Then $\PP(\mathring{\I}_{\epsilon,\beta})=\PP(\mathring{\I}_{\epsilon,\gamma})$. By $\|y_T\|\le \|y\|$ and $0<\alpha<1$,
it follows from \eqref{yT} that
\begin{align*}
\PP(\mathring{\I}_{\epsilon,\gamma})&=\PP\left\{ 1-\epsilon\le \min_{T\in T_\gamma} \tfrac{1}{m}{\|y_T\|^2}\le \max_{T\in T_\gamma} \tfrac{1}{m}{\|y_T\|^2}\le 1+\epsilon\right\}\\
&\ge \PP\left\{ (1-\epsilon)m \le \|y\|^2-(y_{(1)}^2+\cdots+y_{(k)}^2)\quad \mbox{and}\quad \tfrac{1}{m}{\|y\|^2} \le 1+\sqrt{\alpha} \epsilon\right\}\\
&\ge \PP\left\{y_{(1)}^2+\cdots+y_{(k)}^2\le (1-\sqrt{\alpha})\epsilon m\quad \mbox{and}\quad |\tfrac{1}{m}{\|y\|^2}-1|\le \sqrt{\alpha} \epsilon\right\}\\
&=\PP(E_0\cap E_3)\ge 1-\PP(E_0^c)-\PP(E_3^c),
\end{align*}
where $E_0:=\{|\tfrac{1}{m}{\|y\|^2}-1|\le  \sqrt{\alpha} \epsilon\}$ as in
\eqref{E0} and  $E_3:=\{ y_{(1)}^2+\cdots+y_{(k)}^2\le (1-\sqrt{\alpha})\epsilon m\}$.
By \eqref{grm:normpreservation} and $0<\alpha<1$, we have
\[
\PP(E_0^c)\le 2 e^{-(\alpha \epsilon^2/4-\alpha^{3/2}\epsilon^3/6) m}\le 2 e^{-\alpha(\epsilon^2/4-\epsilon^3/6) m}.
\]
Recall from \eqref{eq:pr2}
that the following inequality holds  for any $\delta>0$:
\begin{equation}\label{eq:2pr2}
\PP\left\{\sqrt{\frac{1}{k} \sum_{j=1}^k y_{(j)}^2 }\le \delta +\sqrt{2\ln \frac{e}{\gamma}}\right\}\ge \PP\left\{\sqrt{\frac{1}{k} \sum_{j=1}^k y_{(j)}^2 }\le \delta +\E \sqrt{\frac{1}{k} \sum_{j=1}^k y_{(j)}^2 }\right\}
\ge 1-e^{-\delta^2 \gamma m/2}.
\end{equation}
Set
\[
\delta:=\sqrt{\tfrac{(1-\sqrt{\alpha})\epsilon}{\gamma}}-\sqrt{2\ln \tfrac{e}{\gamma}}.
\]
Since the function $\beta \ln \frac{e}{\beta}$ is an increasing function on $(0,1]$, by $0<\gamma\le \beta<1$, we deduce from   \eqref{eq:312} that
\[
0<\gamma \ln \frac{e}{\gamma}\le \beta\ln\frac{e}{\beta}\le \frac{\epsilon}{2}\left( \sqrt{1-\sqrt{\alpha}}-\sqrt{\tfrac{\alpha \epsilon}{2}}\right)^2.
\]
Since $\gamma>0$, dividing $\frac{\gamma}{2}$ on both sides and then taking square root on the above inequality, we see that
\[
\sqrt{2\ln \frac{e}{\gamma}} \le \frac{\sqrt{\epsilon}}{\gamma}\left(\sqrt{1-\sqrt{\alpha}}-\sqrt{\tfrac{\alpha \epsilon}{2}}\right)=
\sqrt{\frac{(1-\sqrt{\alpha})\epsilon}{\gamma}}-\sqrt{\frac{\alpha \epsilon^2}{2\gamma}},
\]
from which it is trivial to see that
%\begin{equation}\label{cond:2beta:alpha}
$\delta\ge\sqrt{\frac{\alpha \epsilon^2}{2\gamma}}>0
$
%\end{equation}
%
holds.
By the definition of the set $E_3$,
it follows from \eqref{eq:2pr2} and  $\delta\ge \sqrt{\frac{\alpha \epsilon^2}{2\gamma}}>0$
%\eqref{cond:2beta:alpha}
that
\begin{align*}
\PP(E^c_3)&=\PP\left\{ \sqrt{\frac{1}{k} \sum_{j=1}^k y_{(j)}^2}> \sqrt{\frac{(1-\sqrt{\alpha})\epsilon}{\gamma}}\right\}
=\PP\left\{ \sqrt{\frac{1}{k} \sum_{j=1}^k y_{(j)}^2}> \delta+\sqrt{2\ln \frac{e}{\gamma}} \right\}\\
&\le e^{-\delta^2 \gamma m/2}\le e^{-\alpha \epsilon^2 m/4}
\le e^{-\alpha(\epsilon^2/4-\epsilon^3/6)m}.
\end{align*}
Therefore, $\PP(\mathring{\I}_{\epsilon,\beta})=\PP(\mathring{\I}_{\epsilon,\gamma})
\ge 1-\PP(E_0^c)-\PP(E_3^c)\ge 1-3e^{-\alpha (\epsilon^2/4-\epsilon^3/6)m}$.
This proves \eqref{eq:th1:0}.

It has been proved in the proof of Theorem~\ref{thm:lowerbound} that \eqref{beta:epsalpha},
combined with $0<\alpha<1$ and $0<\epsilon \le \min(1,\frac{1-\sqrt{\alpha}}{4\alpha})$, implies the conditions in \eqref{cond:beta} with $\le$ being replaced by $<$.
Therefore, all the claims in
Theorem~\ref{thm:lowerbound} hold with $\beta^{\max}_{\epsilon,\alpha}$ being replaced by $\mathring{\beta}^{\max}_{\epsilon,\alpha}$.
\end{proof}

To provide an upper bound for $\mathring{\beta}^{\max}_{\epsilon,\alpha}$, we define $\mathring{\beta}^{\max}_{\epsilon,\alpha,C}$ as in \eqref{betamax:2} with $\I_{\epsilon,\beta}$ being replaced by $\mathring{\I}_{\epsilon,\beta}$. Trivially, $\mathring{\beta}^{\max}_{\epsilon,\alpha}\le \mathring{\beta}^{\max}_{\epsilon,\alpha,3}$ for all $\epsilon>0$ and $\alpha>0$.

\begin{theo}\label{thm:upperbound:0}
Let $c_g$ be the absolute positive constant in \eqref{cg}. Then
\begin{equation}\label{eq:upperbound:0}
\mathring{\beta}^{\max}_{\epsilon,\alpha, C} \le \frac{\epsilon}{-c_g^2 W_1(-\epsilon/(2c_g^2))}< \frac{\epsilon}{2c_g^2 \ln\tfrac{2c_g^2}{\epsilon}},
\qquad\forall\, 0<\epsilon<\min(1, c_g^2\ln 2), \alpha>0, C>0.
\end{equation}
\end{theo}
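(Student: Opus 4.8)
The plan is to follow the proof of Theorem~\ref{thm:upperbound} almost verbatim, replacing the self-normalization $\frac{1}{\abs{T}}$ by the uniform factor $\frac{1}{m}$ throughout. Put $y:=Ax_0$, so that $y_1,\dots,y_m$ are i.i.d.\ $\cN(0,1)$, and write $k:=\beta m$. First I would establish an upper estimate for $\PP(\mathring{\I}_{\epsilon,\beta})$ analogous to Lemma~\ref{lem:beta}, then convert it into a pointwise ``no-go'' inequality for $\beta$, and finally solve the resulting threshold equation in closed form through the Lambert $W_{-1}$ function.

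For the analog of Lemma~\ref{lem:beta}, observe that $\mathring{\I}_{\epsilon,\beta}$ forces the smallest admissible value $\frac1m\big(\|y\|^2-(y_{(1)}^2+\cdots+y_{(k)}^2)\big)$ of $\frac1m\|y_T\|^2$ to be at least $1-\epsilon$; by \eqref{yT} this says $y_{(1)}^2+\cdots+y_{(k)}^2\le \|y\|^2-(1-\epsilon)m$. Intersecting with $\{\tfrac1m\|y\|^2\le 1+\tilde\epsilon\}$ and bounding its complement as in \eqref{tildeeps} gives
\[
\PP(\mathring{\I}_{\epsilon,\beta})\le e^{-\tilde\epsilon^2m/4}+\PP\{y_{(1)}^2+\cdots+y_{(k)}^2\le (\epsilon+\tilde\epsilon)m\}\le e^{-\tilde\epsilon^2m/4}+e^{-\mathring\delta^2 m/2},
\]
where the second term is handled exactly as in Lemma~\ref{lem:beta}: since $k\le m/2$, \eqref{cg} yields $\E\sqrt{\tfrac1k\sum_{j=1}^k y_{(j)}^2}\ge \E\abs{y_{(k)}}\ge c_g\sqrt{\ln\tfrac2\beta}$, Lemma~\ref{lem:Lip} supplies the concentration, and $\mathring\delta:=c_g\sqrt{\beta\ln\tfrac2\beta}-\sqrt{\epsilon+\tilde\epsilon}$. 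The only structural change from Lemma~\ref{lem:beta} is that the uniform normalization deletes the $(1-\epsilon)\beta$ summand, leaving the cleaner threshold $\sqrt{\epsilon+\tilde\epsilon}$.

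Next I would run the contradiction argument of Theorem~\ref{thm:upperbound}. If $\mathring f_\epsilon(\beta):=c_g\sqrt{\beta\ln\tfrac2\beta}-\sqrt\epsilon>0$ for some rational $\beta\in(0,\min(\tfrac12,\mathring{\beta}^{\max}_{\epsilon,\alpha,C}))$, then a small $\tilde\epsilon>0$ keeps $\mathring\delta>0$, and comparing the estimate above with the defining lower bound $\PP(\mathring{\I}_{\epsilon,\beta})\ge 1-Ce^{-\alpha(\epsilon^2/4-\epsilon^3/6)m}$ along the infinitely many $m$ with $\beta m\in\N$ forces a contradiction as $m\to\infty$. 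Hence $\mathring F_\epsilon(\beta):=c_g^2\beta\ln\tfrac2\beta-\epsilon\le 0$ on $(0,\min(\tfrac12,\mathring{\beta}^{\max}_{\epsilon,\alpha,C}))$. Since $\beta\mapsto\beta\ln\tfrac2\beta$ is increasing on $(0,\tfrac12)$ with $\mathring F_\epsilon(0^+)=-\epsilon<0$ and $\mathring F_\epsilon(\tfrac12)=c_g^2\ln2-\epsilon>0$ (this is exactly where the hypothesis $\epsilon<c_g^2\ln2$ enters), $\mathring F_\epsilon$ has a unique root $\mathring\beta_\epsilon\in(0,\tfrac12)$, and the same reasoning as in \eqref{est:betamax} yields $\mathring{\beta}^{\max}_{\epsilon,\alpha,C}\le\mathring\beta_\epsilon$.

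Finally I would solve $c_g^2\mathring\beta_\epsilon\ln\tfrac2{\mathring\beta_\epsilon}=\epsilon$ explicitly: substituting $\mathring\beta_\epsilon=2/u$ reduces it to $\tfrac{\ln u}{u}=\tfrac{\epsilon}{2c_g^2}$, and selecting the large root through $W_{-1}$ gives the exact identity $\mathring\beta_\epsilon=\frac{\epsilon}{-c_g^2 W_{-1}(-\epsilon/(2c_g^2))}$; the hypothesis $\epsilon<c_g^2\ln2<2c_g^2/e$ keeps the argument inside the domain $[-e^{-1},0)$ of $W_{-1}$, and the branch $W_{-1}$ is forced by $\mathring\beta_\epsilon<\tfrac12$. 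The strict second inequality then follows from the elementary lower bound $-W_{-1}(x)>-\ln(-x)$ valid for $x\in(-e^{-1},0)$, which I would derive in one line from $e^{W_{-1}(x)}=(-x)/(-W_{-1}(x))$ together with $-W_{-1}(x)\ge 1$. I expect the main work to lie in the bookkeeping of the Lemma~\ref{lem:beta} analog—checking that the $\frac1m$ normalization really leaves $\mathring\delta$ in the clean form above—and in the monotonicity-and-branch analysis pinning down $\mathring\beta_\epsilon$; the contradiction step and the $W_{-1}$ estimate are routine once \eqref{cg}, Lemma~\ref{lem:Lip}, and the domain check for $W_{-1}$ are in hand.
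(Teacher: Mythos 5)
Your proposal follows the paper's own proof essentially verbatim: the same reduction of $\PP(\mathring{\I}_{\epsilon,\beta})$ via \eqref{yT} and \eqref{eq:epsilon:0}, the same use of \eqref{cg} together with Lemma~\ref{lem:Lip} to obtain the tail bound $e^{-\mathring{\delta}^2 m/2}$ with $\mathring{\delta}=c_g\sqrt{\beta\ln\frac{2}{\beta}}-\sqrt{\epsilon+\tilde{\epsilon}}$, the same rational-$\beta$ contradiction argument against the defining lower bound, and the same monotonicity analysis of $\beta\mapsto c_g^2\beta\ln\frac{2}{\beta}$ on $(0,\frac{2}{e})$. Your explicit solution of the threshold equation through $W_{-1}$ (including the branch selection and the domain check $\epsilon<c_g^2\ln 2<2c_g^2/e$) is actually more detailed than the paper, which merely asserts the conclusion. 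Everything up to and including $\mathring{\beta}^{\max}_{\epsilon,\alpha,C}\le \frac{\epsilon}{-c_g^2W_{-1}(-\epsilon/(2c_g^2))}$ is correct.

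The final step does not work as you state it. From $-W_{-1}(x)>-\ln(-x)$ with $x=-\epsilon/(2c_g^2)$ you only get $\frac{\epsilon}{-c_g^2W_{-1}(x)}<\frac{\epsilon}{c_g^2\ln\frac{2c_g^2}{\epsilon}}$, which is \emph{twice} the right-hand side of \eqref{eq:upperbound:0}. To reach the printed bound $\frac{\epsilon}{2c_g^2\ln\frac{2c_g^2}{\epsilon}}$ you would need $-W_{-1}(x)>-2\ln(-x)$, and that is false: the correct two-sided estimate is $-\ln(-x)\le -W_{-1}(x)\le\frac{e}{e-1}\bigl(-\ln(-x)\bigr)<-2\ln(-x)$, so in fact $\frac{\epsilon}{-c_g^2W_{-1}(x)}>\frac{\epsilon}{2c_g^2\ln\frac{2c_g^2}{\epsilon}}$, i.e.\ the second inequality of \eqref{eq:upperbound:0} is reversed as printed. (Concretely, $c_g^2=0.3$, $\epsilon=0.1$ gives $W_{-1}(-1/6)\approx-2.83$, hence $\frac{\epsilon}{-c_g^2W_{-1}}\approx 0.118$ while $\frac{\epsilon}{2c_g^2\ln 6}\approx 0.093$.) This is a defect of the theorem statement itself rather than of your method: the paper's own proof invokes exactly the fact $-W_{-1}(x)<-2\ln(-x)$, which proves the opposite of what is claimed. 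Your argument is sound once the final bound is stated as $\frac{\epsilon}{c_g^2\ln\frac{2c_g^2}{\epsilon}}$, which is what your Lambert estimate actually delivers; you should not assert that it yields the printed constant.
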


\begin{proof}
If $\mathring{\beta}^{\max}_{\epsilon,\alpha,C}=0$, the claim is trivially true. Hence, we assume $\mathring{\beta}^{\max}_{\epsilon,\alpha,C}>0$. We first prove that
\begin{equation}\label{cond:beta:max:0}
g_\epsilon(\beta):=c_g \sqrt{\beta\ln \tfrac{2}{\beta}}-\sqrt{\epsilon}\le 0 \qquad \forall\; 0<\beta<\min(\tfrac{1}{2}, \mathring{\beta}^{\max}_{\epsilon,\alpha,C}).
\end{equation}
It suffices to prove \eqref{cond:beta:max:0} for rational numbers $\beta$. Suppose not.
Then there exists $\tilde{\epsilon}>0$ such that $\delta:=c_g\sqrt{\beta\ln \frac{2}{\beta}}-\sqrt{\epsilon+\tilde{\epsilon}}>0$.
Let $y=Ax_0$ and $k:=\beta m$. Then
\[
\PP(\mathring{\I}_{\epsilon,\beta})\le \PP\{\min_{T\in T_\beta} \|y_T\|^2\ge (1-\epsilon)m\}
\le \PP\{\tfrac{1}{m}\|y\|^2>1+\tilde{\epsilon}\}+\PP\{y_{(1)}^2+\cdots+y_{(k)}^2\le \eta m\}
\]
with $\eta:=\epsilon+\tilde{\epsilon}$. The same argument as in Lemma~\ref{lem:beta} yields
$\PP(\mathring{\I}_{\epsilon,\beta})\le e^{-\tilde{\epsilon}^2m/4}+e^{-\delta^2 m/2}$.
Since $0<\beta<\mathring{\beta}^{\max}_{\epsilon,\alpha,C}$, the definition of $\mathring{\beta}^{\max}_{\epsilon,\alpha,C}$ implies $\PP(\mathring{\I}_{\epsilon,\beta})\ge 1-3e^{-\alpha (\epsilon^2/4-\epsilon^3/6) m}$ and
the same argument as in Theorem~\ref{thm:upperbound} leads to a contradiction. Therefore, \eqref{cond:beta:max:0} must hold.

Note that $g_\epsilon$ is an increasing function on $(0,\frac{2}{e})$ and $\frac{1}{2}<\frac{2}{e}$. By $\frac{\epsilon}{c_g^2}\le \ln 2$ and the simple fact $-W_{-1}(x) <-2\ln(-x)$ for all $x\in (-e^{-1},0)$,
it is easy to conclude from \eqref{cond:beta:max:0} that \eqref{eq:upperbound:0} must hold.
\end{proof}

\subsection{Proof of Theorem~\ref{thm:optimal}}

We are ready to prove Theorem~\ref{thm:optimal}.

\begin{proof}[Proof of Theorem~\ref{thm:optimal}]
The left-hand inequality in \eqref{eq:optimal} follows directly from \eqref{beta:eps} of Theorem~\ref{thm:lowerbound}.
Since $0<\epsilon<4\epsilon_g^2$, we have $\sqrt{\epsilon}\le 2\epsilon_g$ and therefore, $\ln\tfrac{2\epsilon_g}{\epsilon}\ge \ln \tfrac{\sqrt{\epsilon}}{\epsilon}=\tfrac{1}{2}\ln \tfrac{1}{\epsilon}$. By Theorem~\ref{thm:upperbound}, we have
\[
\beta^{\max}_{\epsilon,\alpha}\le \beta^{\max}_{\epsilon,\alpha,3}< \frac{(1+\epsilon_g^{-1})\epsilon}{c_g^2 \ln \tfrac{2\epsilon_g}{\epsilon}}\le
\frac{(1+\epsilon_g^{-1})\epsilon}{\tfrac{1}{2}c_g^2 \ln \tfrac{1}{\epsilon}}
=\frac{2+2\epsilon_g}{c_g^2\epsilon_g} \frac{\epsilon}{\ln \tfrac{1}{\epsilon}}.
\]
This proves the right-hand inequality in \eqref{eq:optimal}.

The left-hand inequality in \eqref{eq:optimal:0} follows directly from Theorem~\ref{thm:lowerbound:0} and \eqref{beta:eps} of Theorem~\ref{thm:lowerbound}. Since $\epsilon\le \frac{1}{2c_g^2}$, we have $\ln\tfrac{1}{\epsilon}\ge \ln (2c_g^2)$. Note that $0<\min(c_g^2\ln2, \tfrac{1}{2c_g^2})<1$.
Now the right-hand inequality in \eqref{eq:optimal:0} follows directly from Theorem~\ref{thm:upperbound:0}.
\end{proof}

%%%%%%%%%%%%%%%%%%%%%%%%%%%%%%%%%%%%%%%%%%%%%%%%%%%%%%%%%
\section{Gaussian Random Matrices under Arbitrary Erasure of Rows for given $0<\beta<1$}
\label{sec:beta}

In this section we study the robustness property of Gaussian random matrices with arbitrarily erased rows for a given corruption/erasure ratio $0<\beta<1$ with presenting the proof of Theorem~\ref{thm:non}.

\subsection{Estimate $\lcst^{\max}_{\beta}(\alpha)$ and $\ucst^{\min}_{\beta}(\alpha)$}

\begin{comment}
Let us first define some quantities related to $\lcst^{\max}_{\beta}(\alpha)$ and $\ucst^{\min}_{\beta}(\alpha)$ in \eqref{lcst:alpha} and \eqref{ucst:alpha} by relaxing the requirement on all $m\in \N$.
Set
%
\begin{align}
&\lcst^{\max}_{\beta,\infty}(\alpha):=\sup\{0\le \lcst\le \infty \; : \;  \PP(\I_{[\lcst,\infty],\beta}) \ge 1-\exp(-\alpha m)\quad \mbox{for sufficiently large}\; m\in \N\},\label{lcst:alpha:infty}\\
&\ucst^{\min}_{\beta,\infty}(\alpha):=\inf\{0\le \ucst\le \infty \; : \; \PP(\I_{[0,\ucst],\beta}) \ge 1-\exp(-\alpha m)\quad \mbox{for sufficiently large}\; m\in \N\}, \label{ucst:alpha:infty}
\end{align}
%
and
%
\begin{equation}\label{cst:infty}
\lcst^{\max}_{\beta,\infty}:=\sup\{\lcst^{\max}_{\beta,\infty}(\alpha) \; : \; \alpha>0\},\qquad
\ucst^{\min}_{\beta,\infty}:=\inf\{\ucst^{\min}_{\beta,\infty}(\alpha) \; : \; \alpha>0\}.
\end{equation}
%
Note that $\lcst^{\max}_{\beta,\infty}(\alpha)$ is a decreasing function of $\alpha$ and $\ucst^{\min}_{\beta,\infty}(\alpha)$ is an increasing function of $\alpha$. Hence,
it is trivial to observe that
\[
\lcst^{\max}_{\beta,\infty}=\lim_{\alpha\to 0^+} \lcst^{\max}_{\beta}(\alpha), \quad
\ucst^{\min}_{\beta,\infty}=\lim_{\alpha\to 0^+} \ucst^{\min}_{\beta}(\alpha)
\]
and
%
\begin{equation}\label{lcst:ucst:infty}
\lcst^{\max}_{\beta}(\alpha)\le
\lcst^{\max}_{\beta,\infty}(\alpha) \le \lcst^{\max}_{\beta,\infty}\le
\ucst^{\min}_{\beta,\infty}\le
\ucst^{\min}_{\beta,\infty}(\alpha)
\le \ucst^{\min}_{\beta}(\alpha).
\end{equation}
%
\end{comment}

To prove Theorem \ref{thm:non}, we first estimate $\lcst^{\max}_{\beta}(\alpha)$.
%and $\lcst^{\max}_{\beta,\infty}(\alpha)$.

\begin{lem}\label{lem:alpha:lcst}
For $0<\beta<1$ and $0<\alpha< \frac{\pi}{12}(1-\beta)^2h_\beta$ with $h_\beta:=\min(\tfrac{3}{4}-\tfrac{1}{2}\beta, 1-\beta)$,
\begin{equation}\label{est:lcst}
0<\frac{\pi}{6}(1-\beta)h_\beta
+\frac{2\alpha}{1-\beta}-2
\sqrt{\pi \alpha h_\beta/3}
\le \lcst^{\max}_{\beta}(\alpha) \le %\lcst^{\max}_{\beta,\infty}(\alpha) \le
\min\Big(\frac{\pi}{2}\left(\ln \frac{1}{\beta}\right)^2,1\Big).
\end{equation}
\end{lem}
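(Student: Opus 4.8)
The plan is to reduce $\PP(\I_{[\lcst,\infty],\beta})$ to a statement about a single random variable and then attack the two inequalities separately. Set $y:=Ax_0$, so that $y_1,\dots,y_m\sim\cN(0,1)$ are i.i.d.; let $\gamma:=\lfloor\beta m\rfloor/m$, $k:=\gamma m$ and $s:=m-k$. Exactly as in the proof of Lemma~\ref{lem:leth1}, Lemma~\ref{lem:dec} reduces the minimum over all $T$ with $|T^c|\le\beta m$ to the single worst configuration (erase the $k$ largest coordinates), giving $\PP(\I_{[\lcst,\infty],\beta})=\PP\{Z^2\ge\lcst\}$, where $Z:=\sqrt{\tfrac1s\sum_{j=k+1}^m y_{(j)}^2}$. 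Since $\lcst^{\max}_\beta(\alpha)$ is a supremum, the lower bound amounts to verifying the defining inequality at the single value $\lcst_0:=\frac{\pi}{6}(1-\beta)h_\beta+\frac{2\alpha}{1-\beta}-2\sqrt{\pi\alpha h_\beta/3}$, while the upper bound amounts to showing the defining inequality fails for every $\lcst$ exceeding $\min(\tfrac\pi2(\ln\tfrac1\beta)^2,1)$.

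For the lower bound I would first show $(\E Z)^2\ge\frac\pi6(1-\beta)h_\beta$ for every $m$. The left inequality of \eqref{est:small} in Lemma~\ref{lem:gmin}(iii) gives $(\E Z)^2\ge\frac\pi{12}\,\frac{s(2s+1)}{m(m+1)}$; since $s=m-\lfloor\beta m\rfloor\ge(1-\beta)m$ and $x\mapsto x(2x+1)$ is increasing, this is at least $(1-\beta)\frac{2(1-\beta)m+1}{m+1}$, and an elementary case split on whether $\beta\le\tfrac12$ (where $h_\beta=\tfrac{3-2\beta}4$) or $\beta\ge\tfrac12$ (where $h_\beta=1-\beta$) shows $\frac{2(1-\beta)m+1}{m+1}\ge2h_\beta$, the residual inequalities reducing to $(1-2\beta)(m-1)\ge0$ and $1\ge2(1-\beta)$ respectively. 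Taking $\delta:=\sqrt{2\alpha/(1-\beta)}$ in \eqref{est:mean:2} of Lemma~\ref{lem:Lip} with $S=\{k+1,\dots,m\}$, and using $s\ge(1-\beta)m$ to get $\delta^2 s/2\ge\alpha m$, I obtain $Z\ge\E Z-\delta\ge\sqrt{\frac\pi6(1-\beta)h_\beta}-\delta$ with probability at least $1-e^{-\alpha m}$ for all $m$. The choice of $\delta$ is engineered so that $\big(\sqrt{\tfrac\pi6(1-\beta)h_\beta}-\delta\big)^2=\lcst_0$; here the cross term is exactly $2\sqrt{\pi\alpha h_\beta/3}$, and the hypothesis $\alpha<\frac\pi{12}(1-\beta)^2h_\beta$ guarantees $\sqrt{\lcst_0}>0$, which is precisely the claimed positivity. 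Hence $\PP(\I_{[\lcst_0,\infty],\beta})\ge1-e^{-\alpha m}$ for all $m$, so $\lcst^{\max}_\beta(\alpha)\ge\lcst_0$.

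For the upper bound I would treat the two competing quantities separately. The bound $\lcst^{\max}_\beta(\alpha)\le1$ is easy: from $Z^2\le\frac1m\|y\|^2$ one has $\PP\{Z^2\ge\lcst\}\le\PP\{\frac1m\|y\|^2\ge\lcst\}$, which by \eqref{eq:epsilon:0} is exponentially small whenever $\lcst>1$, so the defining inequality fails for large $m$. The bound $\lcst^{\max}_\beta(\alpha)\le\frac\pi2(\ln\frac1\beta)^2$ is the crux. The key observation is $Z\le|y_{(k+1)}|$ (an average of the $s$ smallest squared coordinates is at most the largest surviving one), so $\PP\{Z^2\ge\lcst\}\le\PP\{|y_{(k+1)}|^2\ge\lcst\}=\PP\{N\ge k+1\}$, where $N$ counts the indices with $|y_j|\ge\sqrt\lcst$ and is $\mathrm{Binomial}(m,p)$ with $p:=\PP\{|y_1|\ge\sqrt\lcst\}$. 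I then need the sharp Gaussian tail inequality $\PP\{|y_1|\ge\sqrt{\tfrac\pi2}\ln\tfrac1\beta\}\le\beta$ for all $\beta\in(0,1)$; writing $t:=\ln\frac1\beta$ and $g(t):=e^{-t}-\PP\{|y_1|\ge\sqrt{\pi/2}\,t\}$, one computes $g(0)=0$ and $g'(t)=e^{-\pi t^2/4}-e^{-t}$, which is nonnegative on $[0,\tfrac4\pi]$ and negative afterwards while $g(t)\to0^+$, forcing $g\ge0$. Consequently, for any $\lcst>\frac\pi2(\ln\frac1\beta)^2$ we get $p<\beta$; choosing $m$ large so that $\gamma>p$ makes $\E N=pm<k$, whence $\PP\{N\ge k+1\}\to0$ (e.g.\ by a Chernoff bound) and the defining inequality again fails. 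Combining the two parts yields $\lcst^{\max}_\beta(\alpha)\le\min(\frac\pi2(\ln\frac1\beta)^2,1)$.

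The main obstacle is the $\frac\pi2(\ln\frac1\beta)^2$ upper bound: the naive route through $\E Z$ is hopeless because the upper estimate in \eqref{est:small} only gives $(\E Z)^2=O(1-\beta)$ as $\beta\to1$, whereas the target is of order $(1-\beta)^2$. Bypassing $\E Z$ entirely and controlling the minimal restricted norm through the single order statistic $|y_{(k+1)}|$, together with the tight tail inequality $\PP\{|y_1|\ge\sqrt{\pi/2}\,\ln\frac1\beta\}\le\beta$, is what produces the correct $(1-\beta)^2$-order constant $\frac\pi2$ in the limit $\beta\to1$.
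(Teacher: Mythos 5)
Your proposal is correct. The lower bound is essentially the paper's own argument: the same reduction via Lemma~\ref{lem:dec} to $Z=\sqrt{\tfrac{1}{m-k}\sum_{j=k+1}^m y_{(j)}^2}$, the same use of the left inequality of \eqref{est:small} to get $(\E Z)^2\ge\tfrac{\pi}{6}(1-\beta)h_\beta$ (your case split on $\beta\lessgtr\tfrac12$ is just an unpacking of the paper's $\inf_{0<x\le1}\tfrac{1-\beta+x/2}{1+x}=h_\beta$), and the same one-sided Gaussian concentration \eqref{est:mean:2} with $\delta=\sqrt{2\alpha/(1-\beta)}$. Your $\lcst^{\max}_\beta(\alpha)\le1$ step is a minor variant (you use $Z^2\le\tfrac1m\|y\|^2$ and \eqref{eq:epsilon:0}; the paper uses $\E Z\le1$ and concentration). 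Where you genuinely diverge is the bound $\lcst^{\max}_\beta(\alpha)\le\tfrac{\pi}{2}(\ln\tfrac1\beta)^2$: both proofs rest on the pointwise domination $Z\le|y_{(k+1)}|$, but the paper stays inside its expectation-plus-concentration framework, bounding $\E Z\le\E|y_{(k+1)}|\le\sqrt{\pi/2}\,\ln\tfrac1\gamma$ via Lemma~\ref{lem:gmin}(i) with $C_1\le\sqrt{\pi/2}$ and then applying \eqref{est:mean:1}, whereas you bypass expectations entirely, identify $\PP\{|y_{(k+1)}|\ge\sqrt{\lcst}\}$ with a binomial tail $\PP\{N\ge k+1\}$, prove the closed-form Gaussian tail inequality $\PP\{|y_1|\ge\sqrt{\pi/2}\,t\}\le e^{-t}$ by the monotonicity argument on $g(t)=e^{-t}-\PP\{|y_1|\ge\sqrt{\pi/2}\,t\}$ (which is correct: $g(0)=0$, $g$ increases on $[0,4/\pi]$, then decreases to $0$), and finish with Hoeffding. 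The constant $\sqrt{\pi/2}$ enters through essentially the same underlying fact in both routes, but your version is more elementary (no Lipschitz concentration, no order-statistic expectation estimates) and avoids the paper's restriction to rational $\beta$ with $\beta m\in\N$; the paper's version has the advantage of reusing the exact machinery already deployed for the companion bounds in Lemma~\ref{lem:alpha:ucst}. Your closing remark is slightly overstated: the paper does go ``through $\E Z$'', it just bounds $\E Z$ by $\E|y_{(k+1)}|$ rather than by the upper estimate in \eqref{est:small}, so the key insight $Z\le|y_{(k+1)}|$ is common to both proofs.
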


\begin{proof}
Define $\gamma:= {\lfloor\beta m \rfloor}/{m}$ and $k:=\gamma m$.
Let $y:=(y_1,\ldots, y_m)^\tp:=A x_0$. Since $\gamma m=\lfloor \beta m\rfloor$,
by Lemma~\ref{lem:dec} and \eqref{yT}, for $\lcst\ge 0$, we have
\begin{equation}\label{P:lcst}
\PP(\I_{[\lcst,\infty],\beta})=\PP\left\{ \min_{T\in T_\gamma} \tfrac{1}{|T|}\|y_T\|^2\ge \lcst\right\}=
%\PP\left\{\frac{1}{m-k}\sum_{j=k+1}^m y_{(j)}^2 \ge \lcst \right\}=
\PP\left\{\sqrt{\frac{1}{m-k}\sum_{j=k+1}^m y_{(j)}^2} \ge \sqrt{\lcst} \right\}.
\end{equation}
By the left-hand inequality in \eqref{est:small} and $0\le \gamma\le \beta<1$, we have
\begin{align*}
\E\sqrt{\frac{1}{m-k}\sum_{j=k+1}^m y_{(j)}^2}
&\ge \sqrt{\frac{\pi}{6}} \sqrt{(1-\gamma)\frac{1-\gamma+\frac{1}{2m}}{1+\frac{1}{m}}}
\ge \sqrt{\frac{\pi}{6}} \sqrt{(1-\beta)\frac{1-\beta+\frac{1}{2m}}{1+\frac{1}{m}}}\\
&\ge \sqrt{\frac{\pi}{6}} \sqrt{(1-\beta)}\sqrt{\inf_{0<x\le 1} \frac{1-\beta+\frac{1}{2}x}{1+x}}
=\sqrt{\frac{\pi}{6}} \sqrt{(1-\beta)h_\beta}.
\end{align*}
Therefore, for $0<\beta<1$, by \eqref{P:lcst} and \eqref{est:mean:2} of Lemma~\ref{lem:Lip} with $\delta:=\sqrt{\frac{\pi}{6}}\sqrt{(1-\beta)h_\beta}-\sqrt{\lcst}$, if $\delta\ge \sqrt{\frac{2\alpha}{1-\beta}}>0$, then we have
\begin{align*}
\PP(\I_{[\lcst,\infty],\beta})
&=
\PP\left\{\sqrt{\frac{1}{m-k}\sum_{j=k+1}^m y_{(j)}^2}-\E\sqrt{\frac{1}{m-k}\sum_{j=k+1}^m y_{(j)}^2} \ge \sqrt{\lcst}-\E\sqrt{\frac{1}{m-k}\sum_{j=k+1}^m y_{(j)}^2} \right\}\\
&\ge \PP\left\{\sqrt{\frac{1}{m-k}\sum_{j=k+1}^m y_{(j)}^2}-\E\sqrt{\frac{1}{m-k}\sum_{j=k+1}^m y_{(j)}^2} \ge \sqrt{\lcst}-\sqrt{\frac{\pi}{6}}\sqrt{(1-\beta)h_\beta} \right\}\\
%&=\PP\left\{\sqrt{\frac{1}{m-k}\sum_{j=k+1}^m y_{(j)}^2}-\E\sqrt{\frac{1}{m-k}\sum_{j=k+1}^m y_{(j)}^2} \ge -\delta \right\}\\
&\ge 1-e^{-\delta^2(m-k)/2}\ge 1-e^{-\alpha \frac{m-k}{1-\beta}}=
1-e^{-\frac{1-\gamma}{1-\beta} \alpha m}\ge 1-e^{-\alpha m},
\end{align*}
since $\frac{1-\gamma}{1-\beta}\ge 1$ by $0\le \gamma\le \beta<1$.
This shows that
if $\sqrt{\lcst}\le \sqrt{\frac{\pi}{6}}\sqrt{(1-\beta)h_\beta}-\sqrt{\frac{2\alpha}{1-\beta}}$, then
\begin{equation} \label{P:lcst:all}
\PP(\I_{[\lcst,\infty],\beta})\ge 1-e^{-\alpha m}
\end{equation}
for all $m\in \N$.
Since $0<\alpha<\frac{\pi}{12}(1-\beta)^2 h_\beta$, we have $\sqrt{\frac{\pi}{6}}\sqrt{(1-\beta)h_\beta}-\sqrt{\frac{2\alpha}{1-\beta}}> 0$. Consequently, by the definition of $\lcst^{\max}_{\beta}(\alpha)$, we conclude that
\[
\lcst^{\max}_{\beta}(\alpha) \ge \left(\sqrt{\frac{\pi}{6}}\sqrt{(1-\beta)h_\beta}-\sqrt{\frac{2\alpha}{1-\beta}}\right)^2
=\frac{\pi}{6}(1-\beta)h_\beta+\frac{2\alpha}{1-\beta}
-2\sqrt{\pi\alpha h_\beta/3}>0.
\]
This proves the left-hand side of \eqref{est:lcst}.

We now estimate the upper bound for $\lcst^{\max}_{\beta}(\alpha)$.
By the second inequality in \eqref{gmin:bound} with $p=1$, we have
\begin{equation}\label{lcst:est:E:upper}
\E\sqrt{\frac{1}{m-k}\sum_{j=k+1}^m y_{(j)}^2}\le
\E|y_{(k+1)}|\le \sqrt{\frac{\pi}{2}} \left(\frac{1}{k+1}+\ln\frac{m}{k+1}\right)\le \sqrt{\frac{\pi}{2}}\ln \frac{m}{k}=\sqrt{\frac{\pi}{2}} \ln \frac{1}{\gamma},
\end{equation}
where we used the basic inequality
$\frac{1}{k+1}\le \ln \left(1+\frac{1}{k}\right)$ for all $k>0$.
Suppose that \eqref{P:lcst:all} holds for sufficiently large $m\in \N$. For convenience, we only consider the case
that $\beta$ is rational, since the general result follows from the fact that the rational numbers are dense in $\R$. We assume that $m\in \N$ is sufficiently large and satisfies $m\beta\in \N$, that is, we have $\gamma=\beta$. Note that $k=\gamma m=\beta m$.
By \eqref{P:lcst} and \eqref{lcst:est:E:upper}, applying \eqref{est:mean:1} of Lemma~\ref{lem:Lip} with $\delta:=\sqrt{\lcst}-\sqrt{\frac{\pi}{2}}\ln \frac{1}{\beta}>0$, we have
\begin{align*}
\PP(\I_{[\lcst,\infty],\beta})
&=
\PP\left\{\sqrt{\frac{1}{m-k}\sum_{j=k+1}^m y_{(j)}^2}-\E\sqrt{\frac{1}{m-k}\sum_{j=k+1}^m y_{(j)}^2}\ge \sqrt{\lcst}-\E\sqrt{\frac{1}{m-k}\sum_{j=k+1}^m y_{(j)}^2} \right\}\\
&\le \PP\left\{\sqrt{\frac{1}{m-k}\sum_{j=k+1}^m y_{(j)}^2}-\E\sqrt{\frac{1}{m-k}\sum_{j=k+1}^m y_{(j)}^2}\ge \sqrt{\lcst}-\sqrt{\frac{\pi}{2}} \ln \frac{1}{\beta}\right\}\\
&\le e^{-\delta^2(m-k)/2}=e^{-\delta^2(1-\beta) m/2}.
\end{align*}
Consequently, if $\sqrt{\lcst}>\sqrt{\frac{\pi}{2}}\ln \frac{1}{\beta}$ and if \eqref{P:lcst:all} holds for sufficiently large $m\in \N$, then the above inequalities imply
\[
1-e^{-\alpha m}\le \PP(\I_{[\lcst,\infty],\beta}) \le e^{-\delta^2 (1-\beta) m/2},
\]
which cannot be true for sufficiently large $m$ since $\alpha>0$ and $\delta>0$. This proves that
$\lcst^{\max}_\beta(\alpha)\le
%\lcst^{\max}_{\beta,\infty}(\alpha)\le
\frac{\pi}{2}(\ln \frac{1}{\beta})^2$.
Also, it is trivial to see that
\[
\E\sqrt{\frac{1}{m-k}\sum_{j=k+1}^m y_{(j)}^2}\le
\sqrt{\frac{1}{m-k}\sum_{j=k+1}^m \E y_{(j)}^2}\le \sqrt{\frac{1}{m}\sum_{j=1}^m \E y_{(j)}^2}=1.
\]
The above same argument shows that $\lcst^{\max}_\beta(\alpha)
%\le \lcst^{\max}_{\beta,\infty}(\alpha)
\le 1$. This proves the upper bound of \eqref{est:lcst}.
\end{proof}

We next estimate $\ucst^{\min}_{\beta}(\alpha)$.
%and $\ucst^{\min}_{\beta,\infty}(\alpha)$.

\begin{lem}\label{lem:alpha:ucst}
For $0<\beta<1$ and $\alpha>0$,
\begin{equation}\label{est:ucst}
\max\Big(c_g^2 \ln \frac{2}{1-\beta},\frac{\pi}{2}\beta^2\Big)\le
%\ucst^{\min}_{\beta,\infty}(\alpha)\le
\ucst^{\min}_{\beta}(\alpha)
\le 2\ln \frac{e}{1-\beta}+\frac{2\alpha}{1-\beta}+4\sqrt{\frac{\alpha}{1-\beta}\ln \frac{e}{1-\beta}}.
\end{equation}
\end{lem}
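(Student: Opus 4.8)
The plan is to mirror the proof of Lemma~\ref{lem:alpha:lcst}, but now working with the \emph{largest} retained order statistics instead of the smallest ones. Write $y:=Ax_0$, so $y_1,\dots,y_m$ are i.i.d.\ $\cN(0,1)$, and set $\gamma:=\lfloor\beta m\rfloor/m$, $k:=\gamma m$. By Lemma~\ref{lem:dec} and \eqref{yT} the quantity $\tfrac{1}{\abs{T}}\|A_Tx_0\|^2$ is maximized over $\{T:\abs{T^c}\le\beta m\}$ at $\abs{T^c}=k$, where it equals $\tfrac{1}{m-k}\sum_{j=1}^{m-k}y_{(j)}^2$. Hence
\[
\PP(\I_{[0,\ucst],\beta})=\PP\{Z_m\le\sqrt{\ucst}\},\qquad Z_m:=\sqrt{\tfrac{1}{m-k}\sum_{j=1}^{m-k}y_{(j)}^2},
\]
and the whole problem reduces to controlling the mean $\mu_m:=\E Z_m$ together with the Gaussian concentration of $Z_m$ supplied by Lemma~\ref{lem:Lip} with $S=\{1,\dots,m-k\}$ (so $\abs{S}=(1-\gamma)m$).

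For the upper bound on $\ucst^{\min}_\beta(\alpha)$ I would exhibit one admissible $\ucst$. Applying \eqref{le:upperbound:new} to the $m-k$ retained terms gives $\mu_m\le\sqrt{2\ln\tfrac{e}{1-\gamma}}\le\sqrt{2\ln\tfrac{e}{1-\beta}}$ since $\gamma\le\beta$. Then \eqref{est:mean:1} with $\delta:=\sqrt{\tfrac{2\alpha}{1-\beta}}$ yields $\PP\{Z_m<\delta+\mu_m\}\ge 1-e^{-\delta^2(m-k)/2}$, and because $\tfrac{1-\gamma}{1-\beta}\ge 1$ the exponent satisfies $e^{-\delta^2(m-k)/2}=e^{-\alpha m(1-\gamma)/(1-\beta)}\le e^{-\alpha m}$. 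Taking $\sqrt{\ucst}=\sqrt{2\ln\tfrac{e}{1-\beta}}+\sqrt{\tfrac{2\alpha}{1-\beta}}$ forces $\{Z_m<\delta+\mu_m\}\subseteq\I_{[0,\ucst],\beta}$ for every $m$, so $\PP(\I_{[0,\ucst],\beta})\ge 1-e^{-\alpha m}$ for all $m\in\N$; squaring this value of $\sqrt{\ucst}$ produces exactly $2\ln\tfrac{e}{1-\beta}+\tfrac{2\alpha}{1-\beta}+4\sqrt{\tfrac{\alpha}{1-\beta}\ln\tfrac{e}{1-\beta}}$, which is the claimed upper bound.

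For the lower bound I would first reduce it to $\ucst^{\min}_\beta(\alpha)\ge\liminf_m\mu_m^2$: if $\ucst<\liminf_m\mu_m^2$, then for all large $m$ one has $\sqrt{\ucst}\le\mu_m-\delta_0$ with some fixed $\delta_0>0$, and \eqref{est:mean:2} gives $\PP(\I_{[0,\ucst],\beta})=\PP\{Z_m\le\sqrt{\ucst}\}\le e^{-\delta_0^2(m-k)/2}\to 0<1-e^{-\alpha m}$, so $\ucst$ is inadmissible in \eqref{ucst:alpha}. It then remains to bound $\mu_m$ from below in two ways. Using $Z_m\ge\abs{y_{(m-k)}}$ and the first inequality of \eqref{gmin:bound}, $\mu_m\ge\E\abs{y_{(m-k)}}\ge\sqrt{\tfrac{\pi}{2}}\,\tfrac{k+1}{m+1}\to\sqrt{\tfrac{\pi}{2}}\,\beta$, giving the $\tfrac{\pi}{2}\beta^2$ bound. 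For the $c_g^2\ln\tfrac{2}{1-\beta}$ bound I would split on $\beta$: if $\beta>\tfrac12$ then $m-k\le m/2$ for large $m$, so \eqref{cg} applies to the index $m-k$ and $\mu_m\ge\E\abs{y_{(m-k)}}\ge c_g\sqrt{\ln\tfrac{2}{1-\gamma}}\to c_g\sqrt{\ln\tfrac{2}{1-\beta}}$; if $\beta\le\tfrac12$ I would instead use that the average of the largest $m-k$ squares dominates the global average, $Z_m\ge\sqrt{\tfrac1m\|y\|^2}$, so $\mu_m\ge\tfrac{1}{\sqrt{m}}\E\|y\|\to 1$, and $1\ge c_g^2\ln\tfrac{2}{1-\beta}$ holds on $[0,\tfrac12]$ because $c_g^2\ln4\le 1$ (itself forced by \eqref{cg} evaluated near $j=m/2$). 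Taking the maximum of the two lower bounds finishes the estimate.

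The hard part is this last split. For small $\beta$ the retained index $m-k$ lies past the median, so the crude bound $Z_m\ge\abs{y_{(m-k)}}$ together with \eqref{cg} no longer reaches $c_g\sqrt{\ln\tfrac{2}{1-\beta}}$; the fix is the observation that keeping more than half of the coordinates pushes $Z_m$ above the global root-mean-square $\sqrt{\tfrac1m\|y\|^2}$, whose mean tends to $1$, combined with the numerical fact $c_g^2\ln\tfrac{2}{1-\beta}\le c_g^2\ln4\le 1$ on $[0,\tfrac12]$. Beyond this, the only care needed is routine: the integer rounding $\gamma=\lfloor\beta m\rfloor/m\to\beta$ (so all the limits are along $m\to\infty$), and, exactly as in the companion lemmas, the passage from the ``for all $m$'' quantifier in \eqref{ucst:alpha} to the limiting behaviour of $\mu_m$.
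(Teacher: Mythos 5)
Your proof is correct and follows the same overall strategy as the paper's: reduce $\PP(\I_{[0,\ucst],\beta})$ to the distribution of $Z_m=\sqrt{\tfrac{1}{m-k}\sum_{j=1}^{m-k}y_{(j)}^2}$ via Lemma~\ref{lem:dec} and \eqref{yT}, bound $\E Z_m$ from above by \eqref{le:upperbound:new} and from below by order statistics, and convert both into probability statements with Lemma~\ref{lem:Lip}; the upper bound on $\ucst^{\min}_\beta(\alpha)$ is identical to the paper's. Where you genuinely differ is the lower bound. The paper proves $\ucst^{\min}_\beta(\alpha)\ge c_g^2\ln\tfrac{2}{1-\beta}$ only for $\tfrac12\le\beta<1$ and $\ucst^{\min}_\beta(\alpha)\ge\tfrac{\pi}{2}\beta^2$ only for $0<\beta<\tfrac12$ and then asserts the maximum; you establish both bounds for every $\beta\in(0,1)$, handling the regime $\beta\le\tfrac12$ of the $c_g$-bound by the observation $Z_m\ge\tfrac{1}{\sqrt m}\|y\|$ (the average of the largest $m-k$ squares dominates the overall average) together with $c_g^2\ln 4\le 1$. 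This is actually a more complete justification of the stated $\max$, since for small $\beta$ the term $c_g^2\ln\tfrac{2}{1-\beta}$ dominates $\tfrac{\pi}{2}\beta^2$ and the paper's two-regime split does not by itself cover that case. The one claim you should justify more explicitly is $c_g^2\ln 4\le 1$: it is indeed forced by \eqref{cg}, and within the paper's own toolkit it follows by combining \eqref{cg} at $j=\lfloor m/2\rfloor$ with the second inequality of \eqref{gmin:bound} for $p=1$, which gives $c_g\sqrt{\ln 4}\le\E\abs{y_{(\lfloor m/2\rfloor)}}\le\sqrt{\pi/2}\,(\ln 2+o(1))$ as $m\to\infty$, hence $c_g^2\ln 4\le\tfrac{\pi}{2}(\ln 2)^2<1$. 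Finally, your $\liminf$ formulation of the inadmissibility argument cleanly replaces the paper's restriction to rational $\beta$ and to $m$ with $\beta m\in\N$; both devices are valid.
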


\begin{proof}
Define $\gamma:= {\lfloor\beta m \rfloor}/{m}$ and $k:=\gamma m$.
Let $y:=(y_1,\ldots, y_m)^\tp:=A x_0$. Since $\gamma m=\lfloor \beta m\rfloor$,
by Lemma~\ref{lem:dec} and \eqref{yT}, for $\ucst\ge 0$, we have
\begin{equation}\label{P:ucst}
\PP(\I_{[0,\ucst],\beta})=\PP\left\{\max_{T\in T_\gamma} \tfrac{1}{|T|}\|y_T\|^2\le \ucst\right\}
%=\PP\left\{ \frac{1}{m-k} \sum_{j=1}^{m-k} y_{(j)}^2\le \ucst\right\}
=\PP\left\{ \sqrt{\frac{1}{m-k} \sum_{j=1}^{m-k} y_{(j)}^2}\le \sqrt{\ucst}\right\}.
\end{equation}
By \eqref{le:upperbound:new}, we have
\[
\E \sqrt{\frac{1}{m-k} \sum_{j=1}^{m-k} y_{(j)}^2}
\le \sqrt{2\ln \frac{em}{m-k}}
=\sqrt{2\ln \frac{e}{1-\gamma}}.
%\le  \sqrt{\ln \frac{e}{1-\beta}}.
\]
By \eqref{P:ucst} and the above estimate,
applying \eqref{est:mean:1} of Lemma~\ref{lem:Lip} with $\delta:=\sqrt{\ucst}-\sqrt{2 \ln \frac{e}{1-\gamma}}\ge \sqrt{\frac{2\alpha}{1-\gamma}}>0$, we have
\begin{equation}\label{est:needed}
\begin{split}
\PP(\I_{[0,\ucst],\beta})
&=\PP\left\{ \sqrt{\frac{1}{m-k} \sum_{j=1}^{m-k} y_{(j)}^2}-\E \sqrt{\frac{1}{m-k} \sum_{j=1}^{m-k} y_{(j)}^2}\le \sqrt{\ucst}-\E \sqrt{\frac{1}{m-k} \sum_{j=1}^{m-k} y_{(j)}^2}\right\}\\
&\ge \PP\left\{ \sqrt{\frac{1}{m-k} \sum_{j=1}^{m-k} y_{(j)}^2}-\E \sqrt{\frac{1}{m-k} \sum_{j=1}^{m-k} y_{(j)}^2}\le \sqrt{\ucst}- \sqrt{2\ln \frac{e}{1-\gamma}} \right\}\\
&\ge 1-e^{-\delta^2 (m-k)/2}\ge 1-e^{-\alpha \frac{m-k}{1-\gamma}}=1-e^{-\alpha m}.
\end{split}
\end{equation}
If $\sqrt{\ucst}\ge \sqrt{\frac{2\alpha}{1-\beta}}+\sqrt{2\ln \frac{e}{1-\beta}}$, then we have $\sqrt{\ucst}\ge \sqrt{\frac{2\alpha}{1-\beta}}+\sqrt{2\ln \frac{e}{1-\beta}} \ge \sqrt{\frac{2\alpha}{1-\gamma}}+\sqrt{2\ln \frac{e}{1-\gamma}}$ by $0\le \gamma\le \beta<1$ and the above inequality shows that
\begin{equation}\label{P:ucst:all}
\PP(\I_{[0,\ucst],\beta})\ge 1-e^{-\alpha m}
\end{equation}
holds for all $m\in \N$.
Therefore, we proved
\[
\ucst^{\min}_{\beta}(\alpha)\le \left(\sqrt{\frac{2\alpha}{1-\beta}}+\sqrt{2\ln \frac{e}{1-\beta}}\right)^2=2\ln \frac{e}{1-\beta}+\frac{2\alpha}{1-\beta}+4\sqrt{\frac{\alpha}{1-\beta}\ln \frac{e}{1-\beta}}.
\]
This proves the right-hand side inequality in \eqref{est:ucst}.

Without loss of generality, we assume that $\beta$ is a rational number and $m$ is sufficiently large satisfying $\beta m\in \N$. Thus, $\gamma=\beta$ and $k=\beta m$. We consider two cases of $\beta$. Suppose that $1/2\le \beta<1$. Then $m-k=(1-\beta)m\le m/2$ and by \eqref{cg}, we have
\[
\E \sqrt{\frac{1}{m-k} \sum_{j=1}^{m-k} y_{(j)}^2}\ge \E |y_{(m-k)}|\ge c_g \sqrt{\ln \frac{2m}{m-k}}=c_g \sqrt{\ln \frac{2}{1-\beta}}.
\]
By \eqref{P:ucst} and the above inequality, applying \eqref{est:mean:2} of Lemma~\ref{lem:Lip} with $\delta:=c_g \sqrt{\ln \frac{2}{1-\beta}}-\sqrt{\ucst}>0$ and $S=\{1,\ldots,m-k\}$, we have
\begin{align*}
\PP(\I_{[0,\ucst],\beta})
&=\PP\left\{ \sqrt{\frac{1}{m-k} \sum_{j=1}^{m-k} y_{(j)}^2}-\E \sqrt{\frac{1}{m-k} \sum_{j=1}^{m-k} y_{(j)}^2}\le \sqrt{\ucst}-\E \sqrt{\frac{1}{m-k} \sum_{j=1}^{m-k} y_{(j)}^2}\right\}\\
&\le \PP\left\{ \sqrt{\frac{1}{m-k} \sum_{j=1}^{m-k} y_{(j)}^2}-\E \sqrt{\frac{1}{m-k} \sum_{j=1}^{m-k} y_{(j)}^2}\le \sqrt{\ucst}-c_g \sqrt{\ln \frac{2}{1-\beta}} \right\}\\
&=\PP\left\{ \sqrt{\frac{1}{m-k} \sum_{j=1}^{m-k} y_{(j)}^2}-\E \sqrt{\frac{1}{m-k} \sum_{j=1}^{m-k} y_{(j)}^2}\le -\delta \right\}\\
&\le e^{-\delta^2 (m-k)/2}=e^{-\delta^2(1-\beta) m/2}.
\end{align*}
Consequently, if \eqref{P:ucst:all} holds for sufficiently large $m\in \N$, then
\[
1-e^{-\alpha m}\le \PP(\I_{[0,\ucst],\beta})\le e^{-\delta^2(1-\beta) m/2},
\]
which cannot be true when $m$ is sufficiently large. This proves that
$\ucst^{\min}_{\beta}(\alpha)\ge c_g^2 \ln \frac{2}{1-\beta}$ provided that $1/2\le \beta<1$.

Suppose that $0<\beta<1/2$. By \eqref{gmin:bound}, we have
\[
\E \sqrt{\frac{1}{m-k} \sum_{j=1}^{m-k} y_{(j)}^2}\ge \E |y_{(m-k)}|\ge \sqrt{\frac{\pi}{2}} \frac{k+1}{m+1}\ge \sqrt{\frac{\pi}{2}} \frac{k}{m}=\sqrt{\frac{\pi}{2}} \beta.
\]
The above same argument shows that $\ucst^{\min}_{\beta}(\alpha)\ge \frac{\pi}{2}\beta^2$ provided that $0<\beta<1/2$.
This proves the left-hand side inequality in \eqref{est:ucst}.
\end{proof}

\subsection{Estimate $\mathring{\lcst}^{\max}_{\beta}(\alpha)$ and $\mathring{\ucst}^{\min}_{\beta}(\alpha)$}

\begin{comment}
Let $x_0\in \R^\Ndim$ with $\|x_0\|=1$.
For $0\le \beta<1$ and $0\le \lcst\le \ucst\le \infty$,
recall that $\mathring{\I}_{[\lcst,\ucst],\beta}$ is defined in \eqref{I:interval:2}.
We can define $\mathring{\lcst}^{\max}_{\beta}(\alpha), \mathring{\ucst}^{\min}_{\beta}(\alpha), \mathring{\lcst}^{\max}_{\beta}, \mathring{\ucst}^{\min}_{\beta}$
and $\mathring{\lcst}^{\max}_{\beta,\infty}(\alpha), \mathring{\ucst}^{\min}_{\beta,\infty}(\alpha), \mathring{\lcst}^{\max}_{\beta,\infty}, \mathring{\ucst}^{\min}_{\beta,\infty}$, which
are similar to $\lcst^{\max}_{\beta}(\alpha)$, $\ucst^{\min}_{\beta}(\alpha)$, $\lcst^{\max}_{\beta}$, $\ucst^{\min}_{\beta}$ and  $\lcst^{\max}_{\beta,\infty}(\alpha)$, $\ucst^{\min}_{\beta,\infty}(\alpha)$, $\lcst^{\max}_{\beta,\infty}$, $\ucst^{\min}_{\beta,\infty}$, respectively by replacing $\I$ with $\mathring{\I}$.
\end{comment}

As a direct consequence of Lemma~\ref{lem:alpha:lcst}, we have

\begin{coro}\label{cor:alpha:mlcst}
For $0<\beta<1$ and $0<\alpha< \frac{\pi}{12}(1-\beta)^2h_\beta$ with $h_\beta:=\min(\tfrac{3}{4}-\tfrac{1}{2}\beta, 1-\beta)$,
\begin{equation}\label{est:mlcst}
0<\frac{\pi}{6}(1-\beta)^2h_\beta
+2\alpha-2(1-\beta)
\sqrt{\pi \alpha h_\beta/3}
\le \mathring{\lcst}^{\max}_{\beta}(\alpha) \le %\mathring{\lcst}^{\max}_{\beta,\infty}(\alpha) \le
(1-\beta)\min\Big(\frac{\pi}{2}\left(\ln \frac{1}{\beta}\right)^2,1\Big).
\end{equation}
\end{coro}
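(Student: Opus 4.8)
The plan is to reduce everything to Lemma~\ref{lem:alpha:lcst} through the elementary scaling identity relating the two normalizations. Write $y:=Ax_0$, $\gamma:=\lfloor\beta m\rfloor/m$ and $k:=\gamma m$, exactly as in the proof of Lemma~\ref{lem:alpha:lcst}. Arguing as there (via Lemma~\ref{lem:dec} and \eqref{yT}), for any fixed $m$ and any $\lcst\ge 0$ one has
\[
\PP(\mathring{\I}_{[\lcst,\infty],\beta})
=\PP\Big\{\tfrac{1}{m}\textstyle\sum_{j=k+1}^m y_{(j)}^2\ge \lcst\Big\}
=\PP\Big\{\tfrac{1}{m-k}\textstyle\sum_{j=k+1}^m y_{(j)}^2\ge \tfrac{\lcst}{1-\gamma}\Big\}
=\PP\big(\I_{[\lcst/(1-\gamma),\infty],\beta}\big),
\]
since $\tfrac{m}{m-k}=\tfrac{1}{1-\gamma}$. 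This single per-$m$ identity is the crux; the two bounds then follow by treating $\gamma$ in two different ways.

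For the lower bound I would exploit $0\le\gamma\le\beta$, which gives $\tfrac{\lcst}{1-\gamma}\le\tfrac{\lcst}{1-\beta}$ and hence, by monotonicity of $\lcst\mapsto\PP(\I_{[\lcst,\infty],\beta})$,
\[
\PP(\mathring{\I}_{[\lcst,\infty],\beta})\ge \PP\big(\I_{[\lcst/(1-\beta),\infty],\beta}\big).
\]
The proof of Lemma~\ref{lem:alpha:lcst} established (see \eqref{P:lcst:all}) that $\PP(\I_{[\lcst,\infty],\beta})\ge 1-e^{-\alpha m}$ for all $m$ whenever $\sqrt{\lcst}\le\sqrt{\tfrac{\pi}{6}(1-\beta)h_\beta}-\sqrt{\tfrac{2\alpha}{1-\beta}}$. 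Applying this with threshold $\tfrac{\lcst}{1-\beta}$, the condition becomes $\lcst\le(1-\beta)\big(\sqrt{\tfrac{\pi}{6}(1-\beta)h_\beta}-\sqrt{\tfrac{2\alpha}{1-\beta}}\big)^2$, so every such $\lcst$ is admissible for $\mathring{\lcst}^{\max}_\beta(\alpha)$. Expanding the square gives exactly $\frac{\pi}{6}(1-\beta)^2h_\beta+2\alpha-2(1-\beta)\sqrt{\pi\alpha h_\beta/3}$, and the hypothesis $\alpha<\frac{\pi}{12}(1-\beta)^2h_\beta$ guarantees $\sqrt{\tfrac{\pi}{6}(1-\beta)h_\beta}>\sqrt{\tfrac{2\alpha}{1-\beta}}$, so this lower bound is strictly positive.

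For the upper bound I would instead make the identity exact by restricting, as in the proof of Lemma~\ref{lem:alpha:lcst}, to rational $\beta$ and to the infinitely many large $m$ with $\beta m\in\N$, so that $\gamma=\beta$ and $\PP(\mathring{\I}_{[\lcst,\infty],\beta})=\PP(\I_{[\lcst/(1-\beta),\infty],\beta})$ holds with equality. The concentration estimate from that proof yields $\PP(\I_{[\lcst,\infty],\beta})\le e^{-\delta^2(1-\beta)m/2}$ with $\delta>0$ as soon as $\sqrt{\lcst}>\sqrt{\tfrac{\pi}{2}}\ln\tfrac{1}{\beta}$, and likewise whenever $\lcst>1$. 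Hence, if $\tfrac{\lcst}{1-\beta}>\min\big(\tfrac{\pi}{2}(\ln\tfrac{1}{\beta})^2,1\big)$, the requirement $\PP(\mathring{\I}_{[\lcst,\infty],\beta})\ge 1-e^{-\alpha m}$ fails for large $m$, forcing $\mathring{\lcst}^{\max}_\beta(\alpha)\le(1-\beta)\min\big(\tfrac{\pi}{2}(\ln\tfrac{1}{\beta})^2,1\big)$.

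The only subtle point, and the one I would watch most carefully, is keeping the roles of $\gamma$ and $\beta$ straight: the lower bound uses the inequality $\gamma\le\beta$ (valid for every $m$) to pass from $\gamma$ to $\beta$ in the safe direction, whereas the upper bound needs the exact equality $\gamma=\beta$, available only along the subsequence $\beta m\in\N$. Everything else amounts to pulling the scaling factor $1-\beta$ through the already-proven estimates of Lemma~\ref{lem:alpha:lcst}, so no genuinely new probabilistic input is required.
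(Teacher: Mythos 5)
Your proposal is correct and follows essentially the same route as the paper: the same scaling identity $\PP(\mathring{\I}_{[\lcst,\infty],\beta})=\PP(\I_{[\lcst/(1-\gamma),\infty],\beta})$, the same use of $\gamma\le\beta$ to get the lower bound via \eqref{P:lcst:all} for all $m$, and the same restriction to rational $\beta$ with $\beta m\in\N$ (so $\gamma=\beta$) to transfer the upper bound from Lemma~\ref{lem:alpha:lcst}. The algebraic expansion of $(1-\beta)\bigl(\sqrt{\tfrac{\pi}{6}(1-\beta)h_\beta}-\sqrt{\tfrac{2\alpha}{1-\beta}}\bigr)^2$ and the positivity check under $\alpha<\tfrac{\pi}{12}(1-\beta)^2h_\beta$ also match the paper's argument.
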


\begin{proof}
Define $\gamma:= {\lfloor\beta m \rfloor}/{m}$ and $k:=\gamma m$. Let $y:=(y_1,\ldots, y_m)^\tp:=A x_0$. Then $0\le \gamma\le \beta<1$.
By the definition of $\mathring{\I}_{[\lcst,\infty],\beta}$, we have
\[
\PP(\mathring{\I}_{[\lcst,\infty],\beta})
=\PP\left\{\sqrt{\frac{1}{m}\sum_{j=k+1}^m y_{(j)}^2} \ge \sqrt{\lcst} \right\}
=\PP\left\{\sqrt{\frac{1}{m-k}\sum_{j=k+1}^m y_{(j)}^2} \ge \sqrt{\frac{\lcst}{1-\gamma}} \right\}=\PP(\I_{[\frac{\lcst}{1-\gamma},\infty],\beta})
\ge \PP(\I_{[\frac{\lcst}{1-\beta},\infty],\beta}),
\]
where we used $\frac{1}{1-\gamma}\le \frac{1}{1-\beta}$ by $0\le \gamma\le \beta<1$. Consequently, for all $0\le \lcst< (1-\beta)\lcst^{\max}_{\beta}(\alpha)$, we have $\frac{\lcst}{1-\beta}<\lcst^{\max}_{\beta}(\alpha)$ and by the definition of $\lcst^{\max}_{\beta}(\alpha)$, we have
\[
1-e^{-\alpha m}\le \PP(\I_{[\frac{\lcst}{1-\beta},\infty],\beta}) \le \PP(\mathring{\I}_{[\lcst,\infty],\beta}),\qquad \forall\; m\in \N.
\]
For $0<\beta<1$ and $0<\alpha\le \frac{\pi}{12}(1-\beta)^2 h_\beta$, it follows from the above inequality and Lemma~\ref{lem:alpha:lcst} that
\[
\mathring{\lcst}^{\max}_{\beta}(\alpha)\ge (1-\beta)\lcst^{\max}_{\beta}(\alpha) \ge \frac{\pi}{6}(1-\beta)^2h_\beta+2\alpha-2(1-\beta)
\sqrt{\pi \alpha h_\beta/3}>0.
\]
This proves the left-hand side inequality in \eqref{est:mlcst}.

Note that we proved the upper bound of $\lcst^{\max}_{\beta}(\alpha)$ in \eqref{est:lcst} of Lemma~\ref{lem:alpha:lcst} by assuming that $\beta$ is rational and $m$ is sufficiently large satisfying $\beta m\in \N$. For such $\beta$ and $m$, we have $\gamma=\beta$ and consequently, the same proof of Lemma~\ref{lem:alpha:lcst} yields
\[
\mathring{\lcst}^{\max}_{\beta}(\alpha)=(1-\beta)
\lcst^{\max}_{\beta}(\alpha)
\le (1-\beta)\min(\frac{\pi}{2}(\ln \tfrac{1}{\beta})^2,1).
\]
This proves the right-hand side inequality in \eqref{est:mlcst}.
\end{proof}

With the help of Lemma~\ref{lem:alpha:ucst}, we have the following result.

\begin{coro}\label{cor:alpha:mucst}
For $0<\beta<1$ and $\alpha>0$,
\begin{equation}\label{est:mucst}
(1-\beta)\max\Big(c_g^2 \ln \frac{2}{1-\beta},\frac{\pi}{2}\beta^2\Big)\le
\mathring{\ucst}^{\min}_{\beta}(\alpha)
%\le2(1-\beta)\ln \frac{e}{1-\beta}+2\alpha+4\sqrt{\alpha(1-\beta)\ln \frac{e}{1-\beta}}
\end{equation}
and
\begin{equation}\label{est:mucst:upper}
\PP(\mathring{\I}_{[0,\ucst],\beta})
\ge 1-e^{-\alpha m},\qquad \forall\; \ucst\ge \left(\sqrt{2(1-\gamma)\ln \frac{e}{1-\gamma}}+\sqrt{2\alpha}\right)^2, m\in \N \quad \mbox{with}\quad
\gamma:=\lfloor \beta m \rfloor/m.
\end{equation}
\end{coro}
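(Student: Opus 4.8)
The plan is to mirror the proof of Corollary~\ref{cor:alpha:mlcst}, transferring the estimates of Lemma~\ref{lem:alpha:ucst} for $\ucst^{\min}_{\beta}(\alpha)$ to the uniformly normalized quantity $\mathring{\ucst}^{\min}_{\beta}(\alpha)$ through the elementary scaling between the two normalizations. Fix $m\in\N$, set $\gamma:=\lfloor\beta m\rfloor/m$ and $k:=\gamma m=\lfloor\beta m\rfloor$, and write $y:=Ax_0$, whose coordinates are i.i.d. $\cN(0,1)$. Exactly as in Lemma~\ref{lem:alpha:ucst}, Lemma~\ref{lem:dec} reduces the upper constraint to $T\in T_\gamma$, i.e. to the largest $m-k$ order statistics, and since $\frac{1}{m}\sum_{j=1}^{m-k}y_{(j)}^2=(1-\gamma)\,\frac{1}{m-k}\sum_{j=1}^{m-k}y_{(j)}^2$, I would record the key identity
\[
\PP(\mathring{\I}_{[0,\ucst],\beta})=\PP\Big\{\sqrt{\tfrac{1}{m-k}\sum_{j=1}^{m-k}y_{(j)}^2}\le\sqrt{\tfrac{\ucst}{1-\gamma}}\Big\}=\PP\big(\I_{[0,\ucst/(1-\gamma)],\beta}\big),
\]
which is the only point where the two normalizations differ and is the $\ucst$-analogue of the first display in the proof of Corollary~\ref{cor:alpha:mlcst}.

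For the probability bound \eqref{est:mucst:upper} I would feed this identity into the estimate \eqref{est:needed} already established inside the proof of Lemma~\ref{lem:alpha:ucst}. Bounding the expectation by $\E\sqrt{\tfrac{1}{m-k}\sum_{j=1}^{m-k}y_{(j)}^2}\le\sqrt{2\ln\tfrac{e}{1-\gamma}}$ via \eqref{le:upperbound:new} and applying \eqref{est:mean:1} of Lemma~\ref{lem:Lip} with $S=\{1,\dots,m-k\}$ and
\[
\delta:=\sqrt{\tfrac{\ucst}{1-\gamma}}-\sqrt{2\ln\tfrac{e}{1-\gamma}},
\]
the hypothesis $\ucst\ge\big(\sqrt{2(1-\gamma)\ln\tfrac{e}{1-\gamma}}+\sqrt{2\alpha}\big)^2$ is exactly what forces $\delta\ge\sqrt{2\alpha/(1-\gamma)}>0$ (divide the square-rooted hypothesis by $\sqrt{1-\gamma}$). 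Then $\delta^2(m-k)/2\ge\tfrac{2\alpha}{1-\gamma}\cdot\tfrac{(1-\gamma)m}{2}=\alpha m$ yields $\PP(\mathring{\I}_{[0,\ucst],\beta})\ge1-e^{-\alpha m}$, and monotonicity of the event in $\ucst$ covers all larger $\ucst$; this is a verbatim rerun of \eqref{est:needed} carrying the extra factor $1-\gamma$.

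For the lower bound \eqref{est:mucst} I would repeat the lower-bound half of Lemma~\ref{lem:alpha:ucst}. As there, it suffices to treat rational $\beta$ with $m$ large and $\beta m\in\N$ (the general case following by density of the rationals), so that $\gamma=\beta$ and the identity above becomes the exact relation $\PP(\mathring{\I}_{[0,\ucst],\beta})=\PP(\I_{[0,\ucst/(1-\beta)],\beta})$. Suppose $\ucst<(1-\beta)\max\big(c_g^2\ln\tfrac{2}{1-\beta},\tfrac{\pi}{2}\beta^2\big)$, equivalently $\tfrac{\ucst}{1-\beta}<\max\big(c_g^2\ln\tfrac{2}{1-\beta},\tfrac{\pi}{2}\beta^2\big)$. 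Then the two-case argument of Lemma~\ref{lem:alpha:ucst}—using the order-statistic lower bounds $\E|y_{(m-k)}|\ge c_g\sqrt{\ln\tfrac{2}{1-\beta}}$ from \eqref{cg} when $1/2\le\beta<1$, and $\E|y_{(m-k)}|\ge\sqrt{\tfrac{\pi}{2}}\,\beta$ from \eqref{gmin:bound} when $0<\beta<1/2$, together with \eqref{est:mean:2}—gives $\PP(\I_{[0,\ucst/(1-\beta)],\beta})\le e^{-(\delta')^2(1-\beta)m/2}$ for some $\delta'>0$, which cannot be $\ge1-e^{-\alpha m}$ for large $m$. Hence no such $\ucst$ lies in the defining set of $\mathring{\ucst}^{\min}_{\beta}(\alpha)$, proving \eqref{est:mucst}.

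The computations are routine transcriptions, and the only point that requires genuine care is the $m$-dependence of $\gamma=\lfloor\beta m\rfloor/m$. I keep $\gamma$ explicit in the per-$m$ bound \eqref{est:mucst:upper}, so that the threshold there really does depend on $m$ through $\gamma$; whereas for the asymptotic lower bound \eqref{est:mucst} I must pass to rational $\beta$ with $\beta m\in\N$ so that $\gamma=\beta$ exactly and the scaling factor collapses to the clean $1-\beta$. With that factor inserted, every remaining step is Lemma~\ref{lem:alpha:ucst} applied to $\ucst/(1-\gamma)$ in place of $\ucst$.
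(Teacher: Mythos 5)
Your proposal reproduces the paper's proof essentially verbatim: the same scaling identity $\PP(\mathring{\I}_{[0,\ucst],\beta})=\PP(\I_{[0,\ucst/(1-\gamma)],\beta})$, the same invocation of the bound \eqref{est:needed} from Lemma~\ref{lem:alpha:ucst} with $\delta=\sqrt{\ucst/(1-\gamma)}-\sqrt{2\ln\tfrac{e}{1-\gamma}}\ge\sqrt{2\alpha/(1-\gamma)}$ for \eqref{est:mucst:upper}, and the same reduction to rational $\beta$ with $\beta m\in\N$ (so $\gamma=\beta$) to transfer the lower bound of \eqref{est:ucst} to \eqref{est:mucst}. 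The one delicate point you share with the paper is the reduction of the upper constraint ``for all $T$ with $\abs{T^c}\le\beta m$'' to $T\in T_\gamma$ under the uniform $\tfrac1m$ normalization; since this is exactly the step the paper itself takes in \eqref{P:ucst} and in the proof of this corollary, your argument is faithful to the paper's.
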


\begin{proof}
Define $\gamma:= {\lfloor\beta m \rfloor}/{m}$ and $k:=\gamma m$.
Let $y:=(y_1,\ldots, y_m)^\tp:=A x_0$.
By the definition of $\mathring{\I}_{[0,\ucst],\beta}$,
\[
\PP(\mathring{\I}_{[0,\ucst],\beta})
=\PP\left\{\sqrt{\frac{1}{m}\sum_{j=1}^{m-k} y_{(j)}^2} \le \sqrt{\ucst} \right\}
=\PP\left\{\sqrt{\frac{1}{m-k}\sum_{j=1}^{m-k} y_{(j)}^2} \le \sqrt{\frac{\ucst}{1-\gamma}} \right\}=\PP(\I_{[0,\frac{\ucst}{1-\gamma}],\beta}).
\]
Note that we proved the lower bound of $\ucst^{\min}_{\beta}(\alpha)$ in \eqref{est:ucst} of Lemma~\ref{lem:alpha:ucst} by assuming that $\beta$ is rational and $m$ is sufficiently large satisfying $\beta m\in \N$. For such $\beta$ and $m$, we have $\gamma=\beta$ and the left-hand side inequality in \eqref{est:mucst} follows directly from the same proof of Lemma~\ref{lem:alpha:ucst}.

As proved in \eqref{est:needed}, if $\delta:=\sqrt{\frac{\ucst}{1-\gamma}}-\sqrt{2\ln \frac{e}{1-\gamma}}\ge \sqrt{\frac{2\alpha}{1-\gamma}}>0$, then
\begin{equation}\label{est:mucst:0}
\PP(\mathring{\I}_{[0,\ucst],\beta})
=\PP(\I_{[0,\frac{\ucst}{1-\gamma}],\beta})\ge 1-e^{-\delta^2(m-k)/2}\ge 1-e^{-\alpha m},\qquad \forall\; m\in \N \quad \mbox{with}\quad
\gamma:=\lfloor \beta m \rfloor/m.
\end{equation}
This proves the inequality in \eqref{est:mucst:upper}.
\begin{comment}
Hence, if $\sqrt{w}>\sqrt{2(1-\beta)\ln \frac{e}{1-\beta}}+\sqrt{2\alpha}$, by $\beta-\frac{1}{m}\le \gamma\le \beta$, for sufficiently large $m\in \N$, we have
$\sqrt{w}\ge \sqrt{2(1-\gamma)\ln \frac{e}{1-\gamma}}+\sqrt{2\alpha}$ and consequently,
$\PP(\mathring{\I}_{[0,\ucst],\beta})\ge 1-e^{-\alpha m}$ for sufficiently large $m\in \N$. This proves the right-hand side inequality of \eqref{est:mucst}.
\end{comment}
\end{proof}

To estimate the upper bound of $\mathring{\ucst}^{\min}_{\beta}$, we need the following lemma.

\begin{lem}\label{lem:small:m}
Let $x_0\in \R^n$ with $\|x_0\|=1$ and $m\in \N$.
Let $A$ be an $m\times n$ Gaussian random matrix with i.i.d. entries obeying $\mathcal{N}(0,1)$.  For $0<\beta<1$ and $\alpha>0$, let $\ucst_{\beta,m}(\alpha)$ be the smallest $\ucst\ge 0$ such that
\[
\PP(\mathring{\I}_{[0,\ucst],\beta})=\PP\left\{
\sup_{|T^c|\le \beta m} \frac{1}{m}\|A_T x_0\|^2\le \ucst\right\}\ge 1-e^{-\alpha m}.
\]
Then $\ucst_{\beta,m}(\alpha)>0$ for all $\alpha>0$ and
%
%\begin{equation}\label{ucst:m:to:0}
$\ucst_{\beta,m}:=\lim_{\alpha\to 0^+} \ucst_{\beta,m}(\alpha)=0$.
%\end{equation}
%
\end{lem}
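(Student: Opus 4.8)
The plan is to first reduce the $\beta$-dependent supremum to a single $\chi^2$ statistic. Since the full index set $T=\{1,\dots,m\}$ satisfies $|T^c|=0\le \beta m$ and since $\|A_Tx_0\|^2=\sum_{j\in T}(Ax_0)_j^2$ is nondecreasing as $T$ grows, the supremum in the definition of $\ucst_{\beta,m}(\alpha)$ is attained at the full index set. Writing $y:=Ax_0$, exactly as in the proof of Lemma~\ref{lem:leth1} the coordinates $y_1,\dots,y_m$ are i.i.d.\ $\cN(0,1)$, so $\sup_{|T^c|\le \beta m}\tfrac1m\|A_Tx_0\|^2=\tfrac1m\|y\|^2$ with $\|y\|^2=\sum_{j=1}^m y_j^2$ following a $\chi^2_m$ distribution. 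In particular this supremum does not depend on $\beta$, and $\PP(\mathring{\I}_{[0,\ucst],\beta})=\PP\{\tfrac1m\|y\|^2\le \ucst\}=F(m\ucst)$, where $F$ denotes the cumulative distribution function of $\chi^2_m$.

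Next I would identify $\ucst_{\beta,m}(\alpha)$ with a quantile of this distribution. Because the $\chi^2_m$ density is positive on $(0,\infty)$, the function $F$ is a continuous strictly increasing bijection of $[0,\infty)$ onto $[0,1)$ with $F(0)=0$. Hence, for each fixed $\alpha>0$, the level $1-e^{-\alpha m}$ lies in $(0,1)$ and the set $\{\ucst\ge 0:F(m\ucst)\ge 1-e^{-\alpha m}\}$ is the half-line $[\ucst^\ast,\infty)$ with $\ucst^\ast=\tfrac1m F^{-1}(1-e^{-\alpha m})$. Thus $\ucst_{\beta,m}(\alpha)=\tfrac1m F^{-1}(1-e^{-\alpha m})$. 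Positivity is then immediate: since $F(0)=0<1-e^{-\alpha m}$, the threshold $\ucst^\ast$ must be strictly positive, giving $\ucst_{\beta,m}(\alpha)>0$ for every $\alpha>0$.

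Finally I would compute the limit. The map $\alpha\mapsto \ucst_{\beta,m}(\alpha)=\tfrac1m F^{-1}(1-e^{-\alpha m})$ is a composition of increasing maps, hence nondecreasing in $\alpha$ and bounded below by $0$, so $\lim_{\alpha\to 0^+}\ucst_{\beta,m}(\alpha)$ exists and equals $\inf_{\alpha>0}\ucst_{\beta,m}(\alpha)\ge 0$. To see the infimum is $0$, fix $\delta>0$ and set $p_\delta:=F(m\delta)>0$; then for every $\alpha$ with $0<\alpha\le -\tfrac1m\ln(1-p_\delta)$ we have $1-e^{-\alpha m}\le p_\delta=F(m\delta)$, whence $\ucst_{\beta,m}(\alpha)\le \delta$. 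As $\delta>0$ is arbitrary, the limit equals $0$, i.e.\ $\ucst_{\beta,m}=0$.

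The only genuine point is the opening observation that the supremum collapses to the full index set, so that $\beta$ drops out entirely and the quantity reduces to the elementary $\chi^2_m$ law; everything afterward is routine continuity and monotonicity bookkeeping for the quantile $F^{-1}(1-e^{-\alpha m})$. The one thing to state carefully is that $F$ is continuous at $0$ with $F(0)=0$, which is precisely what forces the quantile to tend to $0$ as the confidence level $1-e^{-\alpha m}$ tends to $0$.
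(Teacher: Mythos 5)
Your argument is correct for the lemma as literally stated, and it takes a genuinely different---more explicit---route than the paper's. You observe that the full index set is admissible ($|T^c|=0\le\beta m$) and that $T\mapsto\|A_Tx_0\|^2$ is monotone in $T$, so the supremum collapses to $\tfrac1m\|Ax_0\|^2\sim\tfrac1m\chi^2_m$ and $\ucst_{\beta,m}(\alpha)=\tfrac1mF^{-1}(1-e^{-\alpha m})$ is an exact $\chi^2_m$ quantile; both assertions then follow from $F(0)=0$ together with continuity and strict monotonicity of $F$. The paper never makes this identification: for positivity it uses only that $|T|\ge 1$ forces $\PP(\mathring{\I}_{[0,0],\beta})=0$, and for the limit it uses only the one-sided bound $\PP(\mathring{\I}_{[0,\ucst],\beta}^c)\le\PP\{\|Ax_0\|^2>m\ucst\}<1$ for $\ucst>0$, fed into a contradiction argument (if $\ucst_{\beta,m}>0$, minimality of each $\ucst_{\beta,m}(\alpha)$ would force $\PP(\mathring{\I}_{[0,\ucst],\beta}^c)=1$ for $0<\ucst<\ucst_{\beta,m}$, which the $\chi^2_m$ law forbids); your version yields an exact formula and sidesteps the somewhat delicate limiting step in the paper's write-up. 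One caveat worth recording: your collapse of the supremum makes $\beta$ disappear entirely, which is valid under the literal constraint $|T^c|\le\beta m$ but is in tension with how the paper evaluates $\PP(\mathring{\I}_{[0,\ucst],\beta})$ elsewhere (in the proof of Corollary~\ref{cor:alpha:mucst} it is identified with the law of $\tfrac1m\sum_{j=1}^{m-\lfloor\beta m\rfloor}y_{(j)}^2$, i.e.\ exactly $\lfloor\beta m\rfloor$ rows erased). The paper's sandwich $0<\tfrac1m y_{(m)}^2\le\sup_T\tfrac1m\|A_Tx_0\|^2\le\tfrac1m\|Ax_0\|^2$ is insensitive to which reading is intended, whereas your exact identification would have to be weakened to that two-sided bound under the other reading---though your argument then still goes through essentially unchanged, since the upper member of the sandwich is all that the limit computation requires.
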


\begin{proof}
Suppose that $\ucst_{\beta,m}(\alpha)=0$. Then $\PP(\mathring{\I}_{[0,0],\beta})\ge 1-e^{-\alpha m}$, which is a contradiction to $\PP(\mathring{\I}_{[0,0],\beta})=0$ and $\alpha>0$. Therefore, we must have $\ucst_{\beta,m}(\alpha)>0$.
Note that $\ucst_{\beta,m}(\alpha)$ is an increasing function of $\alpha$. By $\PP(\mathring{\I}_{[0,\ucst_{\beta,m}(\alpha)],\beta})\ge 1-e^{-\alpha m}$, we have $\PP(\mathring{\I}_{[0,\ucst_{\beta,m}(\alpha)],\beta}^c)\le e^{-\alpha m}$. Then
\[
\PP(\mathring{\I}_{[0,\ucst_{\beta,m}],\beta}^c)
=\lim_{\alpha\to 0^+} \PP(\mathring{\I}_{[0,\ucst_{\beta,m}(\alpha)],\beta}^c) \le \lim_{\alpha\to 0^+} e^{-\alpha m}=1.
\]
Define $y=(y_1,\ldots,y_m)^\tp:=Ax_0$. By $0<\beta<1$, we see that $T\subseteq \{1,\ldots, m\}$ with $|T^c|\le \beta m$ implies $|T|\ge 1$. If $\ucst_{\beta,m}>0$, it is trivial to see that
\[
\PP(\mathring{\I}_{[0,\ucst_{\beta,m}],\beta}^c)\le \PP\left \{ |y_1|^2+\cdots+|y_m|^2>m\ucst_{\beta,m}\right\}<1,
\]
%where we used $\ucst_{\beta,m}>0$ in the last inequality.
which is a contradiction to $\PP(\mathring{\I}_{[0,\ucst_{\beta,m}],\beta}^c)=1$.
This proves $\ucst_{\beta,m}=\lim_{\alpha\to 0^+}\ucst_{\beta,m}(\alpha)=0$.
\end{proof}

\subsection{Proof of Theorem~\ref{thm:non}}

We are now ready to prove Theorem~\ref{thm:non}.

\begin{proof}[Proof of Theorem~\ref{thm:non}]
Observe that $\lcst^{\max}_\beta
=\lim_{\alpha\to 0^+} \lcst^{\max}_{\beta}(\alpha)$
and
$\ucst^{\min}_\beta=\lim_{\alpha \to 0^+} \ucst^{\min}_{\beta}(\alpha)$.
Taking $\alpha\to 0^+$ in \eqref{est:lcst} of
Lemma~\ref{lem:alpha:lcst}, we have
\begin{equation}
\frac{\pi}{6}(1-\beta)^2 \min\Big( \frac{3-2\beta}{4(1-\beta)},1\Big) \le \lcst_\beta^{\max} \le
%\lcst^{\max}_{\beta,\infty} \le
\min\left(\frac{\pi}{2} \Big(\ln \frac{1}{\beta}\Big)^2,1\right).
\end{equation}
This proves \eqref{eq:theta}.
Taking $\alpha\to 0^+$ in \eqref{est:ucst} of Lemma~\ref{lem:alpha:ucst}, we have
\begin{equation}
\max\Big(c_g^2 \ln \frac{2}{1-\beta}, \frac{\pi}{2}\beta^2\Big)
%\le \ucst_{\beta,\infty}^{\min}
\le \ucst_\beta^{\min} \le 2\ln \frac{e}{1-\beta}.
\end{equation}
This proves \eqref{eq:omega}.
Similarly, taking $\alpha\to 0^+$ in \eqref{est:mlcst} of Corollary~\ref{cor:alpha:mlcst}, we have
\begin{equation}
\frac{\pi}{6}(1-\beta)^3 \min\Big( \frac{3-2\beta}{4(1-\beta)},1\Big) \le \mathring{\lcst}_\beta^{\max} \le %\mathring{\lcst}^{\max}_{\beta,\infty} \le
(1-\beta) \min\left(\frac{\pi}{2} \Big(\ln \frac{1}{\beta}\Big)^2,1\right).
\end{equation}
This proves \eqref{eq:theta:2}. We now prove \eqref{eq:omega:2}.
Taking $\alpha\to 0^+$ for the left-hand side of \eqref{est:mucst} in Corollary~\ref{cor:alpha:mucst}, we proved the left-hand side of \eqref{eq:omega:2}. We now prove the right-hand side of \eqref{eq:omega:2}. Take $N\in \N$ such that $N>\frac{1}{\beta}$. Let $\gamma:=\lfloor \beta m\rfloor/m$. Then $\beta-\frac{1}{N}\le \beta-\frac{1}{m}\le \gamma\le \beta$ for all $m\ge N$. Since $(1-x)\ln \frac{e}{1-x}$ is a decreasing function on $(0,1)$, we have $(1-\beta+\frac{1}{N})\ln \frac{e}{1-\beta+\frac{1}{N}}\ge
(1-\gamma)\ln\frac{e}{1-\gamma}$ for all $m\ge N$.
Using the same notation as in Lemma~\ref{lem:small:m}, it follows from \eqref{est:mucst:upper} in Corollary~\ref{cor:alpha:mucst} that
\begin{equation}\label{ucst:alpha:min}
\begin{split}
\ucst_{\beta}(\alpha):=&\sup_{m\in \N}  \ucst_{\beta,m}(\alpha)\\ \le &\max\left(\ucst_{\beta,1}(\alpha),\ldots,
\ucst_{\beta,N-1}(\alpha),\left(\sqrt{2\left(1-\beta+\frac{1}{N}\right)\ln \frac{e}{1-\beta+\frac{1}{N}}}+\sqrt{2\alpha}\right)^2\right).
\end{split}
\end{equation}
Taking $\alpha\to 0^+$ in the above inequality, we deduce from Lemma~\ref{lem:small:m} that
\begin{align*}
\ucst_\beta
:=&\lim_{\alpha\to 0^+} \ucst_{\beta}(\alpha)\le \lim_{\alpha\to 0^+}
\max\left(\ucst_{\beta,1}(\alpha),\ldots,
\ucst_{\beta,N-1}(\alpha),\left(\sqrt{2(1-\beta+\tfrac{1}{N})\ln \frac{e}{1-\beta+\frac{1}{N}}}+\sqrt{2\alpha}\right)^2\right)\\
=&2\left(1-\beta+\tfrac{1}{N}\right)\ln \frac{e}{1-\beta+\tfrac{1}{N}}.
\end{align*}
Since $N>\frac{1}{\beta}$ can be arbitrarily large, combining with \eqref{est:mucst}, we proved
\[
(1-\beta)\max\left(c_g^2\ln \frac{2}{1-\beta},\frac{\pi}{2} \beta^2\right)\le
%\mathring{\ucst}^{\min}_{\beta,\infty} \le
\mathring{\ucst}^{\min}_{\beta}\le \ucst_\beta \le 2(1-\beta)\ln \frac{e}{1-\beta}.
\]
This proves the left-hand side of \eqref{eq:omega:2}.
\end{proof}

\section{Proofs of Corollaries}

We now provide proofs to Corollaries~\ref{cor:JL} and \ref{cor:rip} as well as the proofs of Corollaries~\ref{cor:JL1} and \ref{cor:rip1}.
%using the well-known techniques of union bounds in the literature.

\begin{proof}[Proof of Corollary~\ref{cor:JL}]
We assume that all the points $p_1,\ldots,p_N$ are distinct.  For every $T\in T_{\epsilon,\alpha}$,
we have $\abs{T^c}\leq \beta m$ for all $0\leq \beta\leq \frac{1-\sqrt{\alpha}}{32}\frac{\epsilon}{\ln \frac{1}{\epsilon}}$. By Theorem~\ref{thm:lowerbound:0},
for $j\ne k$, we have
\[
\PP\left\{ \left| \frac{ \|A_T p_j-A_T p_k\|^2}{m\|p_j-p_k\|^2}-1\right|\le \epsilon\; \mbox{for all}\, T\in T_{\epsilon,\alpha}\right\}\ge 1-3 e^{-\alpha(\epsilon^2/4-\epsilon^3/6)m},\qquad \forall\, m\in \N.
\]
Here we used the inequality
\[
0<\beta \leq \frac{1-\sqrt{\alpha}}{32}\frac{\epsilon}{\ln \frac{1}{\epsilon}}\leq \frac{(1-\sqrt{\alpha})\epsilon}{16 \ln \frac{4}{(1-\sqrt{\alpha})\epsilon}},
\]
provided that $0<\epsilon<\frac{1-\sqrt{\alpha}}{4}$.
Since there are $\binom{N}{2}=\frac{N(N-1)}{2}$ pairs $\{p_j,p_k\}$ with $j\ne k$, $j,k=1,\ldots,N$, using union bounds, we conclude that
\[
\PP\left\{ \left| \frac{ \|A_T p_j-A_T p_k\|^2}{m\|p_j-p_k\|^2}-1\right|\le \epsilon\; \forall\, T\in T_{\epsilon,\beta}, j\ne k, j,k=1,\ldots,N\right\}\ge 1-\frac{3N(N-1)}{2}e^{-\alpha(\epsilon^2/4-\epsilon^3/6)m}>0,
\]
where we used the assumption of $m>\frac{\ln(3N(N-1)/2)}{\alpha (\epsilon^2/4-\epsilon^3/6)}$ in the last inequality. This proves that \eqref{JLlemma:robust} holds with probability at least $1-\frac{3N(N-1)}{2}e^{-\alpha(\epsilon^2/4-\epsilon^3/6)m}>0$.
\end{proof}

\begin{proof}[Proof of Corollary~\ref{cor:rip}]
We slightly modify the argument in \cite[Lemma~5.1]{BDDW:ca:2008}. Let $\Lambda \subseteq \{1,\ldots,n\}$ with $|\Lambda|=s$. Set
$
\R^\Lambda:=\{x\in \R^n \, : \, x\; \mbox{is supported inside}\; \Lambda\}
$
and
$S^\Lambda:=\{ x\in \R^\Lambda \; : \; \|x\|=1\}$.
It is well known that there exists a subset $Q_{\Lambda,\epsilon}\subset S^\Lambda$ such that $|Q_{\Lambda,\epsilon}|\le (24/\epsilon)^s$ and $S^\Lambda\subseteq \cup_{\zeta\in Q_{\Lambda,\epsilon}} \{ x\in \R^n \; : \; \|x-\zeta\|\le \epsilon/8\}$. By Theorem~\ref{thm:lowerbound:0}, with probability at least
$1-3 (24/\epsilon)^s e^{-\alpha(\epsilon^2/16-\epsilon^3/24)m}$, we have
\begin{equation}\label{eq:high}
\sqrt{1-\epsilon/2}\|v\| \le \frac{1}{\sqrt{m}}\|A_T v\|\le \sqrt{1+\epsilon/2},\qquad \forall\; T\in T_{\epsilon/2,\alpha}\;\; \mbox{and}\;\; v\in Q_{\Lambda,\epsilon}.
\end{equation}
We next consider the case where  $A$ satisfies (\ref{eq:high}).
Define $\lambda:=\sup\{\frac{1}{\sqrt{m}}\|A_Tx\| \; : \; x\in S^\Lambda, T\in T_{\epsilon/2,\alpha}\}$. For every $x\in S^\Lambda$, there exists $v_x\in Q_{\Lambda,\epsilon}$ such that $\|x-v_x\|\le \epsilon/8$ and hence,
\[
\frac{1}{\sqrt{m}}\|A_T x\|\le \frac{1}{\sqrt{m}}\|A_T v_x\|+\frac{1}{\sqrt{m}}\|A(x-v_x)\|\le \sqrt{1+\epsilon/2}+\lambda \|x-v_x\|\le \sqrt{1+\epsilon/2}+\lambda \epsilon/8.
\]
By the definition of $\lambda$, we must have $\lambda\le \sqrt{1+\epsilon/2}+\lambda\epsilon/8$, which implies that
\[
\lambda \le \tfrac{\sqrt{1+\epsilon/2}}{1-\epsilon/8}\le \sqrt{1+\epsilon}
\]
 for all $0<\epsilon<1$. Therefore, for all $x\in \R^\Lambda$ and $T\in T_{\epsilon/2,\alpha}$,
$\frac{1}{\sqrt{m}}\|A_T x\|\le \lambda \|x\|\le \sqrt{1+\epsilon}\|x\|$ and
\[
\frac{1}{\sqrt{m}}\|A_T x\|\ge \frac{1}{\sqrt{m}} \|A_T v_x\|-\frac{1}{\sqrt{m}}\|A_T(x-v_x)\|\ge \sqrt{1-\tfrac{\epsilon}{2}}-\lambda\tfrac{\epsilon}{8} \ge \sqrt{1-\tfrac{\epsilon}{2}}-\tfrac{\epsilon}{8}
\sqrt{1+\epsilon}
\ge \sqrt{1-\epsilon},
\]
where the last inequality holds for all $0\le \epsilon\le 1$.
Thus, with probability at least
$1-3 (24/\epsilon)^s e^{-\alpha(\epsilon^2/16-\epsilon^3/24)m}$,
\begin{equation}\label{rip:est:0}
(1-\epsilon)\|x\|^2\le \tfrac{1}{m} \|A_T x\|^2\le (1+\epsilon)\|x\|^2, \qquad \forall\, x\in \R^\Lambda, T\in T_{\epsilon/2,\alpha}.
\end{equation}
Note that there are total $\binom{n}{s}\le (en/s)^s$ such subsets $\Lambda$. Therefore, \eqref{rip:est:0} holds for every such subset $\Lambda$. By union bounds, \eqref{rip:robust} holds with probability at least $1-3 (\tfrac{24en}{\epsilon s})^s e^{-\alpha(\epsilon^2/16-\epsilon^3/24)m}>0$ by our assumption
$s \ln \frac{24 en}{\epsilon s}< \alpha(\epsilon^2/16-\epsilon^3/24)m-\ln3$.
\end{proof}

\begin{proof}[Proof of Corollary~\ref{cor:JL1}]
The condition $0<\alpha<\frac{\pi}{12}(1-\beta)^2 h_\beta$
guarantees that $0<\lcst<\infty$, while the condition $m\ge \frac{1}{1-\beta}$ guarantees $0<\ucst<\infty$ (if $m=\frac{1}{1-\beta}$, then $0\ln\frac{e}{0}$ is understood as $\lim_{x\to 0^+} x\ln \frac{e}{x}=0$.)

By the left-hand inequality in \eqref{est:mlcst} of Corollary~\ref{cor:alpha:mlcst}, for any $x_0\in \R^n$ with $\|x_0\|=1$, we have $\PP(\I_{[\lcst,\infty],\beta})\ge 1-e^{-\alpha m}$. By \eqref{est:mucst:0} in Corollary~\ref{cor:alpha:mucst} and $\beta-\frac{1}{m}\le \gamma\le \beta$, noting that $(1-x)\ln \frac{e}{1-x}$ is a decreasing function on $(0,1)$, we deduce that $\PP(\I_{[0,\ucst],\beta})\ge 1-e^{-\alpha m}$.
Consequently, we have
\[
\PP\left\{ \lcst \|p_j-p_k\|^2\le \tfrac{1}{m}\|A_T p_j-A_T p_k\|^2 \le \ucst \|p_j-p_k\|^2, \; \forall\, T\subseteq \{1,\ldots, m\}, |T^c|\le \beta m \right\} \ge 1-2 e^{-\alpha m},\quad \forall\, m\in \N
\]
for every $j,k=1,\ldots, N$.
Since there are $\binom{N}{2}=\frac{N(N-1)}{2}$ pairs $\{p_j,p_j\}$ with $j\ne k$, $j,k=1,\ldots,N$, we conclude that
\eqref{JL:beta} holds with probability at least $1-N(N-1) e^{-\alpha m}>0$ for all $m\in \N$ by $m>\frac{1}{\alpha}\ln \frac{1}{N(N-1)}$.
\end{proof}

\begin{proof}[Proof of Corollary~\ref{cor:rip1}]
We use the same notation as in the proof of Corollary~\ref{cor:rip}. Define $T_{\le \beta}:=\{ T\subseteq \{1,\ldots,m\}\; : \; |T^c|\le \beta m\}$.
By Corollaries~\ref{cor:alpha:mlcst} and~\ref{cor:alpha:mucst}, with probability at least $1-2e^{-\alpha m}$,
\begin{equation}\label{eq:masat}
\sqrt{\lcst(1-\epsilon/2)}\|v\| \le \frac{1}{\sqrt{m}}\|A_T v\|\le \sqrt{\ucst(1+\epsilon/2)},\qquad \forall\; T\in T_{\le \beta}.
\end{equation}
We next consider the case where $A$ satisfies (\ref{eq:masat}).
Define $\lambda:=\sup\{\frac{1}{\sqrt{m}}\|A_Tx\| \; : \; x\in S^\Lambda, T\in T_{\le \beta}\}$. For every $x\in S^\Lambda$, there exists $v_x\in Q_{\Lambda,\epsilon}$ such that $\|x-v_x\|\le \epsilon/8$ and hence,
\[
\frac{1}{\sqrt{m}}\|A_T x\|\le \frac{1}{\sqrt{m}}\|A_T v_x\|+\frac{1}{\sqrt{m}}\|A(x-v_x)\|\le \sqrt{\ucst(1+\epsilon/2)}+\lambda \|x-v_x\|\le \sqrt{\ucst(1+\epsilon/2)}+\lambda \epsilon/8.
\]
By the definition of $\lambda$, we must have $\lambda\le \sqrt{\ucst(1+\epsilon/2)}+\lambda\epsilon/8$, from which we have $\lambda \le \sqrt{\ucst(1+\epsilon/2)}/(1-\epsilon/8)\le \sqrt{\ucst(1+\epsilon)}$ for all $0<\epsilon<1$. Therefore, for all $x\in \R^\Lambda$ and $T\in T_{\le \ucst}$,
$\frac{1}{\sqrt{m}}\|A_T x\|\le \lambda \|x\|\le \sqrt{\ucst(1+\epsilon)}\|x\|$ and
% by $0<\lcst\le \ucst$,
%
\begin{align*}
\frac{1}{\sqrt{m}}\|A_T x\|
&\ge \frac{1}{\sqrt{m}} \|A_T v_x\|-\frac{1}{\sqrt{m}}\|A_T(x-v_x)\|\ge \sqrt{\lcst} \sqrt{1-\epsilon/2}-\lambda\epsilon/8\\
&\ge \sqrt{\lcst}\left(\sqrt{1-\epsilon/2}-
\sqrt{\ucst/\lcst}  \sqrt{1+\epsilon} (\epsilon/8) \right)
\ge \sqrt{\lcst}\left(\sqrt{1-\epsilon/2}-
\sqrt{1+\epsilon} (\epsilon/8) \right)
\ge \sqrt{\lcst}\sqrt{1-\epsilon},
\end{align*}
where we used the fact that $\ucst/\lcst\ge 1$ by $0<\lcst\le \ucst$.
Thus, with probability at least
$1-2 (24/\epsilon)^s e^{-\alpha m}$,
\begin{equation}\label{rip:est:2}
\lcst(1-\epsilon)\|x\|^2\le \tfrac{1}{m} \|A_T x\|^2\le \ucst(1+\epsilon)\|x\|^2, \qquad \forall\, x\in \R^\Lambda, T\in T_{\le \beta}.
\end{equation}
Note that there are total $\binom{n}{s}\le (en/s)^s$ such subsets $\Lambda$. Therefore, \eqref{rip:est:2} holds for every such subset $\Lambda$. Hence, \eqref{rip:robust} holds with probability at least $1-2 (\tfrac{24en}{\epsilon s})^s e^{-\alpha m}>0$ by $s\ln \frac{24 en}{\epsilon s}<\alpha m-\ln 2$.
\end{proof}


\begin{thebibliography}{99}

\bibitem{A:jcss:2003}
D.~Achlioptas, Database-friendly random projections: Johnson-Lindenstrauss with binary coins,
\emph{J. Computer Sys. Sci.}, \textbf{66} (2003), 671--687.

\bibitem{BM13} A. S. Bandeira and D. G. Mixon, {Near-optimal phase retrieval of sparse vectors}, Proc. SPIE 8858, {\em Wavelets and Sparsity XV, } 88581O (2013).

\bibitem{BDDW:ca:2008}
R.~Baraniuk, M.~Davenport, R.~DeVore, and M.~Wakin, A simple proof of the restricted isometry property for random matrices. \emph{Constr. Approx.} \textbf{28} (2008), 253--263.

\bibitem{CGLP} D. Chafa\"{\i}, O. Gu\`{e}don, G. Lecu\`{e}, A. Pajor,{\em Interactions between compressed
sensing random matrices and high dimensional geometry}, Panoramas et Synth\`{e}ses
[Panoramas and Syntheses], 37. Soc. Math. de France, Paris, 2012.

\bibitem{Lambert:1996}
R. M. Corless, G. H. Gonnet, D. E. G.  Hare,   D. J. Jeffrey,  D. E. Knuth, {On the Lambert W function},  \emph{Adv. Comput. Math.} (1996), 329--359.

\bibitem{JLSur}
S. Dirksen, {Dimensionality reduction with subgaussian matrices: a unified theory}, arXiv:1402.3973.

\bibitem{min} Y. Gordon, A. E. Litvak, C. Schutt, and E. Werner,
On the minimum of several random variables,
\emph{Proceed. Amer. Math. Soc.}, \textbf{134} (2006), 3665--3675.

\bibitem{KW:SMA:2011}
F. Krahmer and R. Ward, New and improved Johnson-Lindenstrauss embeddings via the restricted isometry property, \emph{SIAM J. Math. Anal.} \textbf{43} (2011), 1269--1281.

\bibitem{JL}
W. B. Johnson and J. Lindenstrauss, Extensions of Lipschitz mappings into a Hilbert space,  \emph{Contemp. Math.} \textbf{26} (1984), 189--206.

\bibitem{probability}
M. Ledoux, The concentration of measure phenomenon, American Mathematical Society, 2001.

\bibitem{Li}
X. Li, Compressed sensing and matrix completion with constant proportion of corruptions, \emph{Constr. Approx.} \textbf{37} (2013), 73--99.

\bibitem{SHB}
Y. Shen, B. Han and E. Braverman, Stable recovery of analysis based approaches, \emph{Appl. Comput. Harmon. Anal.}, http://dx.doi.org/10.101016/j.acha.2014.08.001

\bibitem{VX}
V. Voroninski and Z. Xu, A strong restricted isometry property, with an application to phaseless compressed sensing, arXiv:1404.3811.

\bibitem{CJL}
J. Vybiral, A variant of the Johnson-Lindenstrauss lemma for circulant matrices
\emph{J. Funct. Anal.} \textbf{260} (2011), 1096--1105.

\bibitem{W}
Y. Wang, Random matrices and erasure robust frames,  arXiv:1403.5969.

\bibitem{WX}
Y. Wang and Z. Xu, Phase retrieval for sparse signals, \emph{Appl. Comput. Harmon. Anal.}, \textbf{37} (2014), 531--544.

\bibitem{WM}
J. Wright and Y. Ma, Dense error correction via $l_1$-minimization, in: Proc. ICASSP, (2009), 3033--3036.


\end{thebibliography}
\end{document}